\newtheorem{notation}{Notation}
\newtheorem{lemma}{Lemma}
\newtheorem{fact}{Fact}
\newtheorem{remark}{Remark}
\newtheorem*{claim*}{Claim}
\newtheorem{theorem}{Theorem}
\newtheorem{example}{Example}
\newtheorem{corollary}{Corollary}
\newtheorem{assumption}{Assumption}
\newtheorem{proposition}{Proposition}
\newtheorem{definition}{Definition}
\newenvironment{propz}[1]
  {\innercustomthm}
  {\endinnercustomthm}
\newcommand{\citeapos}[1]{\citeauthor{#1}'s \citeyearpar{#1}}
\DeclareMathOperator{\co}{co}
\DeclareMathOperator{\gr}{gr}
\newcommand{\marg}{{\rm{marg}}}
\newcommand{\dd}{{\rm{d}}}
\newcommand{\ddd}{\ \dd}
\DeclareMathOperator{\supp}{supp}
\DeclareMathOperator{\ca}{ca}
\newcommand{\RP}{\mathcal I}
\newcommand{\RPP}{\RP(\prior)}
\newcommand{\ext}{{\rm{ext}}}
\newcommand{\DT}{{\Delta\Theta}}
\newcommand{\real}{\mathbb{R}}
\newcommand{\longsquiggly}{\xymatrix{{}\ar@{~>}[r]&{}}}
\newcommand{\rstrat}{\alpha}
\newcommand{\beliefr}{\pi}
\newcommand{\beliefm}{\beliefr}
\newcommand{\cstrat}{\psi}
\newcommand{\cstrats}{\Psi}
\newcommand{\us}{u_{S}}
\newcommand{\ur}{u_{R}}
\newcommand{\infpay}{s_i}
\newcommand{\condpay}{\phi}
\newcommand{\goodm}{g}
\newcommand{\badm}{b}
\newcommand{\poolp}{k}
\newcommand{\cred}{\chi}
\newcommand{\cstratr}{{\tilde\cstrat}}
\newcommand{\sstrat}{\sigma}
\newcommand{\cstratrs}{\Psi}
\newcommand{\sstrats}{\Sigma}
\newcommand{\prior}{{\mu_0}}
  \renewcommand\@seccntformat[1]{\csname the#1\endcsname.{\hskip.7em\relax}} 
\definecolor{orange}{HTML}{F4A261} 
\definecolor{blue}{HTML}{4D8EA8} 
\definecolor{green}{HTML}{2faf6d} 
\definecolor{red}{HTML}{E45C3A}
\definecolor{yellow}{HTML}{F9C74F}
\definecolor{purple}{HTML}{560bad}
\title{{\bf 
Perfect Bayesian Persuasion\footnote{Lipnowski and Ravid acknowledge support from the National Science Foundation (grant SES-1730168). Lipnowski acknowledges support from the Columbia Program for Economic Research. We would like to thank Laura Doval, Christoph Kuzmics, and Stephen Morris for helpful comments. Zikai Xu provided excellent research assistance.}
}}
\author{
\begin{minipage}{0.3\textwidth}\centering  
Elliot Lipnowski\footnote{\texttt{e.lipnowski@columbia.edu}, \texttt{dravid@uchicago.edu}, \texttt{dshishkin@ucsd.edu}} \\ \centering \it \small Columbia University
\end{minipage}                  
\begin{minipage}{0.3\textwidth}\centering 
Doron Ravid  \\ \centering \it \small University of Chicago 
\end{minipage} 
\begin{minipage}{0.3\textwidth}\centering 
Denis Shishkin
\\ \centering \it \small UC San Diego 
\end{minipage}                  
}
\newcommand{\argmax}{\mathrm{argmax}}
\date{\vspace{0.8cm} \today}
\begin{document}

\maketitle
\begin{abstract}

\noindent A sender commits to an experiment to persuade a receiver.  Accounting for the sender's experiment-choice incentives, and not presupposing a receiver tie-breaking rule when indifferent, we characterize when the sender's equilibrium payoff is unique and so coincides with her ``Bayesian persuasion'' value.  
A sufficient condition in finite models is that every action which is receiver-optimal at some belief is uniquely optimal at some other belief---a generic property. 
We similarly show the equilibrium sender payoff is typically unique in ordered models.
In an extension, we show uniqueness generates robustness to imperfect sender commitment.
\\   

\vspace{.2cm}

\end{abstract}
\newpage 

\begin{spacing}{1}
\onehalfspacing


\section{The Setup}

In recent years, many studies have explored variations on the Bayesian Persuasion model of \cite{Kamenica2011}, hereafter KG \citep[see][for review]{bergemann2019information,Kamenica2019}. In KG's model, a receiver (R, he) must choose an action, but a sender (S, she) controls R's available information about a payoff-relevant state. When analyzing this model and its variations, researchers usually assume S-favorable tie-breaking. Whereas this assumption is without loss when evaluating the highest payoff S can get across all information structures, making this assumption when using the model for prediction or design is tantamount to selecting S's favorite equilibrium. Consequently, it is unclear whether the model's conclusions remain valid under other equilibrium selection criteria. In this paper, we address this question by asking when tie-breaking assumptions are irrelevant. More specifically, we ask: when is S's equilibrium payoff unique?

To answer this question, we study KG's model without imposing S favorable tie breaking. Thus, the game begins with S publicly committing to a Blackwell experiment about the payoff-relevant state $\theta \in \Theta$, whose prior distribution is $\mu_0 \in \DT$. Formally, S chooses a measurable function $\cstrat\colon\Theta\to\Delta M$ for a given space $M$ of messages. Then, after observing both $\cstrat$ and a realized message $m\in M$, the receiver chooses an action $a$ from a set of feasible actions $A$. Each player $i\in\{S,R\}$ seeks to maximize the expectation of an objective $u_i(a,\theta)$.\footnote{We maintain the technical assumptions that both $A$ and $\Theta$ are nontrivial compact metrizable spaces; that the objectives $u_S,u_R\colon\, A\times\Theta\to\real$ are continuous; that $M$ is a Polish space; and either that $M$ is uncountable or that $\Theta$ is finite and $|M|\geq|\Theta|$. All topological spaces are viewed as measurable spaces with their Borel sigma-algebra.}
KG analyze this game using the S-favorite equilibrium concept. We diverge from this assumption by considering all equilibria of the game, and ask when all these equilibria give S the same payoff. 

We defer a formal definition of equilibrium to the following section, but a brief overview of KG's analysis is in order. They adopt a belief-based approach, casting S's optimization problem as one of directly choosing $p \in \Delta\DT$, the ex-ante distribution of R's posterior belief $\mu$ concerning the state. Because R is Bayesian (and S's experiment choice is made in ignorance of the state), it must be that $p$ belongs to $\RP(\mu_0)$, the set of belief distributions with barycenter equal to the prior; KG term this condition Bayes plausibility. We refer to elements of $\RP(\mu_0)$ as \textbf{information policies}. But what payoff does S derive from a given Bayes-plausible belief distribution? By R rationality, R will choose from his best responses $A^*_R(\mu) \subseteq A$ whenever her posterior belief is $\mu$.  S's expected payoff from such an action $a\in A^*_R(\mu)$ is then equal to $\int u_S(a,\cdot) \ddd\mu$. Given KG's focus on S-optimal equilibrium, they can assume without loss that R always breaks any indifferences in S's favor. Thus, KG can summarize S's payoff from inducing belief $\mu$ as $v(\mu)\coloneqq\max_{a\in A^*_R(\mu)}\int u_S(a,\cdot) \ddd\mu$. We call $v$ the \emph{value function}. Hence, S's best equilibrium value is given by $\hat v(\mu_0)=\max_{p\in\RP(\mu_0)} \int v \ddd p$.

But what happens if R may choose best responses that differ from those S would prefer? If, in the worst case, R always chooses S's least favorite of his best responses, then from inducing R belief $\mu$, S can only expect a payoff of $w(\mu)\coloneqq\min_{a\in A^*_R(\mu)}\int u_S(a,\cdot) \ddd\mu$. Accordingly, S would have a profitable deviation if her payoff were ever strictly below $$\hat w(\mu_0)=\sup_{p\in\RP(\mu_0)} \int w \ddd p.$$ As \cite{wu2020essays} has shown, this payoff lower bound is the only additional constraint imposed by S's experiment-choice incentives. Using this result, we go on to fully characterize when S has a unique equilibrium payoff, and to provide meaningful sufficient conditions for the same.

One can see the issues that arise without S-favorable tie-breaking by modifying KG's judge example. In their example, a binary state in $\Theta = \{0,1\}$ summarizes whether a defendant is innocent ($\theta = 0$) or guilty ($\theta = 1$). S is a prosecutor, who decides what information to generate about this state, whereas R is a judge. In KG's example, the judge makes a binary decision as to whether to acquit ($a=0$) or convict ($a=1$) the defendant, and S always wants the defendant convicted, whereas R wants to make the ``just'' choice, getting a utility of $1$ if her choice matches the state. Suppose we modify the example so that R has two convict actions, $a=1$ and $a=-1$, which differ in the defendant's sentence. For example, in a murder trial, $a=1$ may represent sentencing the defendant to a life in prison, whereas $a=-1$ represents giving the defendant the death penalty. Suppose the (American) judge views both sentences as equivalent to each other, but the prosecutor views the death penalty as immoral, and so prefers that the judge acquits the defendant rather than give him the death penalty. For concreteness, suppose S's preferences are given by $u_S(a,\theta) = a$, whereas R gets a payoff of $1$ if her sentencing choice matches the state (i.e., if $|a|=\theta$), and $0$ otherwise. For our example, suppose the prior probability the defendant is guilty is $\mu_0=0.25$.

\def\eps{0.025}
\def\thresh{0.5}
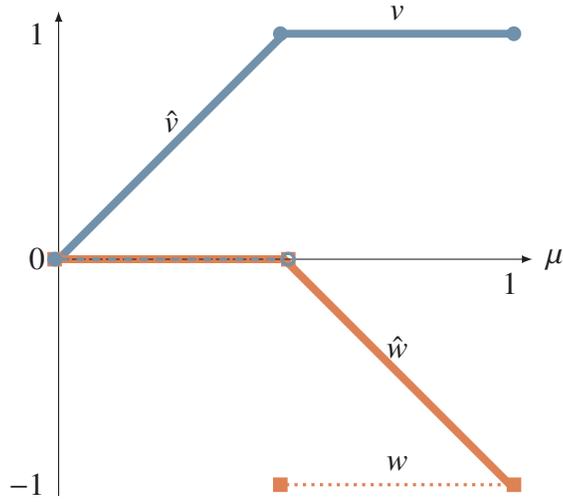
\begin{figure}[!htb]
        \centering
        \begin{tikzpicture}[xscale=6,yscale=3,font=\normalsize, inner sep = 5pt]
        \draw[blue, line width=3, opacity=0.4] (0,0) -- (\thresh,1) -- (1,1);
        \draw[red, line width=3, opacity=0.4] (0,0) -- (\thresh,0) -- (1,-1);
          \draw[latex-latex] (1.05,0) node[right] {$\mu$} -- (1,0) node[below] {$1$} -- (0,0) node[left] {$0$} -- (0,1+0.1);
          \draw (0,-1.05) -- (0,0);
          \node[left] at (0,-1) {$-1$};
          \node[left] at (0,1) {$1$};

          \draw[blue,dashed,very thick,{Circle[scale=0.9]}-{Circle[scale=0.9]}] ({\thresh-\eps},1) -- ({1+\eps},1);

          \draw[red,dotted,very thick,{Square[scale=1]}-{Square[scale=1]}] ({\thresh-\eps},-1) -- ({1+\eps},-1);
          
          \draw[red,dotted,very thick,{Square[scale=1]}-{Square[scale=1,open]}] ({0-\eps},0) -- ({\thresh+\eps},0);
          \draw[blue,dashed,very thick,{Circle[scale=0.9]}-{Circle[scale=0.9,open]}] ({0-\eps},0) -- ({\thresh+\eps},0);
          
          \node[above] at ({0.5*\thresh},0.5) {$\hat{v}$};
          \node[above] at ({0.5*\thresh+0.5},1) {$v$};
          \node[above] at ({0.5*\thresh+0.5},-0.5) {$\hat{w}$};
          \node[above] at ({0.5*\thresh+0.5},-1) {$w$};
        \end{tikzpicture}
    \caption{The functions $v, w, \hat{v}, \hat{w}$ in the modified KG's judge example.}
    \label{fig:judge-example-penalty}
\end{figure}

Figure \ref{fig:judge-example-penalty} depicts the sender's upper and lower value functions, $v$ and $w$, as well as their ``concavifications'', $\hat{v}$ and $\hat{w}$, for this example. As usual, the figure summarizes R's belief by the probability it assigns to $\theta = 1$. Notice that $\hat{v} \neq \hat{w}$, and so S's value would, in general, depend on the tie-breaking rule. In fact, below we explain how different tie-breaking rules lead to radically different conclusions about S's equilibrium value and the information she chooses to release to R. 

Suppose first that R always breaks ties in S's favor. As KG explain, in this case S's expected payoff conditional on R having a belief $\mu$ is equal to $v(\mu)$, and so S's payoff from any given information policy $p$ is given by $\int v \ddd p$. S's value from the best information policy under S-favorable equilibrium selection is then given by $\hat{v}(\mu_0)$. In this example, S's equilibrium payoff under S-favorable tie-breaking is $\hat{v}(\mu_0)=0.5$. The equilibrium information policy is then $p^*=0.5 \delta_0 + 0.5\delta_{0.5}$. This is the solution prescribed by the majority of the information design literature. Note this payoff is strictly higher than $0$, which is  the utility S would get if she revealed no information to R. 

What if R were to break ties \emph{against} S? In this case, S's payoff from inducing a belief of $\mu$ would be given by $w(\mu)$, and so S's payoffs from $p^*$ would be equal to 
\[
0.5 w(\delta_0) + 0.5 w(\delta_{0.5}) = -0.5.
\]
Notice this payoff is strictly below S's no information payoff, which equals $0$ under both S-favorable and S-adverserial tie breaking. In fact, one can show that, under S-adversarial tie-breaking, no information gives S her best equilibrium payoff. By contrast, if R breaks ties against S, $p^*$ gives S the lowest payoff S can attain under \emph{any} information policy, subject to R's incentive constraints.

As the above example demonstrates, multiplicity in S's equilibrium value can result in predictions that are highly sensitive to R's tie-breaking rule. Our current paper provides tools for identifying situations in which such sensitivity is not present by characterizing the environments in which S's equilibrium value is unique. 

We also apply our results to a model in which there are doubts about S's ability to commit to her experiment. Such doubts make the equilibrium selection question salient, since they suggest S may lack the bargaining power required to dictate which equilibrium is played. Section~\ref{sec:pwi} introduces a model in which S's commitment power is limited. Specifically, we add a probability that S can covertly manipulate the experiment's results ex-post. We characterize S's equilibrium payoff set as this probability vanishes, and apply our uniqueness results to study robustness in that setting.

\paragraph{Related literature}
We contribute to the Bayesian persuasion literature \citep{Aumann1995,Kamenica2011,Kamenica2019}, which studies sender-receiver games in which a sender commits to an information-transmission strategy. Our main goal is to understand when KG's sender payoff characterization still applies---taking her experiment-choice incentives into account, and allowing the receiver to choose actions in a manner unfavorable to the sender when he is indifferent. Our extension to the case in which the sender's commitment is imperfect extends the analysis of \cite{LRS1} in the same manner.\footnote{
See also \cite{Min:2016wv}, which develops a generalization of the limited-commitment model and shows some credibility is better than no credibility in a leading example; and  \cite{Frechette:2017ug}, which studies communication outcomes in a laboratory experiment.}

Several papers introduce robustness considerations into information design problems \citep[e.g.,][]{kosterina2022persuasion,dworczak2022preparing}.
Within this literature, most relevant are papers that study optimal information provision in multi-agent settings under planner-worst equilibrium selection \citep[e.g.][]{mathevet2020information,moriya2020asymmetric,oyama2020generalized,ziegler2020adversarial,halac2022addressing,morris2022implementation,li2023global,inostroza2023adversarial}.
Our work provides conditions under which such a selection rule is irrelevant for the single-agent case.

Our work is related to multiple distinct strands of the Bayesian persuasion literature that explicitly account for sender incentives in choosing an experiment. The first is the literature on competition in persuasion, in which multiple senders design flexible information either simultaneously \citep{gentzkow2016competition,gentzkow2017bayesian,au2020competitive,ravindran2020competing} or sequentially \citep{li2018sequential,wu2020essays} and must make individually rational experiment choices in equilibrium. The most related paper from this set is \cite{wu2020essays}, which essentially proves Proposition~\ref{perprior}.%
\footnote{A result complementary to our Proposition~\ref{generic} was proven by 
\cite{li2018sequential}. They study persuasion by multiple senders moving sequentially, assuming the receiver breaks ties in favor of the sender who moves last. Specialized to the single sender case, their Proposition 3 implies that, \emph{assuming} sender-favorable tie-breaking, all equilibria result in the same state-contingent action distribution. Combined with our Proposition~\ref{generic}, \citeapos{li2018sequential} result implies that in generic finite environments, one gets behavioral uniqueness (in addition to payoff uniqueness).} 
The second strand studies how experiment constraints shape chosen information \citep[e.g.,][]{ichihashi2018limiting,PerezRichet:2017va}, whereas the third strand studies signaling and informed principal problems for information design settings in which an experiment choice can reveal private information \citep[e.g.,][]{perez2014interim, Hedlund:2017ie,Alonso:2018he,koessler2021information}.\footnote{In a sense, the literature on verifiable disclosure can be seen as a combination of the second and third strand, with sender incentives being a key object of study and verifiable information being limited in a type-dependent manner.} Finally, beyond the Bayesian persuasion literature, work on sequential mechanism design under limited commitment \citep[e.g.,][]{Skreta2006,doval2020mechanism} accounts for a principal's incentives while her future beliefs play a prominent role in the analysis.

\section{The Equilibrium Payoff Set}\label{sec:fullset}

We now formally define an equilibrium concept for the persuasion game. Note two features of the definition. First, while R must respond optimally to his belief, we make no direct assumption concerning which best response he chooses when indifferent.\footnote{Still, as we shall see, it will often be the case that \emph{mutual} best response requires that R break indifferences in S's favor on the path of play, just as a recipient of a zero offer accepts the offer in the unique subgame-perfect equilibrium of the ultimatum game.} Second we explicitly include an optimality condition for S at the experiment-choice stage. Note that the latter condition would have no bite under S-favorable tie-breaking, and so is rarely included in the literature. 

\begin{definition}\label{def:eqm}
Let $\cstratrs$ denote the set of all measurable functions $\cstratr\colon\Theta\to\Delta M$ (a.k.a. experiments).  
A sender strategy is an experiment $\cstrat\in\cstrats$; 
a receiver strategy is a function $\rstrat\colon M \times \cstratrs\to\Delta A$ that is measurable in its first argument;
and a receiver belief map is a function $\beliefm\colon M \times \cstratrs\to\DT$ that is measurable in its first argument.  A \textbf{(perfect Bayesian) equilibrium} is a triple of such maps $\langle\cstrat,\rstrat,\beliefm\rangle$ such that \begin{enumerate}
\item The sender's choice satisfies 
$$\cstrat
\in \argmax_{\cstratr\in\cstratrs} \int_\Theta \int_M \int_A u_S(a,\theta) \ddd\rstrat(a|\cstratr,m) \ddd\cstratr(m|\theta)\ddd\mu_0(\theta);$$
\item Every $\cstratr\in\cstratrs$ and $m\in M$ have $$\rstrat\left(\argmax_{a\in A} \int_\Theta u_R(a,\theta) \ddd\beliefm(\theta|m,\cstratr) \ \middle|\  m,\cstratr \right)=1;$$
\item Every $\cstratr\in\cstratrs$, Borel $\hat M\subseteq M$, and Borel $\hat\Theta \subseteq \Theta$ have $$\int_\Theta \int_{\hat M} \beliefm(\hat\Theta|m,\cstratr) \ddd\cstratr(m|\theta) \ddd\mu_0(\theta) = \int_{\hat\Theta} \cstratr(\hat M|\theta) \ddd\mu_0(\theta).$$
\end{enumerate}
In such a case, we say the induced \textbf{equilibrium sender payoff} is
$$
\int_\Theta \int_M \int_A u_S(a,\theta) \ddd\rstrat\left(a\middle|m,\cstrat
\right) \ddd\cstrat(m|
\theta)\ddd\mu_0(\theta)
.$$
\end{definition}

The interpretation is as follows.  First, S publicly chooses an experiment $\cstratr\in\cstratrs$.
\footnote{One could easily extend the model to allow S to mix over experiment choice.  Doing so would entail added notational burden but would have no effect on the resulting S payoff set because the experiment choice is public, not informed by private information, and not simultaneous to any other decisions.
} The experiment then produces a message $m\in M$ that R observes.  Then, R updates his beliefs according to the message and the chosen experiment, and chooses an action $a\in A$.  We require that S 
only choose experiments that maximize her expected payoffs, that R (having seen the realized experiment and message) only choose actions that maximize his expected payoffs with respect to his belief about the payoff state, and that R's beliefs conform to Bayesian updating.\footnote{Moreover, we assume that S cannot signal what she does not know. 
Indeed, our Bayesian condition implies that every $\cstratr\in\cstratrs$ and Borel $\hat\Theta \subseteq \Theta$ have $\int_\Theta \int_{M} \beliefm(\hat\Theta|m,\cstratr) \ddd\cstratr(m|\theta) \ddd\mu_0(\theta) = \mu_0(\hat\Theta)$, 
so that the chosen experiment alone does not cause belief updating by R about the payoff state.}

In what follows, we document the set of attainable equilibrium S payoffs, with a particular focus on understanding when it is unique. 

\begin{remark}
Although our focus is on equilibrium S payoffs rather than behavior, our results have natural implications for behavior as well. In particular, when S's equilibrium payoff is unique, our results imply that R breaks indifferences in S's favor with probability 1 on path in every equilibrium. Hence, in this case, the results of KG \cite[and many subsequent papers surveyed in][]{Kamenica2019} are robust to allowing arbitrary tie-breaking for R and accounting for S's experiment-choice incentives.
\end{remark}

\subsection{Characterizing equilibrium payoffs}

We begin by stating a characterization of the equilibrium S payoff set as a function of the parameters of our game.  This set is a compact interval, with highest value equal to KG's commitment solution, and lowest value equal to the supremum value S can guarantee when R breaks her indifferences adversarially.  In the special case in which the state space is finite, this result is exactly Proposition 1 from \cite{wu2020essays}. Although no substantive new arguments are required for the general case, we include a proof for the sake of completeness.

\begin{propz}{0}[Payoff set \citep{wu2020essays}]\label{perprior}
The set of equilibrium S payoffs is $[\hat w(\mu_0), \hat v(\mu_0)]$.
\end{propz}

Necessity is essentially immediate, and the proof of sufficiency is constructive.  By degrading information from an S-optimal (under favorable tie-breaking) experiment and allowing for R to mix among optimal choices in the degraded experiment, one can find an experiment for S to choose and R best response to target any payoff in the given interval. Then, having R break indifference adversarially to S following off-path experiment choices ensures that this experiment choice is indeed optimal for S.

\begin{remark}\label{rem: general value function results}
It is apparent that Proposition~\ref{perprior} depends only on the value correspondence $V=[w,v]$, and moreover (as is clear from our proof) the only substantive property required of the environment is that the attainable S payoffs from R responding optimally to a given belief be convex.\footnote{Our analysis also uses the fact that $V$ is nonempty-compact-valued and upper hemicontinuous, and that the set of optimal R choices is a weakly measurable correspondence of his belief.} In addition to making Proposition~\ref{perprior} more tractable to apply, this feature also expands its applicability beyond the basic model we have considered.  For example, the proposition can be applied to settings in which a receiver is subject to independent private payoff shocks.  Additionally, the proposition applies to public persuasion of a set of agents who play a game, so long as the set of induced payoffs for the sender is convex for every public belief.  The latter condition holds, for instance, if the receivers observe a rich public randomization device after the experiment choice (but before their gameplay).
\end{remark}

\begin{remark}
In light of Proposition~\ref{perprior}, S has a unique equilibrium payoff if and only if $v$ and $w$ have the same ``concavification,'' and so too does any function $z$ with $w\leq z\leq v$. We can therefore frame our analysis as follows: When is the concavification of a sender's value function (evaluated at the prior) invariant to R's tie-breaking rule?
\end{remark}

\section{Equilibrium Payoff Uniqueness}\label{sec:unique}

In this section we ask, when does S have a unique equilibrium payoff? Whenever she does, the traditional analysis that focuses on S-optimal equilibrium (and so assumes S-favorable tie-breaking by R) is essentially without loss.

As a starting observation, because uniqueness follows directly from Proposition~\ref{perprior} whenever $v=w$, a sufficient condition for S to have a unique equilibrium payoff is immediate.
\begin{corollary}[No relevant ties]
S has a unique equilibrium payoff if, at any belief, S is indifferent between all of R's best responses.
\end{corollary}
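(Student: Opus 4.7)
The plan is to observe that this corollary is essentially a direct consequence of Proposition~\ref{perprior} together with the hypothesis. The strategy has three short steps, and I do not expect any substantive obstacle; the only thing to be careful about is unpacking the definitions of $v$, $w$, and their concavifications correctly.

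First, I would translate the hypothesis into a pointwise equality of the upper and lower value functions. The assumption says that for every $\mu\in\DT$ and every pair $a,a'\in A^*_R(\mu)$, we have $\int u_S(a,\cdot)\ddd\mu=\int u_S(a',\cdot)\ddd\mu$. Since $v(\mu)=\max_{a\in A^*_R(\mu)}\int u_S(a,\cdot)\ddd\mu$ and $w(\mu)=\min_{a\in A^*_R(\mu)}\int u_S(a,\cdot)\ddd\mu$ are computed from the same constant-on-$A^*_R(\mu)$ integrand, they coincide at every $\mu$. Hence $v\equiv w$ on $\DT$.

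Second, I would conclude that their concavifications agree at the prior. Since $v=w$ as functions on $\DT$, for any information policy $p\in\RPP$ we have $\int v\ddd p=\int w\ddd p$, and therefore
\[
\hat v(\mu_0)=\sup_{p\in\RPP}\int v\ddd p=\sup_{p\in\RPP}\int w\ddd p=\hat w(\mu_0).
\]
Finally, Proposition~\ref{perprior} identifies the set of equilibrium S payoffs with the interval $[\hat w(\mu_0),\hat v(\mu_0)]$; by the previous display this interval collapses to the single point $\hat v(\mu_0)$, which gives payoff uniqueness. No further argument is needed.
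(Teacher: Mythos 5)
Your proposal is correct and matches the paper's argument exactly: the hypothesis forces $v=w$ pointwise, hence $\hat v(\mu_0)=\hat w(\mu_0)$, and Proposition~\ref{perprior} collapses the equilibrium payoff interval to a point. No gaps.
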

Although restrictive, the above condition nevertheless captures many cases of interest. For example, if the action space is a convex subset of some linear space with R's payoff being strictly concave in his action \citep[e.g.,][]{Crawford1982,Chakraborty2010}, then he has a unique best response to every belief, and so the corollary applies.

The following result gives an alternative sufficient condition for S to attain her Bayesian persuasion value in all equilibria.  It says such uniqueness holds if information can serve as a stand-in for favorable tie-breaking at all relevant posterior beliefs. To state the result, the following definition is useful.

\begin{definition}\label{ubr} 
Say a set $D\subseteq\DT$ is \textbf{persuasion sufficient} if 
\[
\hat v(\mu_0)=
\sup \left\{\int v\ddd p \colon\, p\in\RP(\mu_0) \ \text{s.t.} \ p(D')=1\text{ for some Borel } D'\subseteq D \right\}.
\]
\end{definition}
In words, a set of beliefs $D$ is persuasion sufficient if the KG payoff can be approximated with policies supported within $D$ under S-favorable tie-breaking.

As the following proposition says, if (on the relevant set of beliefs) information can be used to replace favorable selection, then S guarantees her KG payoff.

\begin{proposition}[Information as selection]\label{infoselect}
S has a unique equilibrium payoff if $\hat w|_D\geq v|_D$ for some persuasion-sufficient $D\subseteq\DT$.\footnote{For a function $f\colon\Delta \Theta \rightarrow \mathbb{R}$, we let $f|_{D}$ be the restriction of $f$ to $D$; that is, the function from $D$ to $\mathbb{R}$ defined via $\mu \mapsto f(\mu)$.}
\end{proposition}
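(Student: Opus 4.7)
The plan is to verify $\hat{w}(\prior) = \hat{v}(\prior)$, which by Proposition~\ref{perprior} is equivalent to payoff uniqueness at $\prior$. Since $w \leq v$ trivially yields $\hat{w}(\prior) \leq \hat{v}(\prior)$, all the content lies in the reverse bound $\hat{v}(\prior) \leq \hat{w}(\prior)$.

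Fix $\varepsilon > 0$. By persuasion sufficiency of $D$, I would first produce an information policy $p \in \RPP$ supported on some Borel $D' \subseteq D$ with $\int v \ddd p \geq \hat{v}(\prior) - \varepsilon$. Because $\hat{w}|_D \geq v|_D$ and $\hat{w}$ is by definition the pointwise supremum $\hat{w}(\mu) = \sup_{q \in \RP(\mu)} \int w \ddd q$, each $\mu \in D'$ admits some $q_\mu \in \RP(\mu)$ with $\int w \ddd q_\mu \geq v(\mu) - \varepsilon$. The key idea is to compound this family of local splittings with $p$ into one global splitting of $\prior$: setting
\[
\bar{q}(B) \;\coloneqq\; \int q_\mu(B) \ddd p(\mu) \qquad \text{for each Borel } B \subseteq \DT
\]
yields a Bayes-plausible $\bar{q} \in \RPP$, since iterated expectations preserve barycenters (each $q_\mu$ has barycenter $\mu$, and $p$ has barycenter $\prior$).

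By Fubini and construction,
\[
\int w \ddd \bar{q} \;=\; \int_{D'} \left( \int w \ddd q_\mu \right) \ddd p(\mu) \;\geq\; \int v \ddd p - \varepsilon \;\geq\; \hat{v}(\prior) - 2\varepsilon,
\]
whence $\hat{w}(\prior) \geq \int w \ddd \bar{q} \geq \hat{v}(\prior) - 2\varepsilon$. Letting $\varepsilon \downarrow 0$ completes the argument.

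The chief technical obstacle is constructing $\bar{q}$: I need the assignment $\mu \mapsto q_\mu$ to be measurable for the compounded probability measure to be well defined and for the Fubini step to apply. I would handle this by invoking a standard measurable-selection theorem (e.g., Kuratowski--Ryll-Nardzewski) applied to the correspondence $\mu \mapsto \{q \in \RP(\mu) : \int w \ddd q \geq v(\mu) - \varepsilon\}$ on $D'$. This step relies on standard facts in the compact metrizable environment: $v$ is upper semicontinuous and $w$ is lower semicontinuous (since $u_S$ is continuous and $A^{*}_R$ is an upper hemicontinuous, nonempty-compact-valued correspondence of $\mu$), and $\mu \mapsto \RP(\mu)$ is continuous in the weak topology with closed graph, so the target correspondence has nonempty values and a measurable graph.
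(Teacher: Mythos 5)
Your argument is correct and is essentially the paper's: the paper factors your compounding step into Lemma~\ref{lem: hat w is idempotent} (which shows $\hat w(\prior)\geq\int\hat w\ddd\bar p$ for every $\bar p\in\RPP$ via exactly the measurable-selection-plus-compounding construction you describe, using the measurable-graph selection theorem rather than Kuratowski--Ryll-Nardzewski, since the relevant correspondence need not be closed-valued), after which the proposition follows from the chain $\hat v(\prior)\leq\sup_p\int\hat w\ddd p\leq\hat w(\prior)$. Your inlined version reaches the same conclusion by the same route.
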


To prove the proposition, we begin with an information policy supported on $D$ that nearly attains the KG payoff under S-favorable selection. For each realized posterior $\mu\in D$, the sender then provides additional information---further splitting the belief $\mu$ into some belief distribution centered on $\mu$---to ensure a payoff of nearly $v(\mu)$ even under S-adversarial selection. That $\hat w\geq v$ on $D$ ensures one can find such a splitting for each $\mu$, and a standard measurable selection result says one can do so measurably.

Following directly from the above proposition, the next corollary exactly characterizes when equilibrium S payoff uniqueness holds independent of the prior: such uniqueness is equivalent to information always replicating favorable tie-breaking.

\begin{corollary}[Global uniqueness]\label{allprior}
S has a unique equilibrium payoff for every prior (holding other parameters fixed) if and only if $\hat w\geq v$.
\end{corollary}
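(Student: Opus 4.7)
The plan is to prove the two directions separately, each as a direct consequence of results already established: sufficiency via Proposition~\ref{infoselect}, necessity via Proposition~\ref{perprior}.

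For the sufficiency direction, I would assume $\hat w \geq v$ on all of $\DT$, fix an arbitrary prior $\prior$, and take $D = \DT$. Because every information policy $p \in \RPP$ trivially satisfies $p(\DT) = 1$, this choice makes $D$ persuasion sufficient: the defining supremum in Definition~\ref{ubr} is then over all Bayes-plausible policies, which by definition of $\hat v$ equals $\hat v(\prior)$ exactly. Under the global hypothesis we have $\hat w|_D \geq v|_D$, so Proposition~\ref{infoselect} immediately delivers uniqueness of S's equilibrium payoff at $\prior$. Since $\prior$ was arbitrary, uniqueness holds for every prior.

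For the necessity direction, I would argue by contrapositive. Suppose $\hat w(\mu^*) < v(\mu^*)$ for some $\mu^* \in \DT$, and consider the game with prior $\prior = \mu^*$. The degenerate distribution $\delta_{\mu^*}$ lies in $\RP(\mu^*)$, so directly from the definition $\hat v(\mu^*) = \sup_{p\in\RP(\mu^*)} \int v \ddd p \geq \int v \ddd \delta_{\mu^*} = v(\mu^*)$. Chaining inequalities gives $\hat w(\mu^*) < v(\mu^*) \leq \hat v(\mu^*)$, so by Proposition~\ref{perprior} the equilibrium sender payoff set at this prior is the non-degenerate interval $[\hat w(\mu^*), \hat v(\mu^*)]$, and uniqueness fails.

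Since both implications are clean consequences of previously established characterizations, there is no substantive obstacle; the only subtleties to double-check are that $D = \DT$ really qualifies as persuasion sufficient (ensuring Proposition~\ref{infoselect} is applicable in the sufficiency argument) and that the degenerate splitting $p = \delta_{\mu^*}$ really forces $\hat v(\mu^*) \geq v(\mu^*)$ (ensuring the interval from Proposition~\ref{perprior} is non-degenerate in the necessity argument).
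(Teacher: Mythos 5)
Your proposal is correct and follows essentially the same route as the paper: sufficiency via Proposition~\ref{infoselect} applied with the vacuously persuasion-sufficient set $D=\DT$, and necessity from Proposition~\ref{perprior} together with the elementary observation that $\hat v \geq v$ (which the paper states directly rather than via your contrapositive phrasing, but the logic is identical). Both subtleties you flag check out, so nothing further is needed.
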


\subsection{Sufficient conditions for uniqueness}

The above conditions for payoff uniqueness were expressed in terms of the derived objects $v$, $w$, $\hat v$, and $\hat w$. These objects can be difficult to calculate. In this section, we develop alternative sufficient conditions that ensure S has a unique equilibrium payoff.

Let $A^U\subseteq A$ denote the set of {potentially unique best response actions}, that is, $a\in A$ such that some $\mu\in\Delta[\supp\prior]$ has $A^*_R(\mu)=\{a\}$. For any belief $\bar\mu\in\DT$, let $A^U(\bar\mu)$ denote the set of {potentially unique best response actions at $\bar\mu$}, that is, $a\in A$ such that some $\mu\in\DT$ has both $A^*_R(\mu)=\{a\}$ and $\varepsilon\mu\leq\bar\mu$ for some $\varepsilon>0$.\footnote{When we refer to inequalities for measures, we interpret the inequality pointwise. For instance, $\varepsilon\mu\leq\bar\mu$ means $\varepsilon\mu(\hat\Theta)\leq\bar\mu(\hat\Theta)$ for every measurable $\hat\Theta\subseteq\Theta$.}
\begin{definition}\label{ubr} 
Given $\mu\in\DT$, say \textbf{the potentially unique best response (PUBR)  property holds at $\mu$} if\footnote{Note the left-hand side is just $v(\mu)$.} $$\max_{a\in A^*_R(\mu)}\int u_S(a,\cdot)\ddd\mu =\sup_{a\in A^*_R(\mu)\cap A^U}\int u_S(a,\cdot)\ddd\mu;$$
and say \textbf{the strong PUBR property holds at $\mu$} if $$\max_{a\in A^*_R(\mu)}\int u_S(a,\cdot)\ddd\mu =\sup_{a\in A^*_R(\mu)\cap A^U(\mu)}\int u_S(a,\cdot)\ddd\mu.$$
Given a set $D\subseteq\DT$, say \textbf{the (strong) PUBR property holds on $D$} if it holds at every $\mu\in D$.
\end{definition}
The above property says that S can obtain a value arbitrarily close to her payoff under favorable-selection with the receiver only using actions that are a unique best response to some belief---and the strong version further requires that the latter belief be boundedly absolutely continuous with respect to the relevant posterior belief.\footnote{Note, the PUBR property is strictly stronger than the requirement that R has no duplicate actions. The latter condition is insufficient for ensuring uniqueness, as witnessed by $A=\Theta=\{0,1\}$, $\mu_0=\tfrac12$, $u_S(a,\theta)= - a$, and $u_R(a,\theta)=a\theta$.
The condition is reminiscent of the refined best reply correspondence that \cite{balkenborg2013refined} and \cite{balkenborg2015refined} study in finite games of complete information.
}

The next result shows various versions of the PUBR property are sufficient to guarantee equilibrium selection. 

\begin{theorem}[The PUBR theorem\footnote{Cf. \cite{milne1926winnie}.}]\label{ubrsuff}
S has a unique equilibrium payoff if either of the following two conditions holds for some persuasion-sufficient $D\subseteq\DT$: \begin{enumerate}[(i)]
    \item The strong PUBR property holds on $D$.
    \item The PUBR property holds on $D$, and either $A$ or $\Theta$ is finite.
\end{enumerate}
\end{theorem}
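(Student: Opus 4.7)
The strategy is to apply Proposition~\ref{infoselect} (Information as selection) in both parts, by showing $\hat w \geq v$ on an appropriate persuasion-sufficient set. The two parts differ in how the inequality is obtained.

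For part~(i), I would argue pointwise: strong PUBR at $\bar\mu\in D$ implies $\hat w(\bar\mu)\geq v(\bar\mu)$. Fix $a\in A^*_R(\bar\mu)\cap A^U(\bar\mu)$ witnessed by some $\mu^*$ with $A^*_R(\mu^*)=\{a\}$ and $\varepsilon\mu^*\leq\bar\mu$. The key construction slides along the segment from $\mu^*$ to $\bar\mu$: for $\lambda\in(0,1)$, set $\mu^\lambda=\lambda\bar\mu+(1-\lambda)\mu^*$. Convex combination of the weak optimality of $a$ at $\bar\mu$ with its strict optimality at $\mu^*$ gives $A^*_R(\mu^\lambda)=\{a\}$, whence $w(\mu^\lambda)=\int u_S(a,\cdot)\ddd\mu^\lambda$. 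Letting $\varepsilon_\lambda$ be the largest scalar with $\varepsilon_\lambda\mu^\lambda\leq\bar\mu$, the support containment $\supp\mu^*\subseteq\supp\bar\mu$ (forced by $\varepsilon\mu^*\leq\bar\mu$) and a direct ratio computation give $\varepsilon_\lambda\to 1$ as $\lambda\to 1$. The Bayes-plausible split $\bar\mu=\varepsilon_\lambda\mu^\lambda+(1-\varepsilon_\lambda)\tilde\mu_\lambda$ then yields
\[
\hat w(\bar\mu)\ \geq\ \varepsilon_\lambda\int u_S(a,\cdot)\ddd\mu^\lambda+(1-\varepsilon_\lambda)w(\tilde\mu_\lambda).
\]
Since $u_S$ is bounded on the compact $A\times\Theta$, so is $w$; letting $\lambda\to 1$ gives $\hat w(\bar\mu)\geq\int u_S(a,\cdot)\ddd\bar\mu$. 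Optimizing over $a$ and invoking strong PUBR closes the argument.

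For part~(ii), the above fails when $a\in A^*_R(\bar\mu)\cap A^U\setminus A^U(\bar\mu)$: no witness satisfies $\varepsilon\mu^*\leq\bar\mu$, and $\hat w<v$ can occur at such $\bar\mu$. My plan is instead to perturb the entire near-optimal policy globally. Take $p\in\RP(\mu_0)$ supported on $D$ with $\int v\ddd p\geq\hat v(\mu_0)-\varepsilon$. Using PUBR, measurably select $a_\nu\in A^*_R(\nu)\cap A^U$ approximating $v(\nu)$ together with a witness $\mu^+_{a_\nu}$ for each $\nu\in\supp p$. Push each $\nu$ to $\nu^\eta=(1-\eta)\nu+\eta\mu^+_{a_\nu}$; by the convexity argument $a_\nu$ is uniquely optimal at $\nu^\eta$, so $w(\nu^\eta)=\int u_S(a_\nu,\cdot)\ddd\nu^\eta$, which tends to $\int u_S(a_\nu,\cdot)\ddd\nu$ as $\eta\to 0$. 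The shifted distribution's barycenter is $(1-\eta)\mu_0+\eta\int\mu^+_{a_\nu}\ddd p(\nu)$; rebalance back to $\mu_0$ by mixing in a single corrective atom with weight of order $\eta$, whose bounded contribution to $\int w$ vanishes. Sending $\eta,\varepsilon\to 0$ yields $\hat w(\mu_0)\geq\hat v(\mu_0)$, and Proposition~\ref{perprior} gives uniqueness.

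The main obstacle lies in part~(ii), on two counts. First, the measurable selection $\nu\mapsto(a_\nu,\mu^+_{a_\nu})$ is delicate for general compact metric $A$ and $\Theta$, since $A^U$ need not even be Borel in $A$; finiteness of $A$ makes $A^U$ a finite set with pre-selected witnesses, while finiteness of $\Theta$ yields a polyhedral structure on $A^*_R$ that again enables measurable selection. Second, the corrective atom must be a genuine probability measure, which requires $\int\mu^+_{a_\nu}\ddd p(\nu)$ to lie within a bounded multiple of $\mu_0$; this holds after WLOG reducing $\Theta$ to $\supp\mu_0$, since then $\mu_0$ has full support and the relevant ratios stay finite under either finiteness hypothesis.
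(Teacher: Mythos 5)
Your part (i) is correct and is essentially the paper's own argument: the paper's Lemma~\ref{lem: heart of the local PUBR theorem} performs exactly your splitting $\bar\mu=\varepsilon_\lambda\mu^\lambda+(1-\varepsilon_\lambda)\tilde\mu_\lambda$ (written in the parametrization $\lambda\mu+(1-\lambda)\bar\mu$), uses the same linearity fact that $a$ is a unique best response along the open segment between the witness and $\bar\mu$ (Lemma~\ref{lem: ubr preserved on convex combinations}), and lets the perturbation weight vanish using boundedness of $w$ and concavity of $\hat w$. Routing the conclusion through Proposition~\ref{infoselect} rather than aggregating directly is immaterial, since that proposition itself rests on the idempotence property $\hat w(\mu_0)\ge\int\hat w\ddd p$.

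Part (ii) has a genuine gap in the case where $A$ is finite but $\Theta$ is infinite. Your corrective atom is a nonnegative measure only if $\int\mu^+_{a_\nu}\ddd p(\nu)$ is dominated by a bounded multiple of $\mu_0$, which in turn requires each witness $\mu^+_a$ to satisfy $\gamma\mu^+_a\le\mu_0$ for some $\gamma>0$. You assert this holds once $\Theta$ is reduced to $\supp\mu_0$ because ``the relevant ratios stay finite,'' but full support does not give bounded likelihood ratios outside the finite-state case: with $\mu_0$ atomless and fully supported on $[0,1]$, a witness such as $\delta_{1/2}$ admits no $\gamma>0$ with $\gamma\delta_{1/2}\le\mu_0$, and nothing in the definition of $A^U$ prevents the witness from being singular with respect to the prior. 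The paper closes exactly this hole by showing $A^U\subseteq A^U(\mu_0)$ when $A$ is finite: the set of beliefs to which a given action is the \emph{unique} best response is relatively open in $\Delta[\supp\prior]$ (a finite intersection of strict-inequality sets, using finiteness of $A$), and the set of beliefs $\mu$ admitting $\gamma>0$ with $\gamma\mu\le\prior$ is dense there, so any witness can be replaced by a nearby dominated witness. Without this openness-plus-density step your rebalancing fails. (For finite $\Theta$ your argument goes through, provided you also invoke the Carath\'eodory reduction of Lemma~\ref{carath} so that $p$ has finitely many posteriors and the finitely many domination constants have a positive minimum.)
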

In particular, for full-support priors and finite states or finite actions, the theorem implies S has a unique equilibrium payoff if every action is the unique best response to some belief.

For intuition, consider the case in which the state and action spaces are both finite and the prior has full support, and let $p$ be any Bayes-plausible belief distribution with finite support $\tilde D$ satisfying the PUBR property. For any supported belief $\mu$, the PUBR property implies some action $a$ is both (A) an S-preferred R best response to belief $\mu$ and (B) a unique R best response to some alternative belief $\mu'_{\mu}$. For any $\lambda\in(0,1)$, we can then define the belief $\mu'_{\mu,\lambda}=\lambda \mu'_{\mu} + (1-\lambda)\mu$, which also has $a$ as a unique R best response by the linearity property of expected utility. Define now the belief distribution $p_\lambda$ which modifies $p$ by replacing each supported $\mu$ with $\mu'_{\mu,\lambda}$. This belief distribution is Bayes-plausible for the alternative prior $\mu_\lambda=(1-\lambda)\prior+\lambda\sum_{\mu}p(\mu)\mu'_\mu$, which converges to $\prior$ as $\lambda\to0$. Moreover, because R has a unique best response to every $p_\lambda$-supported message, it follows that $\int w\ddd p_\lambda$ converges to $\int v \ddd p$ as $\lambda\to 0$. Therefore, every limit point of $\hat w(\mu_\lambda)$ as $\lambda\to0$ is at least $\int v \ddd p$. Finally, the function $\hat w$ is concave---for the exact same reason that the optimal value $\hat v$ is---and so is continuous at the relatively interior point $\prior$ of its domain. Hence, $\hat w(\prior)\geq\int v \ddd p$, delivering the result.

There are two hurdles to generalizing the above argument to more general action and state spaces. The first hurdle is that for some beliefs, there may be no action that satisfies both (A) and (B) above. To circumvent this issue, we replace (B) with the requirement that the action can be approximated by actions that are unique R-best responses at some alternative beliefs. The second hurdle is that, with infinite states, $\DT$ has an empty interior when viewed as a subset of the set of all countably additive measures. Consequently, $\hat w$ may be discontinuous at the prior. To overcome this challenge, we employ a construction that holds the prior fixed. We refer the reader to the appendix for the exact details.

Our next result uses Theorem~\ref{ubrsuff} to show unique equilibrium S payoffs are a generic feature of finite environments.

\begin{proposition}[Generic uniqueness]\label{generic}
If $A$ and $\Theta$ are finite, then an open dense $\mathcal U_R \subseteq \real^{A\times\Theta}$ of full Lebesgue measure exists such that S has a unique equilibrium payoff (for any prior and any S preferences) as long as $u_R \in \mathcal U_R$.
\end{proposition}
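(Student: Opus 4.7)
My plan is to identify an open, dense, full-Lebesgue-measure subset $\mathcal{U}_R\subseteq\real^{A\times\Theta}$ on which the PUBR property holds at every belief---regardless of the prior $\mu_0$ and the sender payoff $u_S$---and then invoke Theorem~\ref{ubrsuff}(ii).

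I would take $\mathcal{U}_R$ to be the set of $u_R$ with the following \emph{no-redundant-action} property: for every nonempty $\Theta_0\subseteq\Theta$ and every $a\in A$, if $a$ is an R best response at some $\mu\in\Delta[\Theta_0]$, then $a$ is the unique R best response at some $\mu'\in\Delta[\Theta_0]$.  Concretely, for each such pair set
\[g_{\Theta_0,a}(u_R)\;:=\;\max_{\mu\in\Delta[\Theta_0]}\,\min_{a'\in A\setminus\{a\}}\int\bigl(u_R(a,\cdot)-u_R(a',\cdot)\bigr)\ddd\mu.\]
The signs $g_{\Theta_0,a}(u_R)>0$, $=0$, $<0$ correspond respectively to ``$a$ is uniquely best somewhere in $\Delta[\Theta_0]$'', ``$a$ is a best response in $\Delta[\Theta_0]$ but never uniquely so'', and ``$a$ is never a best response in $\Delta[\Theta_0]$''.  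Hence $\mathcal{U}_R$ is precisely the set where $g_{\Theta_0,a}(u_R)\neq 0$ for every pair $(\Theta_0,a)$.

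For genericity, $g_{\Theta_0,a}$ is continuous in $u_R$, so $\mathcal{U}_R$ is open.  Moreover, letting $e_{a,\Theta_0}\in\real^{A\times\Theta}$ be the indicator of $\{a\}\times\Theta_0$, a direct calculation using $\mu(\Theta_0)=1$ for $\mu\in\Delta[\Theta_0]$ yields the identity $g_{\Theta_0,a}(u_R+t\cdot e_{a,\Theta_0})=g_{\Theta_0,a}(u_R)+t$ for every $t\in\real$.  Thus the zero set $g_{\Theta_0,a}^{-1}(\{0\})$ meets each line parallel to $e_{a,\Theta_0}$ in exactly one point and is the graph of a continuous function on the orthogonal complement of $e_{a,\Theta_0}$; by Fubini it is a Lebesgue-null, nowhere-dense closed subset of $\real^{A\times\Theta}$.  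A finite union of such sets still has these properties, so $\mathcal{U}_R$ is open, dense, and of full Lebesgue measure.

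With $u_R\in\mathcal{U}_R$ in hand, fix any prior $\mu_0$ and any $u_S$, and set $\Theta_0:=\supp\mu_0$ and $D:=\Delta[\Theta_0]$.  Since every Bayes-plausible $p\in\RP(\mu_0)$ satisfies $p(D)=1$, $D$ is persuasion-sufficient.  For any $\mu\in D$ and $a\in A^*_R(\mu)$, the action $a$ is an R best response at $\mu\in\Delta[\Theta_0]$, so $g_{\Theta_0,a}(u_R)\geq 0$; combined with $g_{\Theta_0,a}(u_R)\neq 0$, we get $g_{\Theta_0,a}(u_R)>0$, meaning $a$ is uniquely best at some $\mu'\in\Delta[\Theta_0]=\Delta[\supp\mu_0]$, i.e., $a\in A^U$.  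Hence $A^*_R(\mu)\cap A^U=A^*_R(\mu)$ and the PUBR equality at $\mu$ is trivially satisfied.  Since $\Theta$ is finite, Theorem~\ref{ubrsuff}(ii) delivers the unique equilibrium sender payoff.

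The main obstacle is identifying a workable genericity condition and verifying its properties cleanly.  The ``uniform shift'' identity $g_{\Theta_0,a}(u_R+t\cdot e_{a,\Theta_0})=g_{\Theta_0,a}(u_R)+t$ is the key trick: it reduces what might otherwise be a delicate polyhedral analysis of best-response regions---complicated by the fact that full-dimensionality of a best-response polytope does \emph{not} imply existence of a unique best response---to an elementary one-dimensional Fubini argument.
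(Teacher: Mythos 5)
Your proposal is correct and follows essentially the same route as the paper: the paper's Lemma~\ref{ubrgeneric} defines $\mathcal U_R$ via exactly the functions you call $g_{\Theta_0,a}$ (there denoted $\varphi_{(a,\hat\Theta)}$), proves openness by continuity and full measure by the same uniform-shift-plus-Fubini trick (shifting along the indicator of $\{a\}\times\Theta$ rather than $\{a\}\times\Theta_0$, an immaterial difference), and then concludes via Theorem~\ref{ubrsuff} just as you do.
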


The proposition's proof shows that global PUBR is generic.  Intuitively, a failure of PUBR at some belief implies that, for some fixed action and fixed set of states, R's highest possible expected payoff gain from using said action rather than his best other action is exactly zero---a knife-edge condition.  

The above proposition, which makes no structural assumptions on players' payoffs, is completely silent on infinite persuasion models. Meanwhile, papers applying the Bayesian persuasion framework \citep[e.g.,][]{kolotilin2017persuasion,kolotilin2018optimal,dworczak2018simple,guo2019interval} often focus on settings in which the state space, action space, or both are infinite.\footnote{Moreover, in a setting in which R has private information and takes a binary action, such as \cite{kolotilin2017persuasion}, public persuasion can be reinterpreted as a continuous-action model in which R chooses a cutoff type at which to take the high action.} Because many such models are in some sense one-dimensional, we next turn our attention to specifications enjoying some ordinal structure.

\begin{definition}\label{orderdef} 
Say the environment is \textbf{ordered} if $A,\Theta\subseteq\real$; the function $u_R$ exhibits strictly increasing differences; and every $\mu\in\DT$ has either
$\int u_R(a,\cdot) \ddd\mu$ strictly quasiconcave in $a\in A$ or $\int u_S(a,\cdot) \ddd\mu$ weakly quasiconvex in $a\in A$.\footnote{The quasiconcavity condition says that any $a_L,a_M,a_H\in A$ with $a_L<a_M<a_H$ have $\int u_R(a_M,\cdot) \ddd\mu>\min\left\{\int u_R(a_L,\cdot) \ddd\mu,\ \int u_R(a_H,\cdot) \ddd\mu\right\}$. This condition holds vacuously with binary actions, holds if $u_R$ is strictly concave in its first argument, and holds if $u_R$ is the restriction of a function $A\times\co\Theta\to\real$ that is strictly quasiconcave in its first argument and affine in its second. Similarly, the quasiconvexity condition holds if $u_S$ is weakly convex or weakly monotone in its first argument, or if $u_S$ is the restriction of a function $A\times\co\Theta\to\real$ that is weakly quasiconvex in its first argument and affine in its second.}
\end{definition}

The above condition says that the action space and state space are both subsets of the real line, and that preferences respect this ordered structure. 
The increasing-differences condition says that R would like to take higher actions in higher states, and the quasiconvexity/quasiconcavity condition (which for instance holds if the the first-order approach is valid in determining R's optimal behavior, or if S always prefers higher actions) ensures that S always prefers either R's highest or lowest best response. 
Note, the assumption that the environment is ordered is silent on whether R has unique best responses, on whether preferences are affine in R's posterior belief, and on whether the state or action space is finite or infinite. 

The next result shows that, except under a specific knife-edge condition (on R's preferences and the prior), S has a unique equilibrium payoff. In particular, the result says S has a unique equilibrium payoff in ordered models with an atomless prior.

\begin{theorem}[Uniqueness in the ordered model]\label{ordered}
Suppose the environment is ordered; and for each $\bar\theta\in\{\min\supp\prior,\ \max\supp\prior\}$, either $\mu_0(\bar\theta)=0$ or R's best response at $\delta_{\bar\theta}$ is unique. Then, S has a unique equilibrium payoff.
\end{theorem}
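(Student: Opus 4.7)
The plan is to verify the hypothesis of Theorem~\ref{ubrsuff}(i) for the persuasion-sufficient set
\[
D := \{\mu \in \Delta\Theta : \supp \mu \subseteq \supp \prior\} \setminus \{\delta_\theta : \theta \in \supp\prior,\ |A^*_R(\delta_\theta)| > 1\}.
\]
Combining the theorem's endpoint hypothesis with the fact that $p(\{\delta_\theta\}) \leq \prior(\{\theta\})$ for every Bayes-plausible $p$, the only $D$-exclusions that can carry positive $p$-mass for some $p$ are degenerate posteriors $\delta_\theta$ at strict interior atoms of $\prior$ at which R has multiple best responses.

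First I would verify strong PUBR on $D$. If $\mu \in D$ has a unique best response, take $\mu' = \mu$ as the witness in the definition of $A^U(\mu)$. If $\mu \in D$ has multiple best responses, strict quasi-concavity of $\int u_R(\cdot,\theta)\,d\mu(\theta)$ in $a$ is ruled out, so the ordered-environment hypothesis supplies weak quasi-convexity of $\int u_S(\cdot,\theta)\,d\mu(\theta)$ in $a$, placing $v(\mu)$ at an extremal best response $a_L(\mu) := \min A^*_R(\mu)$ or $a_H(\mu) := \max A^*_R(\mu)$. Say $a_L(\mu)$ is S-optimal; since $\mu$ then has multiple best responses, it must be non-degenerate. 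I would construct a probability $\nu$ absolutely continuous with respect to $\mu$ with bounded density satisfying $A^*_R(\nu) = \{a_L(\mu)\}$ by starting from $\mu(\cdot\mid\{\theta\le\theta^*\})$ for a non-maximal $\theta^*\in\supp\mu$ and applying further FOSD-below shifts as necessary, using strictly increasing differences to force the best response to fall to $a_L(\mu)$. Then the linearity-of-expected-utility argument from the paper's discussion of Theorem~\ref{ubrsuff} yields $A^*_R((1-\lambda)\mu+\lambda\nu) = \{a_L(\mu)\}$ for every $\lambda\in(0,1)$, while $\varepsilon\bigl((1-\lambda)\mu+\lambda\nu\bigr) \leq \mu$ for some $\varepsilon>0$, so $a_L(\mu) \in A^U(\mu)$ and strong PUBR holds at $\mu$.

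Next I would show $D$ is persuasion-sufficient. Given an $\varepsilon$-optimal $p \in \RP(\prior)$, the only obstructions are posteriors $\delta_\theta$ for interior atoms $\theta$ of $\prior$ with multiple best responses. For each such $\theta$, choose a neighbor $\theta^\pm \in \supp\prior$ on S's preferred side---available because $\theta$ is strictly interior to $\supp\prior$---and replace the mass $q_\theta = p(\{\delta_\theta\})$ at $\delta_\theta$ by the same mass at $(1-\lambda)\delta_\theta + \lambda\delta_{\theta^\pm}$, simultaneously rebalancing Bayes plausibility by modifying another $p$-posterior that charges $\theta^\pm$, or (when $\theta^\pm$ is not an atom of $\prior$) by spreading the rebalancing over posteriors charging a $\supp\prior$-neighborhood of $\theta^\pm$. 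The resulting policy is Bayes-plausible, $D$-supported, and differs from $p$ in $v$-value by $O(\lambda)$; sending $\lambda\to 0$ completes the persuasion-sufficiency argument.

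The main obstacle is the constructions at both steps: producing the auxiliary $\nu$ with the required unique best response (especially delicate when $A$ is a continuum and ties cluster near the extremal best response), and measurably rebalancing in the persuasion-sufficiency step when $\theta^\pm$ is non-atomic. The ordered structure's FOSD monotonicity, together with the endpoint hypothesis---which precludes the sole pathological case of a boundary atom at which R has multiple best responses---provides exactly the room needed: for each $\mu$ with multiple best responses one has lower states within $\supp\mu$ to construct $\nu$, and for each problematic interior atom $\theta$ one has both a lower and a higher state in $\supp\prior$ to absorb the rebalancing.
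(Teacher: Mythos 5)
Your overall strategy---verify strong PUBR on a persuasion-sufficient set and invoke Theorem~\ref{ubrsuff}(i)---cannot work in general, and the paper explicitly flags this: in the general ordered case, Theorem~\ref{ordered} does \emph{not} follow from Theorem~\ref{ubrsuff}. The gap is in your construction of $\nu$ with $A^*_R(\nu)=\{a_L(\mu)\}$. With a continuum of actions, the S-preferred extremal best response need not belong to $A^U$ at all: as you FOSD-shift the belief downward, strictly increasing differences does kill the \emph{higher} tied actions, but the best-response set can slide continuously past $a_L(\mu)$ to strictly lower actions without ever being equal to $\{a_L(\mu)\}$. Concretely, take $\Theta=\{0,1\}$, $A=[0,1]$, $u_R(a,\theta)=a\theta-c(a)$ with $c$ convex, $C^1$, and affine on $[1/3,2/3]$, and $u_S(a)=a$. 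At the belief $q^*=c'(1/2)$ the best-response set is $[1/3,2/3]$, S's preferred action is $2/3$, yet $2/3$ is the unique maximizer at no belief (the argmax at $q\neq q^*$ is where $c'(a)=q$, which is bounded away from $2/3$ or jumps past it), so $A^*_R(q^*)\cap A^U=\emptyset$ and strong PUBR fails---while Theorem~\ref{ordered} still delivers uniqueness. Your step ``applying further FOSD-below shifts as necessary \ldots to force the best response to fall to $a_L(\mu)$'' is exactly the step that has no justification.

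The paper avoids this by \emph{not} demanding an exact unique best response. It proves a weaker perturbation criterion (Lemma~\ref{lem: heart of the prior PUBR theorem}): it suffices that for each supported $\bar\mu$ there be a direction $\mu$ with $\gamma\mu\le\prior$ such that $\hat w(\lambda\mu+(1-\lambda)\bar\mu)>v(\bar\mu)-\varepsilon$ for small $\lambda$. In the ordered model the witness is the prior conditioned on high states, $\mu_{\theta^*}=\prior(\cdot\mid(\theta^*,\theta_H])$, and the argument shows only that \emph{all} of R's best responses at the perturbed belief lie within $\eta$ of the target action $\bar a$; continuity of $u_S$ then bounds the payoff loss by $\varepsilon$. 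This is why the model's continuity hypotheses, rather than an exact tie-breaking belief, carry the proof. A secondary point: your persuasion-sufficiency step is both unnecessary and shakier than you suggest. The paper's $D$ excludes only the degenerate beliefs at $\min\supp\prior$ and $\max\supp\prior$ that carry zero prior mass, so $p(D)=1$ holds automatically for every Bayes-plausible $p$; interior atoms with multiple best responses are handled by the same perturbation argument (there is room to shift toward $\theta_H$ precisely because $\bar\mu\neq\delta_{\theta_H}$). Your rebalancing construction, by contrast, needs $\int v\,\dd p$ to move by only $O(\lambda)$, which is not guaranteed since $v$ is merely upper semicontinuous. Your argument would essentially go through when $A$ is finite (as the paper's own sketch notes), but not in the generality the theorem claims.
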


For a proof sketch, it is useful to consider the case in which $A$ is finite. In this special case, one can derive payoff uniqueness as a consequence of Theorem~\ref{ubrsuff}.\footnote{In the general case, Theorem~\ref{ordered} does not follow directly from Theorem~\ref{ubrsuff}. However, the two results are related. In the appendix, we prove a technical sufficient condition (Lemma~\ref{lem: heart of the prior PUBR theorem}) on the functions $v$ and $w$, which serves as a key input to both theorems.} We need to argue that the PUBR property holds at any belief $\mu$ supported by an information policy. 
Given our generic assumption on the prior, we only need to worry about the case of beliefs at which R is neither certain of the highest state nor certain of the lowest state. The quasiconcavity/quasiconvexity condition of ordered models implies that S prefers either S's highest or lowest best response at $\mu$---say the highest. But then the degenerate belief on the highest state is a valid witness to the PUBR property: slightly increasing the probability on the high state breaks any R indifference in S's preferred direction, because of the increasing differences hypothesis. To prove Theorem~\ref{ordered} in the general case, a qualitatively similar argument applies (though one cannot appeal to~Theorem~\ref{ubrsuff}). This argument relies on the model's continuity hypotheses. Instead of perturbing beliefs in the direction of a degenerate belief, we do so in the direction of a \emph{conditional} belief of the prior conditional on the state exceeding a high cutoff; and instead of ensuring an action is selected as a unique best response, we ensure that all R best responses are very near to the targeted action.


To conclude the section, we provide two examples in which our two theorems make it straightforward to verify S's equilibrium payoff is unique (without knowing $\hat v(\mu_0)$ or $\hat w(\mu_0)$).

\begin{example}
Suppose $A\subseteq\Theta\subset\real^d$ for some dimension $d\in\mathbb N$, where $A$ is finite (but $\Theta$ can be finite or infinite), and suppose R's utility takes the form $\ur(a,\theta)=-\Vert a - \theta\Vert^2$. For any full-support prior and any S preferences, the PUBR property holds on $\DT$ because each $a\in A$ is the unique best response to the degenerate belief $\delta_a\in\DT$. Thus, Theorem~\ref{ubrsuff} applies directly, and S has a unique equilibrium payoff.
\end{example}

\begin{example}
Suppose $A\subset\real_{+}$, $\Theta\subset[1,\infty)$,
the prior is atomless, R's utility takes the form $\ur(a,\theta)=\theta^a - c(a)$ for some $c\colon A\to\real$, and $\us$ is increasing in its first argument. This environment is ordered, with strictly increasing differences following from $$\frac{\partial}{\partial a}\frac{\partial}{\partial \theta}\ur(a,\theta)=\theta^{a-1}(1+a\log\theta)>0.$$
Thus, Theorem~\ref{ordered} applies directly, and S has a unique equilibrium payoff.
\end{example}

\section{Limited Commitment}\label{sec:pwi}

One way of arguing in favor of sender-favorite equilibrium selection is to assert that a principal with full commitment power should also have the ability to steer others towards her favorite equilibrium. This argument, however, may appear less compelling in settings with a weaker sender, such as when the sender's ability to commit is compromised. In this section, we show how to apply our results to such settings by extending the model to one in which S has only limited commitment power, as modeled in \cite*{LRS1}---hereafter LRS1. To do so, we first specialize the existing parameters of our environment for tractability: \begin{assumption}
The spaces $A$, $\Theta$, and $M$ are all finite with $|M|\geq2|\Theta|$, and $u_S$ is state independent (i.e., constant in its second argument). 
\end{assumption}
\noindent In mild abuse of notation, write $u_S\colon A\to\real$. In addition to the parameters of our baseline model, we parameterize our limited-commitment model by $\cred\in[0,1]$, which denotes the sender's \emph{credibility}.

The augmented game begins with S provisionally choosing an experiment, $\cstrat\colon \Theta\to\Delta M$, an ``official'' report. The state $\theta\in\Theta$ then realizes and, independent of the state, one of two possibilities occurs. With probability $\cred$, the message $m\in M$ is sent in accordance with the report, and so is distributed according to $\cstrat(\cdot|\theta)$. With complementary probability $1-\cred$, reporting is ``influenced'' and so S observes $\theta$ and can freely choose $m\in M$. Crucially, R observes the message $m$ but observes neither $\theta$ nor whether reporting was influenced.

In what follows, we formalize an appropriate solution concept for the game with compromised reporting (again taking experiment-choice incentives into account), provide a result comparing the S equilibrium payoff set to that under perfect commitment, and apply our uniqueness results to address robustness to imperfect credibility.

\subsection{The equilibrium payoff set}

We formalize the solution concept for the augmented game as follows.

\begin{definition}
A sender strategy is a pair $(\cstrat,\sstrat)$ consisting of an experiment $\cstrat\in\cstrats$ and a measurable function $\sstrat\colon\Theta\times\cstrats\to\Delta M$; 
a receiver strategy is a measurable function $\rstrat\colon M\times\cstratrs\to\Delta A$;
and a receiver belief map is a measurable function $\beliefm\colon M\times\cstratrs\to\DT$.  A \textbf{perfect Bayesian $\cred$-equilibrium ($\cred$-PBE)} is a quadruple of such maps $\langle\cstrat,\sstrat,\rstrat,\beliefm\rangle$ such that \begin{enumerate}
\item The sender's experiment choice satisfies 
$$\cstrat
\in \argmax_{\cstratr\in\cstratrs} \int_\Theta \int_M \int_A u_S \ddd\rstrat(\cdot|m,\cstratr) \ \left[ \cred\ddd\cstratr(m|\theta)+(1-\cred)\ddd\sstrat(m|\theta,\cstratr)\right] \ddd\mu_0(\theta);$$
\item Every $\cstratr\in\cstratrs$ and $m\in M$ have $$\rstrat\left(\argmax_{a\in A} \int_\Theta u_R(a,\theta) \ddd\beliefm(\theta|m,\cstratr) \ \middle|\  m,\cstratr \right)=1;$$
\item Every $\cstratr\in\cstratrs$, Borel $\hat M\subseteq M$, and Borel $\hat\Theta \subseteq \Theta$ have $$\int_\Theta \int_{\hat M} \beliefm(\hat\Theta|\cdot,\cstratr) \ddd\left[ \cred\cstratr(\cdot|\theta)+(1-\cred)\sstrat(\cdot|\theta,\cstratr)\right] \ddd\mu_0(\theta) = \int_{\hat\Theta} \left[ \cred\cstratr(\hat M|\cdot)+(1-\cred)\sstrat(\hat M|\cdot,\cstratr)\right] \ddd\mu_0;$$
\item Every $\cstratr\in\cstratrs$ and $\theta\in\Theta$ have $$\sstrat\left(\argmax_{m\in M} \int_A u_S(a) \ddd\rstrat(a|\cstratr,m) \ \middle|\  \theta,\cstratr \right)=1.$$
\end{enumerate}
In such a case, we say the induced \textbf{$\cred$-PBE payoff} is
$$
\int_\Theta \int_M \int_A u_S(a) \ddd\rstrat\left(a\middle|m,\cstrat
\right) \left[ \cred\ddd\cstrat(m|
\cdot) 
+(1-\cred)\ddd\sstrat(m|\cdot,\cstrat) \right]
\ddd\mu_0
.$$
Let $w^*_\cred(\mu_0)$ denote the infimum $\cred$-PBE payoff for S.
\end{definition}

We begin our analysis of this richer model with a partial description of the range of $\cred$-PBE payoff for S.
With full credibility, the range of S payoffs is naturally identical to the game without an influencing stage (as influence occurs with zero probability): one completes the equilibrium by incorporating some best response for an influencing S. 
But which payoffs can S attain when unable to perfectly commit? With S-favorable selection, and ignoring experiment-choice incentives, it is nearly immediate that higher credibility can only help S: she could always use her additional commitment power to replicate the way should would behave without it.
But the effect of credibility is less clear when one considers S incentives and the full range of equilibria. First, the above replication argument is only available if S finds it optimal to mimic her hypothetical influencing behavior. Second, because influencing S is subject to incentive constraints more often under lower credibility, it is natural to wonder whether her incentive constraints would refine away some bad equilibria.

The following result shows that, in spite of these concerns, lower credibility is in fact worse for S than full commitment power in a set-valued sense.\footnote{Note, the proposition also establishes that a $\cred$-PBE exists.} 

\begin{proposition}[Payoff set under partial credibility]\label{allcred}
The set of $\cred$-PBE payoffs for S is nonempty and weakly below $[\hat w(\mu_0),\hat v(\mu_0)]$ in the strong set order, coinciding with it for $\cred=1$.
\end{proposition}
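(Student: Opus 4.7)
The proposition contains three subclaims: nonemptiness of the $\cred$-PBE payoff set $X$; the strong-set-order inequality $X \preceq [\hat w(\prior),\hat v(\prior)]$ (which for an interval $X$ amounts to $\sup X \leq \hat v(\prior)$ and $\inf X \leq \hat w(\prior)$); and coincidence $X = [\hat w(\prior),\hat v(\prior)]$ at $\cred=1$. The upper inequality is immediate. Given any $\cred$-PBE $\langle\cstrat,\sstrat,\rstrat,\beliefm\rangle$, the compound experiment $\tilde\cstrat(\cdot|\theta)=\cred\,\cstrat(\cdot|\theta)+(1-\cred)\sstrat(\cdot|\theta,\cstrat)$ induces, by the Bayes-consistency condition (3), a posterior distribution $p\in\RPP$; receiver optimality (condition 2) bounds the message-conditional expected $u_S$ by $v(\cdot)$ at the corresponding posterior, so the equilibrium payoff is at most $\int v\ddd p\leq\hat v(\prior)$.

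For nonemptiness and the lower inequality, my plan is to exhibit an $R$-adversarial $\cred$-PBE. Fix $\rstrat$ to be a $w$-achieving best-response selection: at each $(m,\cstratr)$, play some $a\in A^*_R(\mu_m)$ minimizing $u_S(a)$, where $\mu_m$ is the Bayesian posterior induced by the compound experiment under $\cstratr$. For each $\cstratr$, the continuation---influenced sender's message choice, the resulting compound experiment, and the ensuing posteriors and responses---solves a simultaneous fixed-point problem, which exists in the assumed finite environment by a standard Kakutani-type argument. Given this adversarial continuation, the report sender's payoff from any report $\cstratr$ equals $\int w\ddd p_{\tilde\cstratr}$ for some $p_{\tilde\cstratr}\in\RPP$, and is therefore at most $\hat w(\prior)$; compactness yields a maximizing report, and the resulting quadruple is a $\cred$-PBE with sender payoff weakly below $\hat w(\prior)$.

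Coincidence at $\cred=1$ follows because influence is null-measure: any $\cred=1$ $\cred$-PBE shares its sender payoff with the corresponding PBE from Section~\ref{sec:fullset}, and conversely any Section~\ref{sec:fullset} PBE extends to a $\cred=1$ $\cred$-PBE by adjoining any measurable influenced-sender best response (which exists since $M$ is finite). Hence the $\cred=1$ payoff set coincides with the Section~\ref{sec:fullset} PBE payoff set, which Proposition~\ref{perprior} identifies as $[\hat w(\prior),\hat v(\prior)]$. The main obstacle is the fixed-point step in the lower-bound construction: the mutual dependence among $R$'s belief, her action, the compound experiment, and the influenced sender's message choice precludes any direct specification, so one must define an appropriate best-response correspondence on continuation strategy profiles given $\cstratr$ and apply Kakutani. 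A secondary issue is justifying that the two endpoint inequalities suffice for the strong-set-order comparison, which can be handled either by confirming that $X$ is itself an interval via a standard interpolation argument using mixtures of two equilibria, or by directly verifying the strong-set-order conditions using the nature of $[\hat w(\prior),\hat v(\prior)]$.
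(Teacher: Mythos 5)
Your decomposition (upper bound immediate; nonemptiness and the lower bound via an adversarial continuation; $\cred=1$ by nullity of influence) matches the paper's, and the upper bound and the $\cred=1$ case are fine. But the heart of your construction---fixing $\rstrat$ to be a $w$-achieving (S-worst) best-response selection and then invoking ``a standard Kakutani-type argument'' for the continuation fixed point---is exactly where the proof breaks, and the paper flags this explicitly. The difficulty is that $w$ is lower semicontinuous but not upper semicontinuous: as the influencing strategy $\sstrat(\cdot|\cdot,\cstratr)$ varies, the induced posteriors cross R's indifference boundaries and the adversarial selection $\mu\mapsto\{w(\mu)\}$ jumps. Consequently the composite correspondence (influencing strategy $\to$ posteriors $\to$ adversarial R play $\to$ influenced-sender-optimal messages) does not have a closed graph, so Kakutani does not apply; indeed, against a strictly adversarial R the influenced sender may have \emph{no} optimal message at all, violating condition 4 of the $\cred$-PBE definition. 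The same usc failure undermines your later step ``compactness yields a maximizing report'': the report sender's payoff under adversarial continuation need not be upper semicontinuous in $\cstratr$, so an optimal report need not exist.

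The paper's resolution is the substantive content you are missing. It replaces the adversarial selection by the Kakutani correspondence $\subV=[w,z]$, where $z$ is the upper semicontinuous envelope of $w$; Lemma~\ref{lemma:approx-existence} then delivers, for \emph{every} official report, a continuation $\cred$-equilibrium with S values in $[w,z]$, hence payoff at most $\hat z(\mu_0)$. A separate argument (Theorem 10.4 of \cite{rockafellar1970convex}: concave functions on a finite-dimensional simplex are continuous on the relative interior) shows $\hat z(\mu_0)=\hat w(\mu_0)$ for full-support priors, recovering the bound you wanted. Nonemptiness and the interval structure are then obtained not by optimizing over reports under adversarial play, but by taking an on-path $\cred$-equilibrium with payoff in $[\hat w(\mu_0)\wedge v^*_\cred(\mu_0),\,v^*_\cred(\mu_0)]$ (which exists by LRS1's interval lemma together with the babbling outcome) and deterring deviations with the $[w,z]$-adversarial continuations. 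Your endpoint-inequality reduction of the strong set order also quietly requires this interval structure (one needs $[\hat w(\mu_0)\wedge v^*_\cred(\mu_0),\,v^*_\cred(\mu_0)]$ contained in the payoff set, not just the two endpoint comparisons), and ``mixtures of two equilibria'' is not available here since the sender does not mix over experiments; the paper instead imports the interpolation from LRS1.
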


The part of the proposition for $\cred=1$ is straightforward. We now briefly summarize the proof showing S's $\cred$-PBE payoffs lies below $[\hat w(\mu_0),\hat v(\mu_0)]$ when $\cred<1$. Our explanation takes existence of a $\cred$-PBE as given. As a starting point, note that additional incentive considerations can only constrain S, and so no  $\cred$-PBE payoff can be greater than $\hat v(\mu_0)$. Now, fix any $\cred < 1$, and let $v^*_{\cred}(\mu_0)$ be the highest $\cred$-PBE payoff for S. If $\hat{v}_\cred$ were below $\hat w(\mu_0)$, the desired set ranking immediately follows. Suppose then that $v^*_\cred(\mu_0) > \hat w(\mu_0)$. In this case, we argue that every payoff between $\hat{w}(\mu_0)$ and $v^*_\cred(\mu_0)$ is compatible with $\cred$-PBE. For a rough sketch, consider first S's set of attainable payoffs if we ignored her experiment-choice incentives. Since one can always pair an uninformative official report with a babbling equilibrium, this set is always includes $w(\mu_0)$. Moreover, this set is an interval, by Lemma 7 in LRS1. Since $w(\mu_0)\leq \hat{w}(\mu_0)$, it follows the set of payoffs S can obtain ignoring experiment-choice incentives is an interval whose lower bound is weakly below $\hat{w}(\mu_0)$.

Given the above, to prove the proposition, it is sufficient to claim that for any experiment S could initially choose, some continuation equilibrium gives S a payoff of no more than $\hat w(\mu_0)$.  This claim would follow immediately from the definition of $\hat w$ if R were to choose adversarially to S whenever he is indifferent between multiple actions. However, it is unclear whether a continuation equilibrium with the latter feature exists, since $w$ is not continuous (hence, not upper semicontinuous)---hence S-adversarial tie-breaking by R could mean S does not have any optimal message to send when allowed to influence the state. To resolve this issue, we search for a continuation equilibrium in which R gives S a continuation payoff in $[w(\mu),z(\mu)]$ whenever his posterior belief is $\mu$, where $z$ is the smallest upper semicontinuous function above $w$ (and so $z\leq v$). Because the correspondence $[w,z]$ is upper hemicontinuous, an appropriate application of Kakutani's fixed point theorem delivers such continuation play and beliefs. This continuation play generates an S payoff of at most $\hat z(\mu_0)$ by KG's results, where $\hat z$ is the concave envelope of $z$. Finally, because the concave function $\hat w$ on a finite-dimensional simplex is automatically continuous on the interior, we can show that the concave envelopes $\hat z$ and $\hat w$ agree at the prior.

\subsection{Strong robustness}

LRS1 establishes a robustness result (Proposition 3 of that paper) concerning the best S payoff attainable when credibility is only slightly imperfect ($\cred\approx1$)---but maintaining the assumption of S-favorable equilibrium selection. That result implies her highest attainable payoff converges to the Bayesian persuasion value for most payoff specifications. Our interpretation of this result is that S's value from persuasion is typically robust to limited credibility \emph{if} she can also select the equilibrium. As explained earlier, though, when S's credibility is compromised, one may doubt that she can coordinate R towards her favorite equilibrium. In this section, we revisit the robustness question of LRS1, without assuming S can choose her favorite equilibrium when her credibility is imperfect. In particular, we show that S payoffs are robust to slightly imperfect credibility \emph{and} equilibrium selection if and only if they are robust to equilibrium selection in the full-credibility case. With this result in hand, the results of the previous sections apply directly to address this stronger form of robustness in persuasion models.

\begin{proposition}[Strong robustness]\label{Worst}
The lowest $\cred$-PBE values satisfy $\lim_{\cred\nearrow 1} w^*_\cred(\mu_0)=\hat w(\mu_0)$. In particular, the Bayesian persuasion value is strongly robust to partial credibility ($\lim_{\cred\nearrow 1} w^*_\cred(\mu_0) = \hat v(\mu_0)$) if and only if it is robust to equilibrium selection ($\hat w(\mu_0) = \hat v(\mu_0)$).
\end{proposition}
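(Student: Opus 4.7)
The approach is to bracket $w^*_\cred(\mu_0)$ between $\hat{w}(\mu_0)$ from above and from below. The upper bound is immediate from Proposition~\ref{allcred}: the strong-set-order inequality forces $w^*_\cred(\mu_0)\leq\hat{w}(\mu_0)$ for every $\cred\in[0,1]$, so $\limsup_{\cred\nearrow1} w^*_\cred(\mu_0)\leq\hat{w}(\mu_0)$. The second sentence of the proposition then follows at once from the first: once $\lim_{\cred\nearrow1} w^*_\cred(\mu_0)=\hat{w}(\mu_0)$ is established, this limit equals $\hat{v}(\mu_0)$ exactly when $\hat{w}(\mu_0)=\hat{v}(\mu_0)$.

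For the lower bound, I would fix $\epsilon>0$, select $p^\epsilon\in\RPP$ with $\int w\ddd p^\epsilon \geq \hat{w}(\mu_0)-\epsilon$, and let $\cstratr^\epsilon\in\cstratrs$ be an experiment inducing $p^\epsilon$. Let $\Gamma(\cred,\cstratr^\epsilon)$ denote the infimum sender payoff over ``continuation $\cred$-equilibria'' at $\cstratr^\epsilon$, meaning triples $(\sstrat,\rstrat,\beliefm)$ that satisfy R-optimality, Bayesian updating, and S's influence-optimality at the fixed experiment $\cstratr^\epsilon$ (conditions 2--4 of the $\cred$-PBE definition). In any $\cred$-PBE, the behavior specified at the off-path experiment $\cstratr^\epsilon$ is one such continuation, so S's hypothetical deviation payoff from choosing $\cstratr^\epsilon$ is at least $\Gamma(\cred,\cstratr^\epsilon)$; sender-optimality then forces the equilibrium payoff to be weakly larger. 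Hence $w^*_\cred(\mu_0)\geq\Gamma(\cred,\cstratr^\epsilon)$ for every $\cred$.

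The crux is to prove $\liminf_{\cred\nearrow1}\Gamma(\cred,\cstratr^\epsilon)\geq\int w\ddd p^\epsilon$. For each $n$, pick a near-infimum continuation $\cred_n$-equilibrium $(\sstrat^n,\rstrat^n,\beliefm^n)$ at $\cstratr^\epsilon$ with payoff $c_n$, where $\cred_n\nearrow1$. Finiteness of $A$, $\Theta$, and $M$ places the relevant strategy components in finite-dimensional compact sets, so a subsequence converges to some $(\sstrat^\infty,\rstrat^\infty,\beliefm^\infty)$. Because $(1-\cred_n)\to0$, the mixed message distribution $\cred_n\cstratr^\epsilon(\cdot|\theta)+(1-\cred_n)\sstrat^n(\cdot|\theta,\cstratr^\epsilon)$ converges to $\cstratr^\epsilon(\cdot|\theta)$ for each $\theta$, so Bayesian updating forces $\beliefm^\infty(\cdot|m,\cstratr^\epsilon)$ to equal the posterior given $\cstratr^\epsilon$ at every message $m$ with positive $p^\epsilon$-mass. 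Upper hemicontinuity of R's best-response correspondence then makes $\rstrat^\infty$ a best response to these beliefs, so (using state-independence of $u_S$) the limit payoff takes the form $\int\tilde u\ddd p^\epsilon$ with $w\leq\tilde u\leq v$ on $\supp p^\epsilon$, whence $c_\infty\geq\int w\ddd p^\epsilon$.

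The main obstacle is handling messages whose probability vanishes in the limit: their posteriors may fail to converge and S's influence-optimality could pin down additional constraints, but such messages carry vanishing mass and therefore do not affect the limit payoff. Combining $w^*_\cred(\mu_0)\geq\Gamma(\cred,\cstratr^\epsilon)$ with $\liminf\Gamma(\cred,\cstratr^\epsilon)\geq\int w\ddd p^\epsilon \geq\hat{w}(\mu_0)-\epsilon$ and sending $\epsilon\downarrow0$ yields $\liminf_{\cred\nearrow1} w^*_\cred(\mu_0)\geq\hat{w}(\mu_0)$, closing the bracket.
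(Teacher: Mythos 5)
Your proposal is correct and follows the same overall architecture as the paper's proof: the upper bound comes from the payoff-set comparison (the paper invokes Lemma~\ref{lemma:worstineq} directly, you invoke Proposition~\ref{allcred}, which rests on it), and the lower bound comes from evaluating S's deviation to an experiment $\cstratr^\epsilon$ inducing a near-optimal policy for the adversarial concavification $\hat w$. Where you genuinely diverge is in the execution of the crux. The paper gives a direct quantitative estimate: for each $\mu\in\supp p^\epsilon$ it picks a neighborhood $N(\mu)$ on which $w>w(\mu)-\epsilon$ (using lower semicontinuity of $w$), observes that for $\cred$ close to $1$ the Bayesian posterior after message $m_\mu$ is trapped in $\tfrac{1}{\cred p(\mu)+(1-\cred)}\left[\cred p(\mu)\mu+(1-\cred)\DT\right]\subseteq N(\mu)$ \emph{uniformly over all continuation play}, and so obtains the explicit bound $s\geq\cred(\hat w(\mu_0)-\epsilon)+(1-\cred)\min w(\DT)$ for every $\cred$-PBE. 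You instead extract a convergent subsequence of near-worst continuation equilibria as $\cred_n\nearrow 1$ and pass to the limit using upper hemicontinuity of $A^*_R$. The two are built on the same continuity fact (l.s.c.\ of $w$ is Berge applied to $A^*_R$), but the paper's version avoids subsequence extraction and yields an explicit convergence rate, while yours is purely qualitative; both are valid in the finite setting of Section~\ref{sec:pwi}, and your handling of vanishing-mass messages is the right observation. One small omission: you need $|\supp p^\epsilon|\leq|M|$ for an experiment $\cstratr^\epsilon$ inducing $p^\epsilon$ to exist; the paper secures this by first passing to a policy with affinely independent support via Lemma~\ref{carath} (possible since $|M|\geq 2|\Theta|$), and your write-up should do the same before asserting that $\cstratr^\epsilon$ exists.
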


The proof constructs lower payoff bounds for S in any $\cred$-PBE by computing her payoff following viable deviations, and shows that these payoff bounds can be made arbitrarily close to $\hat w(\mu_0)$ as credibility becomes arbitrarily close to perfect. In brief, we consider the deviation to an experiment that would be approximately optimal under full S commitment when R breaks ties against S's interests. Using lower semicontinuity of $w$, we show S attains a payoff approximating her full-commitment payoff from this experiment as $\cred$ approaches $1$, because the belief a message induces is very nearby the full-commitment version of the same.

Finally, Propositions~\ref{generic}~and~\ref{Worst} immediately yield the following conclusion.

\begin{corollary}[Generic strong robustness]\label{cor: imperfect generic}
For any finite $A$ and $\Theta$, for all but a Lebesgue-null (and nowhere dense) set of R objectives $u_R\in\real^{A\times\Theta}$, and every S objective $u_S\in\real^A$, the full commitment value is strongly robust to partial credibility.
\end{corollary}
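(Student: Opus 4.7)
The plan is to combine Propositions~\ref{generic} and~\ref{Worst} directly; essentially no new work is required. First, I would invoke Proposition~\ref{generic} to obtain an open dense full-measure set $\mathcal U_R \subseteq \real^{A\times\Theta}$ such that, whenever $u_R\in\mathcal U_R$, S has a unique equilibrium payoff---and crucially this set is chosen uniformly over all priors and all S objectives. Note the ``nowhere dense'' qualifier on the complement claimed in the corollary follows from openness and density of $\mathcal U_R$ itself.

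Next, I would fix any $u_R\in\mathcal U_R$, any S objective $u_S\in\real^A$, and any prior. By Proposition~\ref{perprior}, the equilibrium S payoff set is exactly $[\hat w(\mu_0),\hat v(\mu_0)]$, so uniqueness is equivalent to the identity $\hat w(\mu_0)=\hat v(\mu_0)$. In particular, for $u_R\in\mathcal U_R$ this equality holds.

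Finally, I would apply Proposition~\ref{Worst}, which yields $\lim_{\cred\nearrow 1} w^*_\cred(\mu_0)=\hat w(\mu_0)$. Chaining the two equalities gives $\lim_{\cred\nearrow 1} w^*_\cred(\mu_0)=\hat v(\mu_0)$, which is exactly strong robustness of the full commitment value as stated in Proposition~\ref{Worst}. There is no real obstacle here---the work has been done in the two preceding propositions, and the only thing to be careful about is that Proposition~\ref{generic} genuinely produces a set $\mathcal U_R$ that works for \emph{every} prior and \emph{every} $u_S$ simultaneously, so that we are free to quantify over $u_S$ (and implicitly over $\mu_0$) after fixing $u_R\in\mathcal U_R$.
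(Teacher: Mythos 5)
Your proposal is correct and matches the paper's own (one-line) proof exactly: the paper simply observes that Propositions~\ref{generic} and~\ref{Worst} immediately yield the corollary, which is precisely the chaining you describe. Your added care about $\mathcal U_R$ being uniform over priors and sender objectives, and about the nowhere-dense complement following from openness and density of $\mathcal U_R$, is exactly the right bookkeeping.
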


Thus, for most payoff specifications, S obtains essentially her Bayesian persuasion payoff in any equilibrium, even if her ability to commit to the information she provides is slightly imperfect.

\end{spacing}

\appendix

\section{Proofs}

Before proceeding to formal proofs, we review for convenience several key notations. \begin{eqnarray*}
A^*_R\colon \DT &\rightrightarrows& A \\
\mu &\mapsto& \argmax_{a\in A} \int u_R(a,\cdot) \ddd\mu \\
V\colon \DT &\rightrightarrows& \real \\
\mu &\mapsto& \co\left\{ \int u_S(a,\cdot) \ddd\mu\colon\, a\in A^*_R(\mu) \right\} \\
v\colon \DT &\to& \real \\
\mu &\mapsto& \max V(\mu) \\
w\colon \DT &\to& \real \\
\mu &\mapsto& \min V(\mu) \\
\RP\colon \DT &\rightrightarrows& \Delta\DT \\
\mu &\mapsto& \left\{ p\in\Delta\DT\colon\, \int \tilde\mu \ddd p(\tilde\mu)=\mu \right\} \\
\hat v\colon \DT &\to& \real \\
\mu &\mapsto& \max_{p\in\RP(\mu)} \int v \ddd p \\
\hat w\colon \DT &\to& \real \\
\mu &\mapsto& \sup_{p\in\RP(\mu)} \int w \ddd p.
\end{eqnarray*}
Note (appealing to Berge's theorem and to continuity of the barycenter map) that all three correspondences are nonempty-compact-valued and upper hemicontinuous, the functions $v$ and $\hat v$ are upper semicontinuous, and the function $w$ is lower semicontinuous.

\subsection{Proofs for Section~\ref{sec:fullset}}
The following lemma shows information policies can be replaced with payoff-equivalent ones of small support when the state space is finite.
\begin{lemma}\label{carath}
If $\Theta$ is finite, then any $\bar\mu\in\DT$, Borel $D\subseteq \DT$, bounded measurable $f\colon\DT\to\real$ and $p\in\RP(\bar\mu)\cap\Delta D$ admit some $q\in\RP(\bar\mu)\cap\Delta D$ with $\supp(q)$ affinely independent (hence of cardinality no greater than $|\Theta|$) and $\int f \ddd q \geq \int f \ddd p$. 
\end{lemma}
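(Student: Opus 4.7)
The plan is to proceed in two stages. First, I would replace $p$ by a finitely supported element of $\RP(\bar\mu)\cap\Delta D$ with the same value of $\int f$; then I would prune that finite support down to an affinely independent set while only weakly decreasing $\int f$. Since $\DT$ is a simplex of affine dimension $|\Theta|-1$, any affinely independent subset of $\DT$ contains at most $|\Theta|$ points, so this two-stage reduction delivers the cardinality bound automatically.

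For the first stage, I would invoke the classical measure-theoretic Carath\'{e}odory-type theorem (due to Richter and Rogosinski, with a modern extreme-point exposition via Winkler) applied to the standard Borel space $D$, the probability measure $p$, and the bounded measurable map $h\colon D\to\real^{|\Theta|+1}$ given by $h(\mu):=(\mu(\theta_1),\ldots,\mu(\theta_{|\Theta|}),f(\mu))$. This yields some $q_0\in\Delta D$ with $|\supp q_0|\leq |\Theta|+1$ and $\int h\ddd q_0=\int h\ddd p$; in particular $q_0\in\RP(\bar\mu)$ and $\int f\ddd q_0=\int f\ddd p$. The bound $|\Theta|+1$ (rather than $|\Theta|+2$) exploits the affine dependence $\sum_\theta\mu(\theta)=1$ among the first $|\Theta|$ coordinates of $h$, which reduces the effective affine dimension of $h(D)$ by one.

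For the second stage, I would iteratively prune. Writing $\supp q_0=\{\mu_1,\ldots,\mu_k\}$ with weights $w_i:=q_0(\{\mu_i\})>0$, whenever the $\mu_i$ are affinely dependent in $\DT$ I would pick coefficients $c_1,\ldots,c_k\in\real$, not all zero, satisfying $\sum_i c_i=0$ and $\sum_i c_i\mu_i=0$; after a possible global sign flip I may assume $\sum_i c_i f(\mu_i)\geq 0$. Setting $\epsilon^\ast:=\min\{w_i/(-c_i):c_i<0\}>0$, the measure $q_1:=\sum_i(w_i+\epsilon^\ast c_i)\delta_{\mu_i}$ again lies in $\RP(\bar\mu)\cap\Delta D$, satisfies $\int f\ddd q_1\geq \int f\ddd q_0$, and has strictly smaller support (the weight of the minimizing index vanishes at $\epsilon^\ast$). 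Iterating at most $k$ times, the construction terminates at some $q\in\RP(\bar\mu)\cap\Delta D$ with $\int f\ddd q\geq \int f\ddd p$ and affinely independent support.

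The main obstacle I anticipate is the first stage: descending from an arbitrary (possibly continuously supported) measure $p$ to a finitely supported one that matches the prescribed moments. This is exactly the content of the Richter/Rogosinski theorem, which is standard but non-trivial and typically proved by identifying the desired measure with an extreme point of the weak-$\ast$ closed convex set of probability measures satisfying the given moment equalities, then ruling out infinite support via a signed-measure perturbation. Once finite support is secured, the second stage reduces to the elementary finite-dimensional linear algebra sketched above, with termination guaranteed by the finite affine dimension of $\DT$.
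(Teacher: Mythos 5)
Your proof is correct, but it travels a genuinely different road from the paper's. The paper applies Choquet's theorem to the convex compact metrizable set $\RP(\bar\mu)$ to write $p$ as the barycenter of a measure $Q$ concentrated on $\ext[\RP(\bar\mu)]$, invokes the known characterization of those extreme points as precisely the elements of $\RP(\bar\mu)$ with affinely independent support, checks that $Q$ is carried by $\Delta D$, and then concludes by an averaging argument: since $\int\left[\int f\,\dd q\right]\dd Q(q)=\int f\,\dd p$, some extreme point $q$ in the support of $Q$ must satisfy $\int f\,\dd q\geq\int f\,\dd p$. You instead first invoke the Richter--Rogosinski moment theorem on the vector of moments $(\mu(\theta_1),\dots,\mu(\theta_{|\Theta|}),f(\mu))$ to pass to a finitely supported $q_0\in\RP(\bar\mu)\cap\Delta D$ with $\int f\,\dd q_0=\int f\,\dd p$ exactly, and then prune to affine independence by an elementary signed-perturbation argument (your $\epsilon^\ast$ step is sound: $\sum_i c_i=0$ with the $c_i$ not all zero forces some $c_i<0$, so $\epsilon^\ast$ is well defined and kills at least one atom, and the iteration terminates). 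What the paper's route buys is that affine independence comes for free in one shot from the extreme-point characterization of $\RP(\bar\mu)$, at the price of quoting Choquet's theorem and that characterization; what your route buys is a first stage that preserves $\int f$ exactly rather than merely weakly dominating it, plus a second stage that is entirely finite-dimensional linear algebra---though your stage one rests on Richter--Rogosinski, which is itself typically proved by extreme-point methods, so the two arguments are cousins at bottom. Both deliver the stated cardinality bound since affinely independent subsets of the $(|\Theta|-1)$-dimensional simplex $\DT$ have at most $|\Theta|$ elements.
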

\begin{proof}
The set $\RP(\bar\mu)$ is convex compact metrizable, and so Choquet's theorem yields some $Q\in\Delta[\RP(\bar\mu)]$ with barycenter $p$ such that $Q$ is supported on the extreme points of $\RP(\bar\mu)$. But this set $\ext[\RP(\bar\mu)]$ consists exactly of those $q\in \RP(\bar\mu)$ with affinely independent support.

By definition of the barycenter, $\iint g \ddd q \ddd Q(q)=\int g \ddd p$ for every continuous $g\colon\DT\to\real$. However, because the barycenter of this $Q$ is unique, it follows that $p(B)=\int q(B)\ddd Q(q)$ for every Borel $B\subseteq\DT$. Hence, $Q(\Delta D)=1$ because $p(D)=1$, and $\iint g \ddd q \ddd Q(q)=\int g \ddd p$ for every bounded measurable $g\colon\DT\to\real$---in particular for $g=f$.

Letting $\mathcal J\coloneqq\ext[\RP(\bar\mu)]\cap\Delta D$, we have $Q(\mathcal J)=1$. Therefore, $$0 = \iint f \ddd q \ddd Q(q) - \int f \ddd p = \int_{\mathcal J} \left[\int f \ddd q- \int f \ddd p\right] \ddd Q(q).$$ 
So the integrand is somewhere nonnegative: some $q\in\mathcal J$ has $\int f \ddd q\geq \int f \ddd p$.
\end{proof}

Now, we prove the characterization of all S equilibrium payoffs.

\begin{proof}[Proof of Proposition~\ref{perprior}]
To begin, we recall some well-known facts about experiments and Bayesian updating---which collectively tell us that S choosing from $\cstratrs$ and choosing from $\RP(\mu_0)$ are equivalent formalisms.  First, any experiment $\cstratr\in\cstratrs$ admits some compatible belief map, that is, some measurable $\tilde\beliefm=\tilde\beliefm_\cstratr\colon M\to\DT$ such that every Borel $\hat M\subseteq M$ and Borel $\hat\Theta \subseteq \Theta$ have $\int_\Theta \int_{\hat M} \tilde\beliefm(\hat\Theta|m,\cstratr) \ddd\cstratr(m|\theta) \ddd\mu_0(\theta) = \int_{\hat\Theta} \cstratr(\hat M|\theta) \ddd\mu_0(\theta).$ Second, given $\cstratr\in\cstratrs$ if we define the belief distribution $p_{\cstratr,\tilde\beliefm} \in \Delta\DT$ via $p_{\cstratr,\tilde\beliefm}(D)\coloneqq\int_\Theta  \cstratr\left( \tilde\beliefm^{-1}(D) \ \middle|\ \theta\right) \ddd\mu_0(\theta)$ for each Borel $D\subseteq\DT$, then $p_{\cstratr,\tilde\beliefm}=p_{\cstratr,\tilde\beliefm'}$ for any two such compatible $\tilde\beliefm$ and $\tilde\beliefm'$.  We therefore refer to the associated belief distribution simply as $p_\cstratr$.  Third, every $\cstratr\in\cstratrs$ has $p_\cstratr\in\RP(\mu_0)$.  Fourth, every $p\in\RP(\mu_0)$ with $|\supp(p)|\leq|M|$ admits some $\cstratr_p\in\cstratrs$ such that $p_{\cstratr_p}=p$. 

Now we proceed to show $s\in[\hat w(\mu_0), \hat v(\mu_0)]$ is necessary and sufficient for $s$ to be an equilibrium S payoff.

First, to see the condition is necessary, fix an arbitrary equilibrium $\langle\cstrat,\rstrat,\beliefm\rangle$, and let $s\in\real$ be the induced S payoff; we will show $s\in [\hat w(\mu_0), \hat v(\mu_0)]$.  For any $\cstratr\in\cstratrs$ and $m\in M$ the R optimality condition implies $\rstrat\left(A^*_R\left(\beliefm(\cdot|m,\cstratr)\right) \ \middle|\  m,\cstratr \right)=1$, so that $\int_A u_S(a,\theta) \ddd\rstrat\left(a\middle|m,\cstratr\right) \in V\left(\beliefm(\cdot|m,\cstratr)\right)$. Therefore, any $\cstratr\in\cstratrs$ has 
\begin{eqnarray*}
\int_\Theta \int_M \int_A u_S(a,\theta) \ddd\rstrat(a|m,\cstratr) \ddd\cstratr(m|\theta)\ddd\mu_0(\theta) &\in&  \int_\Theta \int_M V\left(\beliefm(\cdot|m,\cstratr)\right)
\ddd\cstratr(m|\theta)\ddd\mu_0(\theta) \\
&=& \int_\DT V \ddd p_\cstratr \\
&=& \left[ \int_\DT w \ddd p_\cstratr, \ \int_\DT v \ddd p_\cstratr\right] \\
&\subseteq& \left[ \int_\DT w \ddd p_\cstratr, \ \hat v(\mu_0)\right].
\end{eqnarray*}
Hence, $s\leq \hat v(\mu_0)$.  Moreover, for any $p\in\RP(\mu_0)$ if $\Theta$ is infinite, and for any $p\in\RP(\mu_0)$ such that $|\supp(p)|\leq|\Theta|$ if $\Theta$ is finite, taking $\tilde\cstrat=\cstratr_p$ implies (by S rationality) \begin{eqnarray*}
&& \int_\Theta \int_M \int_A u_S(a,\theta) \ddd\rstrat(a|m,\cstrat) \ddd\cstrat(m|\theta)\ddd\mu_0(\theta) \\
&\geq& \int_\Theta \int_M \int_A u_S(a,\theta) \ddd\rstrat(a|m,\tilde\cstrat) \ddd\cstratr(m|\theta)\ddd\mu_0(\theta) \\
&\geq& \int_\DT w \ddd p.
\end{eqnarray*}
Applying this observation to every such $p$ (and applying Lemma~\ref{carath} if $\Theta$ is finite) 
implies $s\geq\hat w(\mu_0)$.

Conversely, take any $s\in [\hat w(\mu_0), \hat v(\mu_0)]$.  
Letting $p_1\in\RP(\mu_0)$ with $\int_{\DT} v \ddd p_v=\hat v(\mu_0)$ and $|\supp(p_v)|\leq|\Theta|$ if $\Theta$ is finite---which exists by Lemma~\ref{carath}---define $p_\lambda\coloneqq\int \delta_{\lambda\mu+(1-\lambda)\mu_0}\ddd p_1(\mu)\in \RP(\mu_0)$ for each $\lambda\in[0,1]$; observe $|\supp(p_\lambda)|\leq|\supp(p_1)|\leq|M|$. As $\lambda\mapsto p_\lambda$ is continuous, it follows that the $\lambda\mapsto \int V \ddd p_\lambda$ is nonempty-compact-convex-valued and upper hemicontinuous because $V$ is. Moreover, $$\int V \ddd p_0= V(\mu_0)\ni w(\mu_0) \leq s \leq \hat v(\mu_0)\in \int V \ddd p_1.$$ The intermediate value theorem for correspondences \citep[e.g., Lemma 2 from][]{de2008axiomatization} therefore delivers some $\lambda\in[0,1]$ such that $s\in \int V \ddd p_\lambda$. Some measurable $\zeta\colon\DT\to[0,1]$ then exists such that $s = \int \left[(1-\zeta) w + \zeta v\right] \ddd p_\lambda$. 
By the measurable maximum theorem \cite[Theorem 18.19 from][]{aliprantis2006infinite}, a pair of measurable functions $\alpha_w,\alpha_v\colon\DT\to\Delta A$ exist such that, each $\mu\in\DT$ has $\int_{A\times\Theta} u_S \ddd \left[\alpha_w(\cdot|\mu) \otimes\mu\right] = w(\mu)$, $\int_{A\times\Theta} u_S \ddd \left[\alpha_v(\cdot|\mu) \otimes\mu\right] = v(\mu)$, and $\alpha_w\left( A^*_R(\mu) \middle|\mu\right)=\alpha_v\left( A^*_R(\mu) \middle|\mu\right)=1$.  
With these objects in hand, we can define our candidate $\cstrat:
\Theta\to\Delta M$, $\rstrat\colon M\times\cstratrs\to\Delta A$, and $\beliefm\colon M\times\cstratrs\to\DT$ via \begin{eqnarray*}
\cstrat(\cdot|
\theta) &\coloneqq & \cstratr_{p_\lambda} \\
\beliefm(\cdot|m,\cstratr)&\coloneqq & \tilde\beliefm_\cstratr(\cdot|m) \\
\rstrat(\cdot|m,\cstratr) &\coloneqq &\begin{cases}
(1-\zeta)\alpha_w \left(\cdot\ \middle| \ \beliefm(\cdot|m,\cstratr) \right) + \zeta\alpha_v \left(\cdot\ \middle| \ \beliefm(\cdot|m,\cstratr) \right) & : \ \cstratr=\cstratr_{p_\lambda} \\
\alpha_w\left(\cdot\ \middle| \ \beliefm(\cdot|m,\cstratr) \right)& : \ \cstratr\neq\cstratr_{p_\lambda}.
\end{cases}
\end{eqnarray*}
It is immediate from the construction that all three maps are measurable and that R rationality and the Bayesian property are both satisfied.  Moreover, direct computation shows that choosing experiment $\cstratr\in\cstratrs$ gives S a continuation payoff of $$\int_\Theta \int_M \int_A u_S(a,\theta) \ddd\rstrat(a|m,\cstratr) \ddd\cstratr(m|\theta)\ddd\mu_0(\theta)=\begin{cases}
s &: \ \cstratr=\cstratr_{p_\lambda} \\
\int_{\DT} w \ddd p_\cstratr &: \ \cstratr\neq\cstratr_{p_\lambda} \\
\end{cases}
$$
Therefore, S gets payoff $s$ if the triple is an equilibrium.  Finally, the triple is indeed an equilibrium: In particular, S rationality is confirmed because $s\geq \hat w(\mu_0) \geq \int w \ddd p_\cstratr$ for every alternative $\cstratr\in\cstratrs$.
\end{proof}

\subsection{Proofs for Section~\ref{sec:unique}}

\begin{lemma}\label{lem: hat w is idempotent}
The function $\hat w\colon\DT\to\real$ is lower semicontinuous (hence measurable) and concave, and every $\bar p\in\RP(\prior)$ has $\hat w(\prior)\geq\int \hat w\ddd \bar p$.
\end{lemma}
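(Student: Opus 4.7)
The plan is first to derive concavity from a direct mixture argument, then establish lower semicontinuity using openness of the barycenter map, and finally prove the Jensen-type inequality via measurable selection (which uses measurability of $\hat w$ obtained from lower semicontinuity). For concavity: given $\mu_1,\mu_2\in\DT$ and $\lambda\in[0,1]$, any $p_i\in\RP(\mu_i)$ give a mixture $\lambda p_1+(1-\lambda)p_2\in\RP(\lambda\mu_1+(1-\lambda)\mu_2)$ whose $w$-integral equals $\lambda\int w\ddd p_1+(1-\lambda)\int w\ddd p_2$; taking suprema over $p_i$ yields the concavity inequality.

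For lower semicontinuity, I would show each strict superlevel set $\{\mu\in\DT : \hat w(\mu)>s\}$ is open. The key identity is $\{\hat w>s\}=b(U_s)$, where $b\colon\Delta\DT\to\DT$ is the (continuous, affine) barycenter map and $U_s\coloneqq\{p\in\Delta\DT : \int w\ddd p>s\}$. Since $w$ is bounded and lower semicontinuous on $\DT$, the functional $p\mapsto\int w\ddd p$ is lower semicontinuous on the compact metric space $\Delta\DT$, so $U_s$ is open. Invoking the standard fact that $b$, as a continuous affine surjection between compact convex subsets of locally convex spaces, is an open map yields $b(U_s)$ open and hence $\hat w$ lower semicontinuous. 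This openness of $b$ is the main obstacle: for finite $\Theta$, it is an easy consequence of finite-dimensional perturbation arguments via barycentric coordinates; for general compact metric $\Theta$, I would invoke the classical result from Choquet-type theory for continuous affine surjections between compact convex sets, or, equivalently, verify lower hemicontinuity of $\RP=b^{-1}$ by a direct preimage construction.

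For the Jensen-type inequality, fix $\bar p\in\RP(\prior)$ and $\epsilon>0$. Measurability of $\hat w$ (from lower semicontinuity) ensures the correspondence $\mu\rightrightarrows\{q\in\RP(\mu) : \int w\ddd q\geq\hat w(\mu)-\epsilon\}$ is nonempty-valued with Borel graph (using that $b$ is continuous, $p\mapsto\int w\ddd p$ is Borel, and $\hat w$ is Borel), so a measurable selection theorem furnishes a measurable map $\mu\mapsto q_\mu\in\RP(\mu)$ with $\int w\ddd q_\mu\geq\hat w(\mu)-\epsilon$. Defining $q^*\in\Delta\DT$ via $q^*(B)\coloneqq\int q_\mu(B)\ddd\bar p(\mu)$ for Borel $B\subseteq\DT$, iterated expectations give $b(q^*)=\int b(q_\mu)\ddd\bar p(\mu)=\int\mu\ddd\bar p(\mu)=\prior$, hence $q^*\in\RP(\prior)$. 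Consequently $\hat w(\prior)\geq\int w\ddd q^*=\iint w\ddd q_\mu\ddd\bar p(\mu)\geq\int\hat w\ddd\bar p-\epsilon$, and sending $\epsilon\to 0$ delivers the claim.
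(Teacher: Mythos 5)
Your proof is correct and follows essentially the same route as the paper's: lower semicontinuity via openness of the barycenter map (equivalently, lower hemicontinuity of $\RP$), and the Jensen-type inequality via an $\varepsilon$-optimal measurable selection averaged against $\bar p$; the only cosmetic difference is that you prove concavity directly by mixing information policies, whereas the paper deduces it from the Jensen-type inequality applied to finitely supported $\bar p$. One caution: openness of a continuous affine surjection between compact convex sets is \emph{not} automatic in general, so you should rely on the specific theorem on openness of the barycentre map (the paper cites O'Brien, 1976) or carry out your proposed direct verification of lower hemicontinuity of $b^{-1}$ rather than appeal to a general affine open-mapping principle.
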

\begin{proof}
First, the function $\Delta\DT\to\real$ given by $p\mapsto \int w\ddd p$ is lower semicontinuous because $w$ is. Moreover, the barycenter map $\Delta\DT\to\DT$ is open \cite[][Corollary 1]{o1976openness}, meaning $\RP$ is lower hemicontinuous. It follows from \cite[][Lemma 17.29]{aliprantis2006infinite} that $\hat w$ is lower semicontinuous. In the next paragraph we establish the last assertion in the lemma's statement. Specializing that assertion to the case where $\bar p$ has finite support (and varying $\mu_0$) delivers that $\hat w$ is concave. The lemma follows.

Fix an arbitrary information policy $\bar p\in\RP(\prior)$ and $\varepsilon>0$; we want to show $\hat w(\prior)\geq \int \hat w \ddd \bar p - \varepsilon$. To that end, define the correspondence 
\begin{eqnarray*}
\Phi\colon \DT &\rightrightarrows& \Delta\DT \\
\mu &\mapsto& \left\{p\in\RP(\mu)\colon\, \int w \ddd p \geq \hat w(\mu)-\varepsilon\right\}.
\end{eqnarray*}
This correspondence is nonempty-valued by definition of $\hat w$, and has measurable graph because $\hat w$ is measurable and the barycenter map $p\mapsto\int\mu\ddd p(\mu)$ is continuous (hence measurable). Therefore, Corollary 18.27 from \cite{aliprantis2006infinite} yields a measurable function $\varphi\colon\DT\to\Delta\DT$ with $\bar p\{\varphi\in\Phi\}=1$. Because $\bar p\in\RP(\prior)$ and $\bar p\{\varphi\in\RP\}=1$, we know $p\coloneqq\int\varphi\ddd\bar p$ is in $\RPP$. Therefore, $$\hat w(\prior)\geq \int w\ddd p = \iint w \ddd\varphi(\cdot|\mu)\ddd\bar p(\mu) \geq\int (\hat w -\varepsilon)\ddd\bar p=\int \hat w \ddd\bar p-\varepsilon$$
The lemma follows. 
\end{proof}

\begin{proof}[Proof of Proposition~\ref{infoselect}]
Let $\RP_{D}(\prior)$ be the set of all information policies  $p \in \RP(\prior)$ such that $p(\tilde D) = 1$ for some measurable $\tilde D \subseteq D$. Because $D$ is persuasion sufficient and $\hat w|_D\geq v|_D$, and by Lemma~\ref{lem: hat w is idempotent}, we have:
$$\hat v(\prior)=\sup_{p\in\RP_{D}(\prior)} \int v\ddd p\leq\sup_{p\in\RP_{D}(\prior)} \int \hat w\ddd p\leq \hat w(\prior).$$
The proposition follows.
\end{proof}

\begin{proof}[Proof of Corollary~\ref{allprior}]
By Proposition~\ref{perprior}, it suffices to show that $\hat w = \hat v$ if and only if $\hat w\geq v$.  As it is immediate that $\hat v \geq \hat w$ and $\hat v \geq v$, we need only see that $\hat w \geq \hat v$ if $\hat w\geq v$. 
But this result follows directly from Proposition~\ref{infoselect}, because $\DT$ is vacuously persuasion sufficient.
\end{proof}

\begin{lemma}\label{lem: heart of the local PUBR theorem}
Suppose $D\subseteq\DT$ is measurable, $p\in\RPP$ has $p(D)=1$, and $f\colon\DT\to\real$ is bounded and measurable. Suppose every $\bar\mu\in D$ admits some $\mu\in\DT$ and $\gamma>0$ such that $\gamma\mu\leq\bar\mu$ and all sufficiently small $\lambda\in(0,1]$ have
$\hat w\left(\lambda\mu+(1-\lambda)\bar\mu\right)\geq f(\bar\mu)$. Then $\hat w(\mu_0)\geq \int f\ddd p$.
\end{lemma}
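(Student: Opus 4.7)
The plan is to reduce the claim to a pointwise inequality on $D$, and then apply Lemma~\ref{lem: hat w is idempotent} to integrate. Specifically, I would show that the hypothesis already implies $\hat w(\bar\mu)\geq f(\bar\mu)$ for every $\bar\mu\in D$. Once this is established, because $\hat w$ is measurable (indeed lower semicontinuous) and bounded (since $w$ is bounded, hence so is $\hat w$), and $p(D)=1$, monotonicity of the integral together with Lemma~\ref{lem: hat w is idempotent} immediately gives
\[
\hat w(\mu_0)\geq\int\hat w\ddd p\geq\int f\ddd p,
\]
which is the desired conclusion.

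The core pointwise step is as follows. Fix $\bar\mu\in D$ and the associated $\mu\in\DT$, $\gamma>0$ from the hypothesis. Since $\gamma\mu\leq\bar\mu$ and both measures have total mass one, $\gamma\leq 1$. If $\gamma=1$, then $\mu=\bar\mu$, so $\lambda\mu+(1-\lambda)\bar\mu=\bar\mu$ for every $\lambda$, and the hypothesis reduces directly to $\hat w(\bar\mu)\geq f(\bar\mu)$. Otherwise $\gamma\in(0,1)$, and we may write $\bar\mu=\gamma\mu+(1-\gamma)\nu$ with $\nu\coloneqq(\bar\mu-\gamma\mu)/(1-\gamma)\in\DT$. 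A direct rearrangement then yields
\[
\lambda\mu+(1-\lambda)\bar\mu=[\gamma+(1-\gamma)\lambda]\mu+(1-\gamma)(1-\lambda)\nu
\]
for every $\lambda\in[0,1]$, so the perturbation traces the segment joining $\nu$ and $\mu$, with parameter $t(\lambda)\coloneqq\gamma+(1-\gamma)\lambda$ approaching $\gamma$ from above as $\lambda\to 0^+$.

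The key observation is that $h\colon[0,1]\to\real$ defined by $h(t)\coloneqq\hat w(t\mu+(1-t)\nu)$ inherits concavity from $\hat w$ (Lemma~\ref{lem: hat w is idempotent}) and is real-valued (as $\hat w$ is bounded). A real-valued concave function on a real interval is continuous on the interior, so $h$ is continuous at $t=\gamma\in(0,1)$. Combining this continuity with the hypothesis $\hat w(\lambda\mu+(1-\lambda)\bar\mu)\geq f(\bar\mu)$ for all sufficiently small $\lambda>0$ and passing to the limit $\lambda\to 0^+$ gives
\[
\hat w(\bar\mu)=h(\gamma)=\lim_{\lambda\to 0^+}h(t(\lambda))=\lim_{\lambda\to 0^+}\hat w(\lambda\mu+(1-\lambda)\bar\mu)\geq f(\bar\mu).
\]

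The main conceptual point to verify carefully is that the condition $\gamma>0$ with $\gamma\mu\leq\bar\mu$ genuinely positions $\bar\mu$ in the relative interior of a nondegenerate line segment in $\DT$, which is exactly what activates continuity of the (merely lower semicontinuous) concave function $\hat w$ along that direction. Notably, no measurable selection is required: the desired bound is pointwise on $D$, and measurability of the set $\{\hat w\geq f\}$ follows from the measurability of $\hat w$ and $f$, so Lemma~\ref{lem: hat w is idempotent} handles the integration step.
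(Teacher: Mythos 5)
Your proof is correct and follows essentially the same route as the paper's: reduce to the pointwise bound $\hat w|_D\geq f|_D$ via Lemma~\ref{lem: hat w is idempotent}, then exploit the decomposition $\bar\mu=\gamma\mu+(1-\gamma)\nu$ together with concavity of $\hat w$. The paper writes out the two-sided concavity inequality explicitly (bounding $\hat w(\nu)$ below by $\min u_S(A\times\Theta)$ and letting $\lambda\to0$), which is just the hands-on version of your appeal to continuity of a real-valued concave function at an interior point of the segment.
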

\begin{proof}
If we establish $\hat w|_D\geq f|_D$, the lemma will follow from Lemma~\ref{lem: hat w is idempotent}, because we will have $\hat w(\prior)\geq\int \hat w\ddd p \geq \int f \ddd p$. So take any $\bar\mu\in D$, with a view to showing $\hat w(\bar\mu)\geq f(\bar\mu)$.

Let $\mu\in\DT$ and $\gamma>0$ be as guaranteed by the lemma's statement.
Assuming without loss that $\gamma\in(0,1)$, let $\mu'\coloneqq\tfrac{\bar\mu-\gamma\mu}{1-\gamma}\in\DT$ so that $\bar\mu=\gamma\mu+(1-\gamma)\mu'$. Then any sufficiently small $\gamma\in(0,1]$ has \begin{eqnarray*}
    \bar\mu &=& \tfrac{\gamma}{\gamma+\lambda(1-\gamma)}\left[\lambda\mu+(1-\lambda)\bar\mu\right]+\tfrac{\lambda(1-\gamma)}{\gamma+\lambda(1-\gamma)} \mu' \\
    \implies\hat w(\bar\mu) &\geq& \tfrac{\gamma}{\gamma+\lambda(1-\gamma)}\hat w\left(\lambda\mu+(1-\lambda)\bar\mu\right)+\tfrac{\lambda(1-\gamma)}{\gamma+\lambda(1-\gamma)} \hat w(\mu') \text{ (by Lemma~\ref{lem: hat w is idempotent})}\\
    &\geq& \tfrac{\gamma}{\gamma+\lambda(1-\gamma)}f(\bar\mu)+\tfrac{\lambda(1-\gamma)}{\gamma+\lambda(1-\gamma)} \min u_S(A\times\Theta) \\
    &\to& f(\bar\mu) \text{ as } \lambda\to0.
 \end{eqnarray*}
Thus, $\hat w(\bar\mu)\geq f(\bar\mu)$, delivering the lemma.
\end{proof}

\begin{lemma}\label{lem: heart of the prior PUBR theorem}
Suppose $D\subseteq\DT$ is measurable, $p\in\RPP$ has $p(D)=1$, and $f\colon\DT\to\real$ is bounded and measurable. Suppose every $\bar\mu\in D$ admits some $\mu\in\DT$ and $\gamma\in(0,1)$ such that $\gamma\mu\leq\prior$ and all sufficiently small $\lambda\in(0,1]$ have $\hat w\left(\lambda\mu+(1-\lambda)\bar\mu\right)>f(\bar\mu)$. Then, $\hat w(\mu_0)\geq \int f\ddd p$.
\end{lemma}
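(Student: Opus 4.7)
The plan is to apply Lemma~\ref{lem: hat w is idempotent} by constructing a sequence of information policies $p_n^\ast \in \RPP$ for which $\int\hat w\ddd p_n^\ast$ is asymptotically at least $\int f\ddd p$. Because we only have $\gamma\mu \le \prior$ rather than $\gamma\mu \le \bar\mu$, the posterior-by-posterior splitting of Lemma~\ref{lem: heart of the local PUBR theorem} is unavailable; so each $p_n^\ast$ will be built by globally perturbing $p$ and then correcting the resulting barycenter drift to keep Bayes-plausibility at $\prior$.

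Using standard measurable selection, obtain Borel maps $\mu\colon D\to\DT$, $\gamma\colon D\to(0,1)$, and $\lambda_0\colon D\to(0,1]$ encoding the hypothesis. For each $n\in\mathbb N$, truncate by setting $E_n := \{\bar\mu\in D : \gamma(\bar\mu)\ge 1/n \text{ and } \lambda_0(\bar\mu)\ge 1/n\}$, so that $E_n$ is measurable and $p(E_n)\uparrow 1$. Fix $\lambda_n := 1/n^2$ and define $q_n\in\Delta\DT$ as the ``partial pushforward'' of $p$ that replaces each $\bar\mu\in E_n$ by the perturbed belief $z_n(\bar\mu) := \lambda_n\mu(\bar\mu)+(1-\lambda_n)\bar\mu$, and leaves $\bar\mu\in D\setminus E_n$ unchanged. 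A direct computation yields that $q_n$ has barycenter $\prior+\lambda_n(\zeta_n-\kappa_n)$, where $\zeta_n:=\int_{E_n}\mu(\bar\mu)\ddd p(\bar\mu)$ and $\kappa_n:=\int_{E_n}\bar\mu\ddd p(\bar\mu)$. The crucial density bound $\mu(\bar\mu)\le n\prior$ on $E_n$ gives $\zeta_n\le n\prior$, and hence the Jordan positive part satisfies $(\zeta_n-\kappa_n)^+\le n\prior$; so $\tau_n := \prior-\tfrac{1}{n}(\zeta_n-\kappa_n)$ is a well-defined probability measure. Taking $\alpha_n:=n/(n+1)$, the identity $\alpha_n\lambda_n/(1-\alpha_n)=1/n$ ensures that $p_n^\ast:=\alpha_n q_n+(1-\alpha_n)\delta_{\tau_n}$ has barycenter $\prior$, so $p_n^\ast\in\RPP$.

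Applying Lemma~\ref{lem: hat w is idempotent} and invoking the hypothesis on $E_n$ (since $\lambda_n\le 1/n\le\lambda_0(\bar\mu)$ there),
\[
\hat w(\prior)\ge\int\hat w\ddd p_n^\ast\ge\alpha_n\int_{E_n}f\ddd p+\alpha_n\int_{D\setminus E_n}\hat w\ddd p+(1-\alpha_n)\hat w(\tau_n).
\]
Sending $n\to\infty$: $\alpha_n\to 1$ and $p(E_n)\to 1$, and by boundedness of $f$ and $\hat w$ the second and third right-hand terms vanish, delivering $\hat w(\prior)\ge\int f\ddd p$. The main obstacle is that $1/\gamma$ need not be $p$-integrable, so the density of $\int\mu\ddd p$ against $\prior$ may be unbounded, and no global bounded-density correction of $(z_n)_\ast p$ can restore Bayes-plausibility at $\prior$. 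The truncation circumvents this by securing the uniform bound $\mu\le n\prior$ on $E_n$ at the price of a vanishing loss on $D\setminus E_n$; the scale $\lambda_n=1/n^2$ is chosen fast enough that $\alpha_n\to 1$ yet slow enough ($\lambda_n\le 1/n$) for the hypothesis to apply throughout $E_n$.
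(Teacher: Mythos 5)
Your proof is correct, and it rests on the same two pillars as the paper's: Lemma~\ref{lem: hat w is idempotent} applied to a corrected policy in $\RPP$, and the observation that $\gamma\mu\leq\prior$ lets one absorb the barycenter drift caused by tilting each posterior toward its witness into a residual correction term that is still a genuine probability measure. The execution differs. The paper selects $(\varphi,\gamma)$ measurably, forms the perturbed posterior $\varphi_\lambda(\bar\mu)\propto(1-\lambda)\bar\mu+\lambda\gamma(\bar\mu)\varphi(\bar\mu)$ with a $\bar\mu$-dependent mixing weight, places residual mass $\lambda\xi$ on a single correcting belief $\mu_\lambda$, and passes to the limit $\lambda\to0$ via Fatou's lemma---which is precisely what lets it avoid any uniform lower bound on $\gamma$ or on the range of valid $\lambda$'s. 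You instead impose uniformity by truncating to $E_n=\{\gamma\geq 1/n,\ \lambda_0\geq 1/n\}$, use a single perturbation weight $\lambda_n=1/n^2$ on $E_n$, and close with dominated convergence; the identity $n\lambda_n=1/n$ is what makes the correction atom of weight $1/(n+1)$ restore Bayes plausibility, and the bound $\mu\leq n\prior$ on $E_n$ is what makes $\tau_n$ positive---the same role that $\gamma\varphi\leq\prior$ plays for the paper's $\mu_\lambda$. Both routes work; yours trades Fatou for a truncation, at the cost of needing the sets $E_n$ (equivalently, a measurable threshold map $\lambda_0$) to be measurable. That is the one place you wave your hands: ``standard measurable selection'' must deliver not only $(\mu,\gamma)$ but also measurability of the event ``$\hat w(\lambda\mu(\bar\mu)+(1-\lambda)\bar\mu)>f(\bar\mu)$ for all $\lambda\in(0,1/n]$,'' which is a priori an uncountable intersection. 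The paper devotes the final third of its proof to exactly this issue, reducing it to a countable intersection by exploiting convexity of $\{\hat w>f(\bar\mu)\}$ (a consequence of concavity of $\hat w$, again from Lemma~\ref{lem: hat w is idempotent}); the same device makes your $E_n$ measurable, so the omission is cosmetic rather than substantive.
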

\begin{proof}
Define the correspondences
\begin{eqnarray*}
F\colon D&\rightrightarrows&\DT \\
\bar\mu &\mapsto& \left\{\mu\in\DT\colon\, \hat w(\mu)>f(\bar\mu)\right\}\\
G\colon D&\rightrightarrows&\DT \\
\bar\mu &\mapsto& \left\{ \mu\in\DT\colon\, 
\lambda\mu+(1-\lambda)\bar\mu\in F(\bar\mu) \ \forall \text{ sufficiently small } \lambda\in(0,1]\right\}\\
H\colon D&\rightrightarrows&\DT\times(0,1) \\
\bar\mu &\mapsto& \left\{(\mu,\zeta)\in G(\bar\mu)\times(0,1)\colon\, \zeta\mu\leq\prior\right\}.
\end{eqnarray*}
By hypothesis, $H$ is nonempty-valued.  At the end of this proof, we shall show that $H$ has measurable graph. Before doing so, let us establish the lemma taking this fact for granted. 

Extend $H$ to $H\colon\DT\rightrightarrows\DT \times (0,1)$ by letting $H(\bar\mu)\coloneqq\DT\times(0,1)$ for every $\bar\mu\in(\DT)\setminus D$. Because $H$ has a measurable graph, the graph of the extended correspondence is also measurable, and so Corollary~18.27 from \cite{aliprantis2006infinite} delivers measurable functions $\varphi\colon\DT\to\DT$ and $\gamma\colon\DT\to(0,1)$ with $p\{(\varphi,\gamma)\in H\}=1$. Let $\xi\coloneqq\int(1-\gamma)\ddd p\in(0,1)$.

For any $\lambda\in(0,1]$, define the measurable function \begin{eqnarray*}
\varphi_\lambda\colon D &\to& \DT \\
\mu &\mapsto& \dfrac1{(1-\lambda)+\lambda\gamma(\mu)} \left[ (1-\lambda)\mu + \lambda\gamma(\mu)\varphi(\mu) \right],
\end{eqnarray*}
and define the measures\footnote{It is immediate from the definition of $\xi$ that $\mu_\lambda\in\ca(\Theta)$ has $\mu_\lambda(\Theta)=1$. To see the measure is positive, hence in $\DT$, observe that $p\in\RPP$ and $p\{(\varphi,\gamma)\in H\}=1$ implies
$$\lambda\xi \mu_\lambda 
= \prior - \int \left[ (1-\lambda)\mu + \lambda\gamma(\mu)\varphi(\mu) \right] \ddd p(\mu)
= \prior - \int \left[ (1-\lambda)\prior + \lambda\gamma\varphi \right] \ddd p
= \lambda\int \left( \prior - \gamma\varphi \right) \ddd p \geq0.
$$
}
$$\mu_\lambda\coloneqq \dfrac{1}{\lambda\xi}\left\{\prior - \int \left(1-\lambda+\lambda\gamma\right)\varphi_\lambda \ddd p \right\} \in\DT$$
and\footnote{Recall $\delta_\mu$ is the degenerate probability measure on $\{\mu\}$ for any $\mu$. Therefore, for any measurable $\tilde D\subseteq\DT$, we have
$p_\lambda(\tilde D)= \int \left[1-\lambda+\lambda\gamma(\mu)\right]\mathbf1_{\varphi_\lambda(\mu)\in \tilde D} \ddd p(\mu) + \lambda\xi\mathbf1_{\mu_\lambda \in \tilde D}.$
}
$$
p_\lambda\coloneqq \int \left(1-\lambda+\lambda\gamma\right)\delta_{\varphi_\lambda(\cdot)} \ddd p + \lambda\xi \delta_{\mu_\lambda} \in \Delta\DT.
$$
The definition of $\mu_\lambda$ ensures $p_\lambda\in\RPP$ for $\lambda\in(0,1]$. Therefore,\footnote{Note Fatou's lemma applies because $\left(1-\lambda+\lambda\gamma\right) w\circ\varphi_\lambda$ is bounded and $p$ is finite.}
\begin{eqnarray*}
\hat w(\prior)&\geq& \liminf_{\lambda\to0} \int \hat w \ddd p_\lambda \ \ \ \ \ \ \ \ \ \ \ \ \ \ \ \ \ \ \ \ \ \ \ \ \ \ \ \ \ \ \ \ \ \ \ \ \ \ \ \ \ \ \ \ \  \text{ (by Lemma~\ref{lem: hat w is idempotent})}\\
&\geq& \liminf_{\lambda\to0} \left[  \int \left(1-\lambda+\lambda\gamma\right) \hat w\circ\varphi_\lambda \ddd p + \lambda\xi \min u_S(A\times\Theta) \right] \\
&=& \liminf_{\lambda\to0} \int \left(1-\lambda+\lambda\gamma\right) \hat w\circ\varphi_\lambda \ddd p  \\
&\geq& \int \liminf_{\lambda\to0} \left[ \left(1-\lambda+\lambda\gamma\right) \hat w\circ\varphi_\lambda \right] \ddd p  \ \ \ \ \ \ \  \text{ (by Fatou's lemma)} \\
&\geq& \int f \ddd p \ \ \ \ \ \ \ \ \ \ \ \ \ \ \ \ \ \ \ \ \ \ \ \ \ \ \ \ \ \ \ \ \ \ \ \ \ \ \ \ \ \ \ \  \text{ (because $p\{\varphi\in G\}=1$)}.
\end{eqnarray*}
The lemma then follows if we show $H$ has measurable graph.

Toward measurability of the graph $\gr H$, let $X\coloneqq D\times\DT$. First note that $\gr F=\{(\bar\mu,\mu)\in X\colon\, f(\bar\mu)-\hat w(\mu)<0\}$ is measurable because $f$ is measurable and $\hat w$ is (by Lemma~\ref{lem: hat w is idempotent}) lower semicontinuous.
Next, consider the graph of $G$. For any $\bar\mu\in D$, that $F$ is (by Lemma~\ref{lem: hat w is idempotent}) convex-valued and $(0,1]=\co\left\{\tfrac1n\colon\, n\in\mathbb N\right\}$ implies $$G(\bar\mu)=\left\{ \mu\in\DT\colon\, 
\tfrac1n\mu+(1-\tfrac1n)\bar\mu\in F(\bar\mu) \ \forall \text{ sufficiently large } n\in\mathbb N\right\}.$$ 
Moreover, for each $n\in\mathbb N$, the map $\psi_n\colon X\to X$ given by $\psi_n( \bar\mu, \mu)\coloneqq \left( \bar\mu, \tfrac1n\mu+(1-\tfrac1n)\bar\mu \right)$ is continuous, hence measurable. Therefore,
\begin{eqnarray*}
\gr G&=& \left\{(\bar\mu,\mu)\in X\colon\, \left( \bar\mu, \tfrac1n\mu+(1-\tfrac1n)\bar\mu \right)\in\gr F \ \forall \text{ sufficiently large } n\in\mathbb N \right\} \\
&=&
\bigcup_{N=1}^\infty\bigcap_{n=N}^\infty
\psi_n^{-1}(\gr F),
\end{eqnarray*}
which is measurable. 
Finally, the graph
$$\gr H = \left[(\gr G)\times(0,1)\right]\cap\{(\bar\mu,\mu,\zeta)\in X\times (0,1)\colon\, \prior-\zeta\mu\geq0\}$$
is measurable too because $(\bar\mu,\mu,\zeta)\mapsto \prior-\zeta\mu$ is continuous and $\ca_+(\Theta)$ is closed in $\ca(\Theta)$.\footnote{
Suppose $(\eta_\beta)_\beta$ is a net in $\ca_+{(\Theta)}$ that weak* converges to some $\eta\in\ca(\Theta)$. For each $\beta$, some $\lambda_\beta\in\real_+$ and $\mu_\beta\in\DT$ have $\eta_\beta=\lambda_\beta\mu_\beta$. We want to show $\eta\geq0$. Passing to a subnet, we may assume (because $\DT$ is compact) $\mu_\beta$ converges to some $\mu\in\DT$; and because the constant function $\mathbf1$ is continuous, $\eta_\beta\to\eta$ implies $\lambda_\beta\to\lambda\coloneqq \int\mathbf1\ddd\eta$. Therefore, $\lambda\geq0$ and so $\eta=\lambda\mu\geq0$.
} The lemma follows.
\end{proof}

\begin{lemma} \label{lem: ubr preserved on convex combinations}
Let $\bar\mu,\mu\in\DT$ and $\varepsilon>0$. If $A^*_R(\mu)=\{a\}$ for some $a\in A^*_R(\bar\mu)$ with $\int u_S(a,\cdot)\ddd\bar\mu > v(\bar\mu)- \varepsilon$, then all sufficiently small $\lambda\in(0,1]$ have $w\left(\lambda\mu+(1-\lambda)\bar\mu\right)> v(\bar\mu)-\varepsilon$.
\end{lemma}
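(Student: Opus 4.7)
The plan is to show the stronger statement that $A^{*}_{R}(\mu_{\lambda}) = \{a\}$ for \emph{every} $\lambda \in (0,1]$, where $\mu_{\lambda} \coloneqq \lambda\mu + (1-\lambda)\bar\mu$. Once that is in hand, the conclusion $w(\mu_\lambda) > v(\bar\mu) - \varepsilon$ for small $\lambda$ follows from linearity and continuity. So the work is entirely in ruling out alternative best responses at $\mu_{\lambda}$.

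First I would fix an arbitrary $\lambda \in (0,1]$ and any $a' \in A^{*}_{R}(\mu_{\lambda})$. Using linearity of expected utility in the belief, the inequality $\int u_R(a', \cdot) \ddd \mu_{\lambda} \geq \int u_R(a, \cdot) \ddd \mu_{\lambda}$ rearranges to
\[
\lambda\left[\int u_R(a',\cdot)\ddd\mu - \int u_R(a,\cdot)\ddd\mu\right] \ \geq\  (1-\lambda)\left[\int u_R(a,\cdot)\ddd\bar\mu - \int u_R(a',\cdot)\ddd\bar\mu\right].
\]
The right-hand side is nonnegative because $a \in A^{*}_{R}(\bar\mu)$, so the bracket on the left is also nonnegative. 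But $A^{*}_{R}(\mu) = \{a\}$ says $a$ is the unique maximizer of $\int u_R(\cdot,\cdot)\ddd\mu$, so any $a' \neq a$ would make the left bracket strictly negative, a contradiction. Hence $a' = a$, giving $A^{*}_{R}(\mu_{\lambda}) = \{a\}$.

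With singleton best responses, $w(\mu_\lambda) = \int u_S(a,\cdot)\ddd\mu_\lambda = \lambda \int u_S(a,\cdot)\ddd\mu + (1-\lambda) \int u_S(a,\cdot)\ddd\bar\mu$, which is a continuous (indeed affine) function of $\lambda$ that tends to $\int u_S(a,\cdot)\ddd\bar\mu$ as $\lambda \to 0$. Since the hypothesis gives $\int u_S(a,\cdot)\ddd\bar\mu > v(\bar\mu) - \varepsilon$ with strict slack, all sufficiently small $\lambda \in (0,1]$ yield $w(\mu_\lambda) > v(\bar\mu) - \varepsilon$, as required.

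There is no real obstacle here: the only substantive step is the elementary linear-algebraic argument showing that uniqueness of $a$ at $\mu$ and optimality of $a$ at $\bar\mu$ are jointly preserved along the segment $[\bar\mu, \mu]$ (open at the $\bar\mu$ end). No appeal to upper hemicontinuity of $A^{*}_{R}$ or to finiteness of $A$ is needed, which is important since the lemma will be invoked in the general (possibly infinite-action) setting.
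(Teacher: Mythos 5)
Your proof is correct and follows essentially the same route as the paper's: the paper's one-line argument likewise observes that, by linearity of expected utility, an action that is optimal at $\bar\mu$ and uniquely optimal at $\mu$ is uniquely optimal at every proper convex combination, so that $w(\mu_\lambda)=\int u_S(a,\cdot)\ddd\mu_\lambda\to\int u_S(a,\cdot)\ddd\bar\mu>v(\bar\mu)-\varepsilon$. Your write-up simply fills in the algebra of that observation.
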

\begin{proof}
Because $a$ is a best response to $\bar\mu$ and a unique best response to $\mu$, it is (by linearity of expected utility) a unique best response to any proper convex combination $\mu'$, yielding $w(\mu')=\int u_S(a,\cdot)\ddd\mu'$, which converges to $v(\mu)$ as $\mu'\to\bar\mu$.
\end{proof}

\begin{lemma}\label{lem: the prior prior PUBR theorem}
S has a unique equilibrium payoff if some persuasion-sufficient $D\subseteq\DT$ exists such that every $\bar\mu\in D$ has $$\max_{a\in A^*_R(\bar\mu)}\int u_S(a,\cdot)\ddd\bar\mu =\sup_{a\in A^*_R(\bar\mu)\cap A^U(\prior)}\int u_S(a,\cdot)\ddd\bar\mu.$$
\end{lemma}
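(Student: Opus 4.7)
The plan is to reduce this lemma directly to Lemma~\ref{lem: heart of the prior PUBR theorem} via the action-selection Lemma~\ref{lem: ubr preserved on convex combinations}, and conclude using Proposition~\ref{perprior}. By Proposition~\ref{perprior}, it suffices to show $\hat w(\prior)\geq \hat v(\prior)$. Fix $\varepsilon>0$ arbitrarily; since the reverse inequality always holds, proving $\hat w(\prior)\geq \hat v(\prior)-3\varepsilon$ will suffice.

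First I would use persuasion sufficiency of $D$ to pick $p\in\RPP$ with $p(D')=1$ for some Borel $D'\subseteq D$ and $\int v\ddd p \geq \hat v(\prior)-\varepsilon$. Define the bounded measurable function $f\coloneqq v-2\varepsilon$ on $\DT$; here $v$ is measurable because it is upper semicontinuous (and bounded, since $u_S$ is continuous on the compact set $A\times\Theta$). My target is to verify the hypothesis of Lemma~\ref{lem: heart of the prior PUBR theorem} for this $f$, $p$, and $D'$.

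To that end, fix $\bar\mu\in D'\subseteq D$. By the lemma's hypothesis, the supremum of $\int u_S(a,\cdot)\ddd\bar\mu$ over $a\in A^*_R(\bar\mu)\cap A^U(\prior)$ equals $v(\bar\mu)$, so I can pick $a\in A^*_R(\bar\mu)\cap A^U(\prior)$ with $\int u_S(a,\cdot)\ddd\bar\mu > v(\bar\mu)-\varepsilon$. By definition of $A^U(\prior)$, there then exist $\mu\in\DT$ and $\gamma>0$ (WLOG $\gamma\in(0,1)$) with $A^*_R(\mu)=\{a\}$ and $\gamma\mu\leq\prior$. Lemma~\ref{lem: ubr preserved on convex combinations} (applied with the current $\varepsilon$) yields that all sufficiently small $\lambda\in(0,1]$ have
\[
\hat w\!\left(\lambda\mu+(1-\lambda)\bar\mu\right)\geq w\!\left(\lambda\mu+(1-\lambda)\bar\mu\right)> v(\bar\mu)-\varepsilon> v(\bar\mu)-2\varepsilon=f(\bar\mu).
\]
This is exactly the hypothesis of Lemma~\ref{lem: heart of the prior PUBR theorem} (noting $\gamma\mu\leq\prior$).

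Applying that lemma gives $\hat w(\prior)\geq \int f\ddd p = \int v\ddd p-2\varepsilon \geq \hat v(\prior)-3\varepsilon$. Letting $\varepsilon\to 0$ yields $\hat w(\prior)\geq \hat v(\prior)$, which together with $\hat w\leq \hat v$ and Proposition~\ref{perprior} closes the argument. There is no serious obstacle here: the real work was done inside Lemma~\ref{lem: heart of the prior PUBR theorem} (measurable selection plus a Fatou-type passage to the limit) and Lemma~\ref{lem: ubr preserved on convex combinations} (linearity of expected utility turning approximate indifference into true uniqueness along a perturbation). The only care needed is to keep the three $\varepsilon$-losses disjoint: one for approximating $\hat v(\prior)$ by $\int v\ddd p$, one for finding a near-optimal $a$ in $A^U(\prior)$, and one to create strict inequality so that Lemma~\ref{lem: heart of the prior PUBR theorem} applies.
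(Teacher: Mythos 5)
Your proof is correct and takes essentially the same route as the paper's own: for each $\varepsilon>0$, take a near-optimal policy from persuasion sufficiency, use the hypothesis to extract a near-optimal action in $A^*_R(\bar\mu)\cap A^U(\prior)$ together with its witnessing belief $\mu$ with $\gamma\mu\leq\prior$, invoke Lemma~\ref{lem: ubr preserved on convex combinations} to verify the perturbation hypothesis, and conclude via Lemma~\ref{lem: heart of the prior PUBR theorem} and Proposition~\ref{perprior}. The only difference is that you make the $\varepsilon$-bookkeeping and the choice of $f$ explicit, which the paper leaves implicit.
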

\begin{proof}
Fixing $\varepsilon>0$ and $\bar\mu\in D$, Lemma~\ref{lem: heart of the prior PUBR theorem} means it suffices to find $\mu\in\DT$ and $\gamma>0$ such that $\gamma\mu\leq\prior$ and all sufficiently small $\lambda\in(0,1]$ have $\hat w\left(\lambda\mu+(1-\lambda)\bar\mu\right)> v(\bar\mu)-\varepsilon$. 
Some $a\in A^*_R(\bar\mu)$ has $\int u_S(a,\cdot)\ddd\bar\mu>v(\bar\mu)-\varepsilon$. Then, by hypothesis, some $\mu\in\DT$ and $\gamma>0$ exist such that $\gamma\mu\leq\prior$ and $A^*_R(\mu)=\{a\}$. This $\mu$ is as desired by Lemma~\ref{lem: ubr preserved on convex combinations}.
\end{proof}

\begin{proof}[Proof of Theorem~\ref{ubrsuff}]
First, we establish that the strong PUBR property implies S has a unique equilibrium payoff. Given $\varepsilon>0$ and $\bar\mu\in D$, Lemma~\ref{lem: heart of the local PUBR theorem} means it suffices to find $\mu\in\DT$ and $\gamma>0$ such that $\gamma\mu\leq\bar\mu$ and all sufficiently small $\lambda\in(0,1]$ have $\hat w\left(\lambda\mu+(1-\lambda)\bar\mu\right)> v(\bar\mu)-\varepsilon$. 
To do so, note some $a\in A^*_R(\bar\mu)$ has $\int u_S(a,\cdot)\ddd\bar\mu>v(\bar\mu)-\varepsilon$. Then, by hypothesis, some $\mu\in\DT$ and $\gamma>0$ exist such that $\gamma\mu\leq\bar\mu$ and $A^*_R(\mu)=\{a\}$. This $\mu$ is as desired by Lemma~\ref{lem: ubr preserved on convex combinations}.

Now, we show the PUBR property implies S has a unique equilibrium payoff if $\min\{|A|,|\Theta|\}<\infty$. For both cases, we appeal to Lemma~\ref{lem: the prior prior PUBR theorem}. In light of that lemma, it suffices to note that $A^U\subseteq A^U(\prior)$ if either $\Theta$ or $A$ is finite. If $\Theta$ is finite, the result follows because every belief $\mu\in\Delta[\supp\prior]$ has $\gamma\mu\leq\prior$ for $\gamma\coloneqq\min_{\theta\in\supp\mu}\tfrac{\prior(\theta)}{\mu(\theta)}$.\footnote{A more direct proof (as described in the main text) is available for the case of finite states. Because we require Lemma~\ref{lem: the prior prior PUBR theorem} to prove the finite-action case and Theorem~\ref{ordered}, we find it convenient to use it here too.} If $A$ is finite, the result follows because the set of beliefs in $\Delta[\supp\prior]$ to which a given action is a unique best response is relatively open, and \citep[Lemma 2,][]{lipnowski2018disclosure} the set of beliefs $\mu$ admitting $\gamma>0$ with $\gamma\mu\leq\prior$ is dense in $\Delta[\supp\prior]$.
\end{proof}

\begin{lemma}\label{ubrgeneric}
Given finite $A$ and $\Theta$, define the set $\mathcal U_R \subseteq \real^{A\times\Theta}$ of R objectives $u_R$ such that every $\mu\in\DT$ and $a\in A^*_R(\mu)$ have some $\mu'\in\Delta[\supp\mu]$ with $A^*_R(\mu')=\{a\}$. Then $\mathcal U_R$ is open and dense with full Lebesgue measure.
\end{lemma}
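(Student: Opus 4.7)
My plan is to characterize $\mathcal U_R^c$ as a finite union of ``coincidence'' sets, then prove openness by direct perturbation and Lebesgue-nullity (hence density) by a slice-and-Fubini argument.

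First I would reformulate: $u_R\in\mathcal U_R$ iff for every nonempty $S\subseteq\Theta$ and $a\in A$ with $B_a^S:=\{\mu\in\Delta(S):a\in A^*_R(\mu)\}\ne\emptyset$, some $\mu'\in\Delta(S)$ has $A^*_R(\mu')=\{a\}$. The forward direction is immediate by taking $\mu$ with $\supp\mu=S$; the reverse follows by setting $S=\supp\mu$ in the original condition and noting $\Delta(S)=\Delta[\supp\mu]$. Failure of such a witness for $(S,a)$ forces the convex polytope $B_a^S$ into $\bigcup_{a'\ne a}H_{a,a'}$ where $H_{a,a'}:=\{\mu\in\Delta(S):\int[u_R(a,\cdot)-u_R(a',\cdot)]\ddd\mu=0\}$. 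A nonempty convex set inside a finite union of hyperplanes must lie in one of them---its relative interior in its affine hull has positive Lebesgue measure and so cannot be covered by finitely many proper affine subspaces---so failure reduces to $B_a^S\subseteq H_{a,a'}$ for some $a'\ne a$. Thus $\mathcal U_R^c=\bigcup_{(S,a,a')}\mathcal N_{S,a,a'}$ is a finite union of sets $\mathcal N_{S,a,a'}:=\{u_R:\emptyset\ne B_a^S\subseteq H_{a,a'}\}$.

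For openness, given $u_R\in\mathcal U_R$, I would handle the finitely many pairs $(S,a)$ uniformly. If $B_a^S=\emptyset$, then $\max_{\mu\in\Delta(S)}\min_{a''\ne a}\int[u_R(a,\cdot)-u_R(a'',\cdot)]\ddd\mu<0$ by compactness of $\Delta(S)$, a strict condition preserved under small perturbations of $u_R$. If $B_a^S\ne\emptyset$, I fix a strict witness $\mu_{S,a}\in\Delta(S)$ with $\int[u_R(a,\cdot)-u_R(a'',\cdot)]\ddd\mu_{S,a}>0$ for every $a''\ne a$; the same $\mu_{S,a}$ remains a witness under small perturbations because the defining strict inequalities persist. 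Taking the minimum strict margin across finitely many pairs yields a uniform $\varepsilon>0$ of openness.

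For full Lebesgue measure I would show each $\mathcal N_{S,a,a'}$ is null by slicing. Fix all coordinates of $u_R$ other than $\{u_R(a',\theta):\theta\in S\}$, so I work in an $|S|$-dimensional slice. The auxiliary polytope $\tilde B:=\{\mu\in\Delta(S):\int[u_R(a,\cdot)-u_R(a'',\cdot)]\ddd\mu\ge 0\,\forall a''\ne a,a'\}$ does not depend on the free coordinates and is therefore fixed in the slice, while $c:=u_R(a,\cdot)-u_R(a',\cdot)$ on $S$ varies affinely. The bad condition reduces to $\tilde B\ne\emptyset$ together with $V(c):=\max_{\mu\in\tilde B}c\cdot\mu=0$; since $V$ is continuous and piecewise-linear in $c$, each piece of the form $c\mapsto\mu^*\cdot c$ for some nonzero vertex $\mu^*$ of $\tilde B\subseteq\Delta(S)$, the locus $\{V=0\}$ lies inside a finite union of proper hyperplanes of the slice and hence has $|S|$-dimensional measure zero. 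Fubini then makes $\mathcal N_{S,a,a'}$ Lebesgue-null in $\real^{A\times\Theta}$, and the finite union over $(S,a,a')$ preserves this; density follows from full measure. The most delicate point is locating the right slice: varying only $u_R(a',\cdot)|_S$ is what decouples the feasibility polytope $\tilde B$ from the objective $c$, making the piecewise-linear structure of the LP value manifest and the bad set manifestly lower-dimensional.
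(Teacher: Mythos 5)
Your proof is correct and arrives at essentially the same characterization of the complement as a finite union of knife-edge sets, but both halves are executed by a genuinely different route than the paper's. The paper encodes failure at a pair $(a,\hat\Theta)$ through the single function $\varphi_{(a,\hat\Theta)}(u_R)=\max_{\mu\in\Delta\hat\Theta}\min_{a'\neq a}\int[u_R(a,\cdot)-u_R(a',\cdot)]\ddd\mu$, obtains openness from continuity of $\varphi_i$ (Berge), and obtains nullity of $\{\varphi_i=0\}$ by a \emph{one-dimensional} Fubini slice: adding a constant $\lambda$ to $u_R(a,\cdot)$ shifts $\varphi_i$ by exactly $\lambda$, so every line in that direction meets the zero set in at most one point. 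You instead refine the decomposition by the tying action $a'$ (via the fact that a convex set covered by finitely many hyperplane traces must lie in one of them), prove openness by exhibiting strict witnesses that persist under perturbation, and prove nullity with an $|S|$-dimensional slice in the $u_R(a',\cdot)|_S$ coordinates, where the LP value $V(c)=\max_{\mu\in\tilde B}c\cdot\mu$ is a support function whose zero locus sits inside finitely many proper hyperplanes indexed by the (nonzero) vertices of $\tilde B$. Both Fubini arguments are valid; the paper's choice of direction is leaner because strict monotonicity of $\varphi_i$ along it makes the within-slice argument trivial, while yours buys a sharper structural picture of the bad set. Two small points to tidy: in the forward direction of your reformulation you should take an arbitrary $\mu\in B_a^S$ (which need not have full support on $S$) and use $\Delta[\supp\mu]\subseteq\Delta(S)$, rather than ``a $\mu$ with $\supp\mu=S$,'' since such a $\mu$ need not lie in $B_a^S$; and the Fubini step implicitly uses measurability of $\mathcal N_{S,a,a'}$, which is immediate because that set is closed (both defining conditions are stated through functions continuous in $u_R$) but deserves a word.
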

\begin{proof}
Toward showing these properties, let us note an algebraic characterization of $\mathcal U_R$.  
Define the finite index set $\mathbb I\coloneqq\{(a,\hat\Theta)\colon\, a\in A,\ \emptyset\neq \hat\Theta\subseteq\Theta\}$ and, for each $i=(a,\hat\Theta)\in\mathbb I$, define \begin{eqnarray*}
\varphi_i\colon \real^{A\times\Theta} &\to& \real \\
u_R &\mapsto& \max_{\mu\in\Delta\hat\Theta} \min_{a'\in A\setminus\{a\}} \int \left[ u_R(a,\cdot)-u_R(a',\cdot) \right] \ddd\mu.
\end{eqnarray*}
Then $\mathcal U_R=\{u_R \in \real^{A\times\Theta}\colon\, \varphi_i(u_R) \text{ is nonzero for every } i\in\mathbb I\} = \bigcap_{i\in\mathbb I} \varphi_i^{-1}\left(\real\setminus\{0\}\right).$

We can therefore show $\mathcal U_R \subseteq \real^{A\times\Theta}$ is open, dense, and of full Lebesgue measure by establishing that $\varphi_i^{-1}\left(\real\setminus\{0\}\right)$ enjoys these properties for every $i=(a,\hat\Theta) \in \mathbb I$. First, note it is open because (by Berge's theorem) $\varphi_i$ is continuous.  To show it is of full measure (hence also dense), define $\bar z \coloneqq [\mathbf1_{\tilde a = a}]_{\tilde a \in A, \ \tilde\theta\in\Theta} \in \real^{A\times\Theta}$, and observe that $\varphi_i(u_R+\lambda \bar z) = \varphi_i(u_R) + \lambda$ for any $u_R\in\real^{A\times\Theta}$ and any $\lambda\in\real$.  Now, fixing some $\bar\theta\in\Theta$, observe that we can decompose the vector space of all R objectives as the direct sum $\real^{A\times\Theta}=Y\oplus Z$, where $$Y\coloneqq\{u_R\in\real^{A\times\Theta}\colon\, u_R(a,\bar\theta)=0\} \text{ and } Z \coloneqq \{\lambda\bar z\colon\, \lambda\in\real\}.$$
As $\varphi_i(y+\lambda\bar z)=\varphi(y)+\lambda$ for any $y\in Y$ and $\lambda\in\real$, and a singleton is Lebesgue-null in $\real\cong Z$, it follows from the law of iterated expectations that $\varphi_i^{-1}\left(0\right)$ is Lebesgue-null as well.  The proposition follows.
\end{proof}


\begin{proof}[Proof of Proposition~\ref{generic}]
Lemma~\ref{ubrgeneric} names a set $\mathcal U_R$ and shows it is open and dense in $\real^{A\times\Theta}$ with full Lebesgue measure. Meanwhile, by definition of $\mathcal U_R$, the PUBR property holds on $\DT$ whenever $u_R\in\mathcal U_R$. The proposition then follows from Theorem~\ref{ubrsuff}.
\end{proof}


\begin{lemma}\label{lem: heart of the ordered theorem}
Suppose $A,\Theta\subseteq\real$; the function $u_R$ exhibits strictly increasing differences; every $\mu\in\DT$ admits some $a\in\{\min A^*_R(\mu),\ \max A^*_R(\mu)\}$  such that $\int u_S(a,\cdot)\ddd\mu=v(\mu)$; and for each $\bar\theta\in\{\min\supp\prior,\ \max\supp\prior\}$, either $\mu_0(\bar\theta)=0$ or R's best response at $\delta_{\bar\theta}$ is unique. Then, S has a unique equilibrium payoff.
\end{lemma}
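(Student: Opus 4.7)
The plan is to apply Lemma~\ref{lem: heart of the prior PUBR theorem} with $f=v-\varepsilon$ for an arbitrary $\varepsilon>0$ and an arbitrary $p\in\RPP$: the resulting bound $\hat w(\prior)\geq\int(v-\varepsilon)\ddd p$, after taking the supremum over $p$ and letting $\varepsilon\to 0$, will give $\hat w(\prior)\geq\hat v(\prior)$, from which Proposition~\ref{perprior} delivers the claim. Write $\theta_L:=\min\supp\prior$ and $\theta_H:=\max\supp\prior$, and take $D$ to be $\DT$ minus those $\delta_{\bar\theta}$ with $\bar\theta\in\{\theta_L,\theta_H\}$ and $\mu_0(\{\bar\theta\})=0$; this $D$ is measurable and of full $p$-measure (Bayes-plausibility rules out a $p$-supported belief $\mu$ with $\mu(\{\bar\theta\})>0$ when $\mu_0(\{\bar\theta\})=0$). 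For $\bar\mu=\delta_{\bar\theta}\in D$, the lemma's hypothesis yields uniqueness of the R best response, so $w(\bar\mu)=v(\bar\mu)$ and the pair $\mu=\bar\mu$, $\gamma=\mu_0(\{\bar\theta\})$ satisfies the needed condition trivially.

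The substantive case is $\bar\mu\in D\setminus\{\delta_{\theta_L},\delta_{\theta_H}\}$. By the ordered hypothesis some $a^*\in\{\min A^*_R(\bar\mu),\max A^*_R(\bar\mu)\}$ attains $\int u_S(a^*,\cdot)\ddd\bar\mu=v(\bar\mu)$; I treat the case $a^*=\max A^*_R(\bar\mu)$ (the other being symmetric via a $\mu$ concentrated near $\theta_L$). Take $\mu:=\delta_{\theta_H}$ with $\gamma:=\mu_0(\{\theta_H\})$ if $\mu_0(\{\theta_H\})>0$; otherwise $\mu:=\prior(\cdot\mid\theta>s)$ with $\gamma:=\mu_0(\{\theta>s\})$ for some $s<\theta_H$ close to $\theta_H$. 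In either case $\gamma\mu\leq\prior$, and it suffices to show $w(\mu_\lambda)>v(\bar\mu)-\varepsilon$ for small $\lambda$, where $\mu_\lambda:=\lambda\mu+(1-\lambda)\bar\mu$ (using $\hat w\geq w$).

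I would prove this by contradiction: given sequences $\lambda_n\searrow 0$ and $a_n\in A^*_R(\mu_{\lambda_n})$ with $\int u_S(a_n,\cdot)\ddd\mu_{\lambda_n}\leq v(\bar\mu)-\varepsilon$, compactness of $A$ and upper hemicontinuity of $A^*_R$ yield (along a subsequence) a limit $a_\infty\in A^*_R(\bar\mu)$ with $\int u_S(a_\infty,\cdot)\ddd\bar\mu<v(\bar\mu)$, so $a_\infty<a^*$. Set $g_a(\theta):=u_R(a^*,\theta)-u_R(a,\theta)$; strictly increasing differences make $g_{a_\infty}$ strictly increasing in $\theta$, and $a_\infty\in A^*_R(\bar\mu)$ gives $\int g_{a_\infty}\ddd\bar\mu=0$. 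Together with $\bar\mu(\{\theta_H\})<1$, this forces $g_{a_\infty}(\theta_H)>0$. Meanwhile, the optimality relation $\int g_{a_n}\ddd\mu_{\lambda_n}\leq 0$ combined with $\int g_{a_n}\ddd\bar\mu\geq 0$ (since $a^*\in A^*_R(\bar\mu)$) forces $\int g_{a_n}\ddd\mu\leq 0$, hence $\int g_{a_\infty}\ddd\mu\leq 0$ in the limit. For $\mu=\delta_{\theta_H}$ this last integral equals $g_{a_\infty}(\theta_H)>0$, an immediate contradiction; for $\mu=\prior(\cdot\mid\theta>s)$, the integral can be made arbitrarily close to $g_{a_\infty}(\theta_H)>0$ by taking $s$ close to $\theta_H$.

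The main obstacle is uniformity in this latter sub-case, since $s$ must be chosen \emph{before} seeing the contradicting sequence. I would handle this by first using uniform continuity of $u_S$ to force $a_\infty$ to satisfy $a^*-a_\infty\geq\delta^*$ for some $\delta^*>0$ determined only by $\varepsilon$ and $\bar\mu$ (otherwise $\int u_S(a_\infty,\cdot)\ddd\bar\mu$ would be within $\varepsilon$ of $v(\bar\mu)$, contradicting the assumed payoff gap), confining $a_\infty$ to the compact set $B:=A^*_R(\bar\mu)\cap\{a\leq a^*-\delta^*\}$. On $B$ the continuous strictly positive function $a\mapsto g_a(\theta_H)$ attains a positive minimum $\eta$, and uniform continuity of $u_R$ on $A\times\Theta$ permits choosing $s$ so that $\int g_a\ddd\mu_s$ is $\eta/2$-close to $g_a(\theta_H)$ uniformly in $a\in A$---so $\int g_{a_\infty}\ddd\mu_s\geq\eta/2>0$, completing the contradiction.
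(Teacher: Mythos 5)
Your proof is correct and follows essentially the same route as the paper's: both reduce to Lemma~\ref{lem: heart of the prior PUBR theorem}, perturb $\bar\mu$ toward the prior conditioned on an upper tail $(\theta^*,\theta_H]$ (or toward $\delta_{\theta_H}$ when it is an atom), and use strictly increasing differences together with continuity of $u_S$ to force every best response at the perturbed belief to deliver nearly $v(\bar\mu)$. The paper packages the limiting step as a direct neighborhood argument via upper semicontinuity of $\max A^*_R$ where you argue by sequential compactness and contradiction, and your uniform choice of $s$ plays exactly the role of the paper's choice of $\theta^*$, so the mechanism is identical. One small repair: your $D$ should be intersected with $\Delta[\supp\prior]$ (which still carries full $p$-measure for every $p\in\RPP$), since the step deducing $g_{a_\infty}(\theta_H)>0$ from $\int g_{a_\infty}\ddd\bar\mu=0$ and $\bar\mu(\{\theta_H\})<1$ requires $\bar\mu$ to be supported in $[\theta_L,\theta_H]$.
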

\begin{proof}
Let $\theta_L\coloneqq\min\supp\prior$ and $\theta_H\coloneqq\max\supp\prior$, and define the measurable set of beliefs $$D\coloneqq\left(\Delta[\supp\prior]\right)\setminus\left\{\delta_\theta\colon\, \theta\in\{\theta_L,\theta_H\},\  \prior(\theta)=0 \right\}.$$
Clearly, that $p\in\RPP$ implies $p(D)=1$, and so $D$ is persuasion sufficient. 

Fix any $\bar\mu\in D$ and $\varepsilon>0$. Lemma~\ref{lem: heart of the prior PUBR theorem} will deliver the present lemma if we can find some $\mu\in\DT$ and $\gamma\in(0,1)$ such that $\gamma\mu\leq\prior$ and $\hat w(\lambda\mu+(1-\lambda)\bar\mu)>v(\bar\mu)-\varepsilon$ for all sufficiently small $\lambda\in(0,1]$. If $\bar\mu=\delta_{\bar\theta}$ for some $\bar\theta\in\{\theta_L, \theta_H\}$, then (since $\bar\mu\in D$) $\mu_0(\bar\theta)>0$ and $|A^*_R(\bar\mu)|=1$, so that $(\mu,\gamma)=(\bar\mu,\prior(\bar\theta))$ is as desired. We therefore focus on the complementary case in which $\bar\mu$ lies strictly between $\delta_{\theta_L}$ and $\delta_{\theta_H}$ in the sense of first-order stochastic dominance.

By hypothesis, some $\bar a\in \{\min A^*_R(\bar\mu),\ \max A^*_R(\bar\mu)\}$ has $\int u_S(\bar a,\cdot) \ddd\mu=v(\bar\mu)$. Without loss (because our hypotheses are invariant to reversing the order on both $A$ and $\Theta$), say $\bar a=\max A^*_R(\mu)$. 
That $\bar\mu\in D$ with $|A^*_R(\bar\mu)|>1$ implies, given the theorem's hypothesis on the prior, that $\bar\mu\neq\delta_{\theta_H}$. 

Now, for any $\theta^*\in(\theta_L,\theta_H)$, let $\gamma_{\theta^*}\coloneqq\prior(\theta^*,\theta_H]\in(0,1)$ and define the conditional distribution $\mu_{\theta^*}\coloneqq\tfrac1{\gamma_{\theta^*}}\prior\left[(\cdot)\cap(\theta^*,\theta_H]\right]\in\DT$. By construction, we know $\gamma_{\theta^*}\mu_{\theta^*}\leq\prior$ for every $\theta^*\in(\theta_L,\theta_H)$. We will show that $(\mu,\gamma)=(\mu_{\theta^*},\gamma_{\theta^*})$ are as desired when $\theta^*\in(\theta_L,\theta_H)$ is large enough.

Let $\eta>0$ and neighborhood $B\subseteq\DT$ of $\bar\mu$ be small enough that any $a\in A$ with $|a-\bar a|<\eta$ and any $\tilde\mu\in B$ have $\int u_S(a,\cdot) \ddd\tilde\mu>v(\bar\mu)-\varepsilon$. We now claim that, for all sufficiently large $\theta^*\in(\theta_L,\theta_H)$, every $a\in A$ with $a\leq \bar a-\eta$ has $\int u_S(a,\cdot)\ddd\mu_{\theta^*} < \int u_S(\bar a,\cdot)\ddd\mu_{\theta^*}$. 
Assume otherwise, for a contradiction. Then ($A$ being compact) some sequence $(\theta^*_n,a_n)_n$ from $(\theta_L,\theta_H)\times A$ converging to $(\theta_H,a')$, for $a'\in A\cap(-\infty,\bar a-\eta]$, has $a_n\in A^*_R(\mu_{\theta^*_n})$ for every $n$. But $\int [u_S(a_n,\cdot)-u_S(\bar a,\cdot)]\ddd\mu_{\theta^*_n}\geq 0$ converges to $\int [u_S(a',\cdot)-u_S(\bar a,\cdot)]\ddd\delta_{\theta_H}$, which is strictly negative due to strictly increasing differences and $\bar a\in A^*(\bar\mu)$---a contradiction. 

In what follows, fix $\theta^*\in(\theta_L,\theta_H)$ such that every $a\in A$ with $a\leq \bar a-\eta$ has $\int u_S(a,\cdot)\ddd\mu_{\theta^*} < \int u_S(\bar a,\cdot)\ddd\mu_{\theta^*}$; the previous paragraph established such $\theta^*$ exists. Letting $\mu\coloneqq\mu_{\theta^*}$, and letting $A_\lambda\coloneqq A^*_R(\lambda\mu+(1-\lambda)\bar\mu)$ for each $\lambda\in(0,1]$, 
we will next observe $A_\lambda \subseteq (\bar a-\eta,\bar a+\eta)$ for small enough $\lambda\in(0,1]$. First, consider any $a\in A$ with $a\leq \bar a-\eta$. By definition of $\mu$ we have $\int u_S(a,\cdot)\ddd\mu < \int u_S(\bar a,\cdot)\ddd\mu$, and that $\bar a\in A^*_R(\bar\mu)$ tells us $\int u_S(a,\cdot)\ddd\mu \leq \int u_S(\bar a,\cdot)\ddd\mu$. Therefore, $a\notin A_\lambda$. Having established $A_\lambda\subseteq (\bar a-\eta,\infty)$ for every $\lambda\in(0,1]$, the desired containment for sufficiently small $\lambda$ would follow if we knew that $\limsup_{\lambda\to0} \max A_\lambda \leq \bar a$. But observe this inequality holds because $\max A^*_R$ is upper semicontinuous and $\bar a=\max A_0$.

Having seen $A_\lambda \subseteq (\bar a-\eta,\bar a+\eta)$ for sufficiently small $\lambda\in(0,1]$, and observing that $\lambda\mu+(1-\lambda)\bar\mu\in B$ for sufficiently small $\lambda\in(0,1]$, it follows that $\mu$ is as required.
\end{proof}

\begin{proof}[Proof of Theorem~\ref{ordered}]
Given Lemma~\ref{lem: heart of the ordered theorem}, the theorem follows immediately if we establish that every $\mu\in\DT$ admits $a\in\{\min A^*_R(\mu), \max A^*_R(\mu)\}$ such that $\int u_S(a,\cdot) \ddd\mu=v(\mu)$. This property follows directly from the quasiconvexity/quasiconcavity property of the ordered model: The condition clearly holds if S's expected payoff is quasiconvex in the action, and it holds vacuously (because $|A^*_R(\mu)|\leq2$) if R's expected payoff is strictly quasiconcave in the action.    
\end{proof}

\subsection{Proofs for Section~\ref{sec:pwi}}

\newcommand{\bigq}{Q}
\newcommand{\qcstrat}{q}
\newcommand{\qsstrat}{\upsilon}
\newcommand{\srange}{\mathcal{S}}
\newcommand{\joint}{\bullet}
\newcommand{\credfuns}{X}
\newcommand{\suppmarg}{\supp\;\!\!\marg}
\newcommand{\qmap}{\mathcal{Q}}
\newcommand{\subV}{\tilde V}
\newcommand{\beliefms}{\Pi}

To better understand the set of $\cred$-PBE and the S payoffs they can generate, connecting this solution concept to the analysis of \cite*{LRS1} is useful. That paper defines a notion of a \textbf{$\cred$-equilibrium} and characterizes the S payoffs such a solution can generate.\footnote{The model in LRS1 assumes the message space to be uncountable. However, given that $|M|\geq 2|\Theta|$, it follows readily from Carath\'eodory's theorem and Lemma~1 of LRS1 that the $\cred$-equilibrium payoff set is unchanged.} Roughly, a $\cred$-equilibrium specifies an experiment $\cstrat\in\cstrats$, together with continuation play and continuation beliefs that satisfy the incentive and Bayesian properties in the partial-credibility game, but with no requirement that the initial experiment $\cstrat$ be chosen optimally. From the definitions in the present paper and in LRS1, the following is immediate.
\begin{fact}\label{fact:credeq}
The quadruple $\langle\cstrat,\sstrat,\rstrat,\beliefm\rangle$ is a $\cred$-PBE if and only if: \begin{enumerate}
\item For every $\cstratr\in\cstrats$, the quadruple $\langle\cstratr,\sstrat(\cdot,\cstratr),\rstrat(\cdot,\cstratr),\beliefm(\cdot,\cstratr)\rangle$ is a $\cred$-equilibrium. 
\item We have $\cstrat \in \argmax_{\cstratr\in\cstratrs} s_\cstratr$, where $s_\cstratr$ is the S payoff induced by $\langle\cstratr,\sstrat(\cdot,\cstratr),\rstrat(\cdot,\cstratr),\beliefm(\cdot,\cstratr)\rangle$.
\end{enumerate}
In particular, every $\cred$-PBE payoff is a $\cred$-equilibrium payoff.
\end{fact}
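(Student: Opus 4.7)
The plan is to verify the Fact by direct side-by-side unpacking of the two definitions. For the forward direction, given a $\cred$-PBE $\langle\cstrat,\sstrat,\rstrat,\beliefm\rangle$, I would rewrite the universal quantifier ``for every $\cstratr\in\cstratrs$'' that appears in conditions (2), (3), and (4) of Definition~2 by \emph{fixing} $\cstratr$ and inspecting the remaining content of each condition for the triple $\langle\sstrat(\cdot,\cstratr),\rstrat(\cdot,\cstratr),\beliefm(\cdot,\cstratr)\rangle$. These three clauses are precisely the receiver-optimality, Bayesian-consistency, and sender-manipulation-optimality requirements that the LRS1 notion of $\cred$-equilibrium imposes on a quadruple with primitive experiment $\cstratr$. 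This delivers item 1 of the Fact. Condition (1) of Definition~2, in turn, states exactly that $\cstrat\in\argmax_{\cstratr\in\cstratrs} s_\cstratr$, since the integrand defining S's payoff from deviating to $\cstratr$ coincides with the induced $\cred$-equilibrium payoff of the quadruple $\langle\cstratr,\sstrat(\cdot,\cstratr),\rstrat(\cdot,\cstratr),\beliefm(\cdot,\cstratr)\rangle$, giving item 2.

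The reverse direction just reverses these assemblies: assuming items 1 and 2, extract from item 1 (with $\cstratr$ arbitrary) the three continuation-stage conditions (2)--(4) of Definition~2, and read item 2 as condition (1). The ``in particular'' assertion then follows by specializing item 1 to $\cstratr=\cstrat$: the resulting quadruple is a $\cred$-equilibrium and by inspection its induced payoff $s_\cstrat$ coincides with the given $\cred$-PBE payoff.

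I do not anticipate any substantive obstacle: the Fact is pure definitional bookkeeping, which is why the authors flag it as immediate. The only housekeeping point worth noting is that LRS1 assumes an uncountable message space, whereas here $M$ is finite with $|M|\geq 2|\Theta|$; this discrepancy is benign---as the authors already flag in the preceding footnote---because a Carath\'eodory-type argument combined with Lemma~1 of LRS1 shows that the $\cred$-equilibrium payoff set is unaffected by this hypothesis change.
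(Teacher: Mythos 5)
Your proposal is correct and matches the paper's treatment: the paper offers no separate proof, stating only that the Fact is ``immediate'' from the definitions in the present paper and in LRS1, which is exactly the definitional unpacking you carry out. Your additional housekeeping remark about the message-space discrepancy is likewise the same point the paper makes in its footnote, so nothing is missing or different.
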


We will use the following notation, for an S payoff that LRS1 characterizes, throughout.\footnote{The value's characterization (Theorem~1 of LRS1) remains valid in the present setting in light of Corollary~1 of LRS1.}

\begin{notation}
Let $v^*_\cred(\mu_0)$ denote the highest $\cred$-equilibrium S payoff (given prior $\mu_0$).
\end{notation}

 Toward constructing adversarial $\cred$-PBE that give S an undesirable payoff in response to off-path experiment choices, we begin with a technical lemma showing any reporting protocol comprises part of a $\cred$-equilibrium in which R always chooses from a given restricted set of best responses.

\begin{lemma}\label{lemma:approx-existence}
  If $\subV\subseteq V$ is a Kakutani correspondence and $\cstrat$ is any official reporting protocol, then some $\cred$-equilibrium $(\cstrat,\tilde\sstrat,\tilde\rstrat,\tilde\beliefm)$ exists such that $u_S(\tilde\rstrat)\in\subV(\tilde\beliefm)$.
\end{lemma}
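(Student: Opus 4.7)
The plan is to apply Kakutani's fixed-point theorem to a product correspondence capturing continuation-equilibrium behavior when $\cstrat$ is held fixed. Since $A$, $\Theta$, and $M$ are all finite, I would work with the compact convex subset of Euclidean space
$$K \coloneqq \DT^M \times (\Delta A)^M \times (\Delta M)^\Theta,$$
with typical element $(\beliefm,\rstrat,\sstrat)$ specifying a receiver belief $\beliefm_m\in\DT$ per message, a receiver mixed action $\rstrat_m\in\Delta A$ per message, and an influence distribution $\sstrat_\theta\in\Delta M$ per state.

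The correspondence $\Phi\colon K\rightrightarrows K$ I would use has three components: $\Phi_\beliefm(\sstrat)$ is the set of belief maps Bayes-consistent with the aggregate reporting distribution $\cred\cstrat+(1-\cred)\sstrat$, leaving beliefs arbitrary at zero-probability messages; $\Phi_\rstrat(\beliefm)$ is the set of action maps with $\rstrat_m\in\Delta\bigl(A^*_R(\beliefm_m)\bigr)$ and $\int u_S\ddd\rstrat_m\in\subV(\beliefm_m)$ for each $m$; and $\Phi_\sstrat(\rstrat)$ is the set of influence strategies with $\sstrat_\theta$ supported on $\argmax_m\int u_S\ddd\rstrat_m$ for each $\theta$. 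Any fixed point $(\tilde\beliefm,\tilde\rstrat,\tilde\sstrat)$ yields the desired $\cred$-equilibrium directly: the quadruple $(\cstrat,\tilde\sstrat,\tilde\rstrat,\tilde\beliefm)$ satisfies Bayesian consistency, R rationality, and influence-stage optimality, with $u_S(\tilde\rstrat)\in\subV(\tilde\beliefm)$ baked in by construction.

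Next I would verify the Kakutani hypotheses componentwise. $\Phi_\sstrat$ is the standard influencing-sender best reply (Berge), since $u_S$ is state-independent and the objective is continuous in the receiver action profile. For $\Phi_\rstrat$, convex-valuedness follows from convexity of $\subV(\beliefm_m)$ together with the affineness of $\rstrat_m\mapsto\int u_S\ddd\rstrat_m$; nonemptiness follows from $\subV\subseteq V$, as any $s\in\subV(\beliefm_m)$ is a convex combination of the finitely many values $\{u_S(a)\colon a\in A^*_R(\beliefm_m)\}$ and hence is realizable by a mixture supported on $A^*_R(\beliefm_m)$; and upper hemicontinuity combines upper hemicontinuity of $A^*_R$ (Berge) with upper hemicontinuity of the Kakutani correspondence $\subV$.

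The main obstacle is verifying upper hemicontinuity of $\Phi_\beliefm$ at profiles where some message has zero marginal probability. I would handle this by a direct sequential argument: along any $\sstrat_n\to\sstrat$, if the marginal probability of $m$ vanishes in the limit, any accumulation point of Bayes posteriors at $m$ still lies in $\DT$ and is therefore admissible under $\sstrat$ (since $m$ is off-path), whereas at messages with strictly positive limiting marginal, Bayes' rule yields a continuous single-valued selection. Hence $\Phi_\beliefm$ is nonempty-convex-compact-valued with closed graph. Kakutani's theorem then delivers a fixed point, which completes the construction.
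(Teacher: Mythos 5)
Your construction is correct, and it is the same basic strategy as the paper's---a Kakutani fixed-point argument over continuation play with the official report $\cstrat$ held fixed and the value restriction $\subV$ enforced through the receiver's side---but the decomposition is genuinely different. You take the textbook route: the fixed-point domain is the full product $(\DT)^M\times(\Delta A)^M\times(\Delta M)^\Theta$, with the receiver's strategy constrained directly by $\rstrat_m\in\Delta\bigl(A^*_R(\beliefm_m)\bigr)$ and $\int u_S\ddd\rstrat_m\in\subV(\beliefm_m)$; your verification that this constrained best-reply correspondence is nonempty-valued (state-independence of $u_S$ makes $V(\mu)=\co\{u_S(a)\colon a\in A^*_R(\mu)\}$, so any point of $\subV(\mu)\subseteq V(\mu)$ is realizable by a mixture over $A^*_R(\mu)$), convex-valued, and closed-graph is sound, as is your treatment of Bayes consistency via the product-form equality, which makes the belief correspondence closed-graph even at vanishing-probability messages. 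The paper instead runs the fixed point only over $(\sstrat,\beliefm)$: it replaces the receiver's strategy with an auxiliary value interval $\hat S(\tilde\beliefm)$ and a message correspondence $\hat M(\tilde\beliefm)$ (messages whose $\subV$-value set meets $\hat S$), restricts the influencing sender to $\hat M$, and only after obtaining the fixed point constructs $\tilde\rstrat$ from a measurable selector of $\subV$ capped at a level $\infpay\in\hat S(\tilde\beliefm)$. Your version buys a shorter, more self-contained verification (every equilibrium object appears in the fixed point, so nothing needs to be reconstructed ex post, and influence-stage optimality is literally the argmax condition rather than an indirect consequence of the $\hat M$/selector pairing); the paper's version buys a lower-dimensional fixed-point space and isolates the one place where $\subV$ matters into a scalar value correspondence. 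Both rely on the same finiteness of $A$, $\Theta$, $M$ and on $\subV$ being Kakutani, so neither is more general than the other.
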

\begin{proof}
  Let $\beliefms\coloneqq (\DT)^{M}$ be the set of all R belief mappings and define correspondences
  \begin{align*}
    \hat S \colon \beliefms &\rightrightarrows \real \\
    \tilde\beliefm & \mapsto \left[\max_{m\in M} \min \subV(\tilde\beliefm(m)), \max_{m\in M} \max \subV(\tilde\beliefm(m))\right], \\
    \hat M \colon \beliefms &\rightrightarrows M \\
    \tilde\beliefm &\mapsto \left\{m\in M\colon\, \subV(\tilde\beliefm(m)) \cap \hat S(\tilde\beliefm) \neq \varnothing\right\}.
  \end{align*}
  Observe that $\hat S$ is Kakutani, since $\subV$ is Kakutani and a finite maximum or minimum of upper or lower semicontinuous functions inherits the same semicontinuity. 
  Therefore, $\hat M$ is nonempty-valued with closed graph.
  Now let $\sstrats\coloneqq (\Delta M)^{\Theta}$ and consider the correspondence mapping belief maps into S-IC influencing strategies (assuming R's strategy delivers S values from $\subV$)
  \begin{align*}
    \hat{\sstrats} \colon\, \beliefms &\rightrightarrows \sstrats \\
    \tilde\beliefm &\mapsto  \Big\{\tilde\sstrat \in \sstrats\colon\, \cup_{\theta\in\Theta}\supp (\tilde\sstrat(\theta)) \subseteq \hat M(\tilde\beliefm)\Big\},
  \end{align*}
  and the correspondence mapping influencing strategies into consistent belief maps
  \begin{align*}
    \hat\beliefms \colon\, \sstrats &\rightrightarrows \beliefms,\\
    \tilde\sstrat &\mapsto \bigg\{\tilde\beliefm \in\beliefms\colon \ 
    \tilde\beliefm(\theta|m)\int_\Theta\bigg[\cred\ddd \cstrat(m|\cdot) + (1-\cred)\tilde\sstrat(m|\cdot) \bigg]\ddd\mu_0
     \\ & \qquad\qquad\qquad
     =
     \left[\cred(\theta)\cstrat(m|\theta) + (1-\cred(\theta))\tilde\sstrat(m|\theta)\right]\mu_0(\theta), 
    \forall\theta\in \Theta, m\in M\bigg\}.
  \end{align*}
  It then follows that $\hat{\sstrats}$ and $\hat\beliefms$ are both Kakutani.
  Therefore, the Kakutani fixed point theorem delivers some $\tilde\sstrat\in\sstrats$ and $\tilde\beliefm\in\beliefms$ such that $\tilde\sstrat\in\hat{\sstrats}(\tilde\beliefm)$ and $\tilde\beliefm\in\hat\beliefms(\tilde\sstrat)$.
  Now, take any $\infpay\in \hat S(\tilde\beliefm)$ and let $D \coloneqq \tilde\beliefm(M)$.
  Note that $\infpay \wedge\subV|_D$ is nonempty-valued and so admits a selector $\condpay\colon D \to \real$.\footnote{That is, some $\condpay\colon D \to \real$ has $\condpay\leq\infpay$ and $\condpay(\mu)\in\subV(\mu)$ for every $\mu\in D$.}
  Therefore, some $\hat\rstrat\colon D\to \Delta A$ exists such that $\us(\hat\rstrat(m)) = \condpay(m)$.
  Next, define $\tilde\rstrat\coloneqq \hat\rstrat \circ \tilde\beliefm \colon M \to \Delta(A)$.
  It is then easy to verify that $(\cstrat,\tilde\sstrat,\tilde\rstrat,\tilde\beliefm)$ is a $\cred$-equilibrium.
\end{proof}

The following lemma shows the payoff $\hat w(\mu_0)$ dominates some $\cred$-equilibrium payoff.

\begin{lemma}\label{lemma:adversarial-existence}
Every official reporting protocol $\cstrat$ admits some $\cred$-equilibrium $\langle\cstrat,\tilde\sstrat,\tilde\rstrat,\tilde\beliefm\rangle$ with ex-ante S payoff weakly below $\hat w(\mu_0)$.
\end{lemma}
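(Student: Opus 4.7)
The plan is to build a pessimistic continuation equilibrium via Lemma~\ref{lemma:approx-existence} applied to a carefully chosen Kakutani sub-correspondence of $V$. Specifically, let $z\colon\DT\to\real$ be the upper semicontinuous envelope of $w$---the smallest upper semicontinuous function dominating $w$. Since $v$ is upper semicontinuous and $v\geq w$, we have $w\leq z\leq v$, so the correspondence $\tilde V\coloneqq [w,z]\colon\DT\rightrightarrows\real$ is contained in $V$: its values are nonempty compact convex intervals, and it is upper hemicontinuous because $w$ is lower semicontinuous and $z$ is upper semicontinuous. Hence $\tilde V$ satisfies the hypothesis of Lemma~\ref{lemma:approx-existence}.

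Applying Lemma~\ref{lemma:approx-existence} to $\tilde V$ and $\cstrat$ then yields a $\cred$-equilibrium $\langle\cstrat,\tilde\sstrat,\tilde\rstrat,\tilde\beliefm\rangle$ in which $u_S(\tilde\rstrat(m))\leq z(\tilde\beliefm(m))$ for every $m\in M$. Writing $p\in\Delta\DT$ for the induced distribution of R's posterior beliefs, the Bayesian condition in the $\cred$-equilibrium ensures $p\in\RPP$, so the induced ex-ante S payoff is at most $\int z\ddd p$, and hence at most $\hat z(\prior)$.

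It remains to show $\hat z(\prior)\leq \hat w(\prior)$; the reverse inequality is immediate from $z\geq w$. The key observation is that $\hat w$, viewed as a concave function on the finite-dimensional simplex $\Delta(\supp\prior)$, is continuous on the relative interior, which contains $\prior$. Given an approximately $z$-optimal finite-supported policy $p=\sum_i\lambda_i \delta_{\mu_i}\in\RP(\prior)$, for which each $\mu_i$ must lie in $\Delta(\supp\prior)$, we would approximate each $\mu_i$ by a sequence $\mu_i^k\to\mu_i$ with $w(\mu_i^k)\to z(\mu_i)$, form $p^k\coloneqq \sum_i\lambda_i\delta_{\mu_i^k}$, and argue that $\hat w$ evaluated at the barycenter of $p^k$ dominates $\int w\ddd p^k\to\int z\ddd p$, with continuity of $\hat w$ at $\prior$ then delivering the bound.

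The main obstacle lies in this final continuity step: when $\prior$ is not of full support it lies on the boundary of $\DT$, and the continuity of a concave function on a finite-dimensional domain is only guaranteed on the relative interior of its domain---here, $\Delta(\supp\prior)$. We would therefore arrange the approximating sequences so that the barycenters of $p^k$ remain within $\Delta(\supp\prior)$---for instance by convexly combining each $\mu_i^k$ with $\prior$ itself by a vanishing weight before centering---so that continuity of $\hat w|_{\Delta(\supp\prior)}$ at $\prior$ can be invoked to pass to the limit and conclude $\hat w(\prior)\geq \int z\ddd p$.
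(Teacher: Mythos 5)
Your overall architecture is the paper's: you take the same sub-correspondence $\tilde V=[w,z]$ with $z$ the upper semicontinuous envelope of $w$, feed it to Lemma~\ref{lemma:approx-existence}, and close the argument with continuity of the concave function $\hat w$ on the relative interior of a finite-dimensional simplex. For full-support priors your version of the last step (approximating the support points of a near-optimal policy for $z$ and invoking Lemma~\ref{lem: hat w is idempotent} at the perturbed barycenters) is a sound, slightly more hands-on substitute for the paper's argument via $\tilde z(\mu)=\limsup_{\mu'\to\mu}\hat w(\mu')$.

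However, there is a genuine gap in your treatment of priors without full support, and your proposed patch does not close it. The problem is not merely that $\hat w$ may fail to be continuous at a boundary prior; it is that the inequality $\hat z(\prior)\leq\hat w(\prior)$ is simply \emph{false} when $z$ is the upper semicontinuous envelope of $w$ taken over all of $\DT$. Concretely, let $\Theta=\{1,2\}$, $A=\{a_1,a_2\}$, $u_R(a_1,\theta)=0$, $u_R(a_2,\theta)=\theta-1$, $u_S(a_1)=0$, $u_S(a_2)=1$, and $\prior=\delta_1$. Then $w(\mu)=1$ whenever $\mu(2)>0$ but $w(\delta_1)=0$, so $z(\delta_1)=1$ while $\hat w(\prior)=w(\delta_1)=0$; your bound ``payoff $\leq\int z\ddd p\leq\hat z(\prior)$'' only delivers $1$, and indeed the continuation equilibrium produced by Lemma~\ref{lemma:approx-existence} with $\tilde V=[w,z]$ may give S payoff $1>\hat w(\prior)$. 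Your fix---convexly combining each $\mu_i^k$ with $\prior$---cannot help: the sequences witnessing $z(\mu_i)=\limsup_{\mu'\to\mu_i}w(\mu')$ may be forced to leave $\Delta[\supp\prior]$ (as in the example), and mixing them with $\prior$ still leaves mass outside $\supp\prior$, so the barycenters never return to $\Delta[\supp\prior]$ and, more fundamentally, $z$ itself is already too large on $\Delta[\supp\prior]$. The paper's resolution is to restrict the \emph{entire model} to the state space $\Theta_0=\supp\prior$ at the outset, so that $z$ becomes the upper semicontinuous envelope of $w$ computed within $\Delta\Theta_0$ (where $\prior$ is relatively interior and the continuity argument goes through), construct the $\cred$-equilibrium there, and then extend it back to the full model by assigning the influencing strategy at unsupported states arbitrarily. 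You need some such redefinition of $z$ relative to $\Delta[\supp\prior]$, not just a rearrangement of the approximating sequences.
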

\begin{proof}
Without loss, we can focus on the case that $\mu_0$ is of full support. Indeed, if we construct a $\cred$-equilibrium as desired (for official reporting protocol $\cstrat|_{\Theta_0}$) in the restricted model with state space $\Theta_0\coloneqq\supp\prior$, then this equilibrium can be extended to a $\cred$-equilibrium in the true model, by fixing any $\theta_0\in\Theta_0$ and extending $\sstrat$ to $\Theta$ via $\sstrat(\theta)\coloneqq\sstrat(\theta_0)$ for $\theta\in\Theta\setminus\Theta_0$. 
Note the lemma follows directly from Lemma~\ref{lemma:approx-existence} if we can find a Kakutani subcorrespondence $\subV\subseteq V$ such that the concave envelope of its upper selection satisfies $\mathrm{cav}[\max\subV](\mu_0) \leq \hat w(\mu_0)$.
Let us show $\subV \coloneqq [w,z]$ has this property, where $z$ is the upper semicontinuous envelope of $w$, given by
  \begin{align*}
    z \colon \DT &\to \real \\
    \mu &\mapsto \limsup_{\mu'\to\mu} w(\mu).
  \end{align*}
  First, $\subV$ is a Kakutani subcorrespondence of $V$ since $z$ is upper semicontinuous and lies above the lower semicontinuous function $w$ (and hence lies below $v$).
  All that remains, then, is to show that $\hat w(\mu_0) \geq \mathrm{cav}[\max \subV](\mu_0) = \hat z(\mu_0)$. To do so, let us establish the stronger claim that $\hat z|_D=\hat w|_D$, where $D\subseteq\DT$ is the set of full-support beliefs.
  
  Define \begin{align*}
    \tilde z \colon \DT &\to \real \\
    \mu &\mapsto \limsup_{\mu'\to\mu} \hat w(\mu).
  \end{align*}
  It follows from concavity of $\hat w$ that $\tilde z$ is concave too. Hence, 
  because $\tilde z\geq z$ and $\tilde z$ is upper semicontinuous by construction, it follows that $\tilde z\geq \hat z$.\footnote{In fact, one can show $\tilde z= \hat z$, but this fact is immaterial to the present argument.} Moreover, Theorem 10.4 from \cite{rockafellar1970convex} implies the concave function $\hat w|_D$ is continuous. Hence, $\tilde z|_D=\hat w|_D$ by the definition of $\tilde z$. That $\tilde z\geq \hat z\geq \hat w$ then implies $\hat z|_D=\hat w|_D$.
\end{proof}

Here, we provide a sufficient condition for a payoff to be compatible with $\cred$-PBE for an arbitrary credibility level.

\begin{lemma}\label{lemma:worstineq}
If $s\in \left[\hat w(\mu_0) \wedge v^*_\cred(\mu_0),\ v^*_\cred(\mu_0)\right]$, then $s$ is a $\cred$-PBE payoff for S.
\end{lemma}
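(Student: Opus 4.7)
The plan is to build a $\cred$-PBE achieving S payoff exactly $s$ by pairing an on-path continuation delivering $s$ with off-path continuations that no S deviation can improve upon. By Fact~\ref{fact:credeq}, this reduces to specifying, for each experiment $\cstratr\in\cstrats$, a $\cred$-equilibrium continuation in such a way that the selected on-path experiment is weakly preferred by S.

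For the on-path piece, I would first argue that $s$ itself is realizable as a $\cred$-equilibrium payoff. The set $\mathcal S$ of such payoffs contains $v^*_\cred(\mu_0)$ by the definition of $v^*_\cred$, and contains $w(\mu_0)$ since an uninformative official report combined with a babbling continuation (keeping the receiver posterior equal to $\prior$ at every message) constitutes a $\cred$-equilibrium with that payoff. Lemma~7 of LRS1 asserts that $\mathcal S$ is an interval; because $w(\mu_0)\leq\hat w(\mu_0)$, this interval contains $\left[\hat w(\mu_0)\wedge v^*_\cred(\mu_0),\ v^*_\cred(\mu_0)\right]$ and hence contains $s$. I would then fix a $\cred$-equilibrium $\langle\cstrat,\sstrat_0,\rstrat_0,\beliefm_0\rangle$ whose induced payoff equals $s$, to serve as on-path play.

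For the off-path piece, I would apply Lemma~\ref{lemma:adversarial-existence} to each $\cstratr\neq\cstrat$ to obtain a $\cred$-equilibrium $\langle\cstratr,\tilde\sstrat_\cstratr,\tilde\rstrat_\cstratr,\tilde\beliefm_\cstratr\rangle$ with induced payoff $s_\cstratr\leq\hat w(\mu_0)$. Because every $\cred$-equilibrium payoff is at most $v^*_\cred(\mu_0)$ by definition, this yields $s_\cstratr\leq\hat w(\mu_0)\wedge v^*_\cred(\mu_0)\leq s$. Splicing these continuations produces $\sstrat,\rstrat,\beliefm$ whose first-argument slices are measurable for every fixed second argument (each slice is drawn from a $\cred$-equilibrium). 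Applying Fact~\ref{fact:credeq}, the quadruple $\langle\cstrat,\sstrat,\rstrat,\beliefm\rangle$ is a $\cred$-PBE: its first condition is built in by the per-experiment constructions, and its second condition is satisfied because $\cstrat$ yields $s$ while every alternative yields at most $s$.

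The main obstacle is the first step—showing that every target $s$ in $[\hat w(\mu_0)\wedge v^*_\cred(\mu_0),\ v^*_\cred(\mu_0)]$ is realizable as a $\cred$-equilibrium payoff. This rests entirely on the interval structure given by Lemma~7 of LRS1; once that input is granted, the adversarial off-path continuations from Lemma~\ref{lemma:adversarial-existence} and the splicing argument are essentially bookkeeping.
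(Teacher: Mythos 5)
Your proposal is correct and follows essentially the same route as the paper's proof: establish that $s$ is a $\cred$-equilibrium payoff via the babbling outcome at $w(\mu_0)$ together with the interval structure from Lemma~7 of LRS1, then splice in adversarial off-path continuations from Lemma~\ref{lemma:adversarial-existence} and verify the $\cred$-PBE conditions via Fact~\ref{fact:credeq}. The only cosmetic difference is that the paper invokes Lemma~1 of LRS1 to certify the babbling outcome rather than describing it directly.
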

\begin{proof}
  First, we argue a $\cred$-equilibrium exists with ex-ante S payoff $s$. To that end, observe that Lemma~1 from LRS1 implies $(\delta_{\mu_0},w(\mu_0),w(\mu_0))$ is a $\cred$-equilibrium outcome (as witnessed by $\poolp=\cred$ and $\goodm=\badm=\delta_{\mu_0}$). But then, as Theorem~1 from LRS1 says $v^*_\cred(\mu_0)$ is the highest $\cred$-equilibrium S payoff, it follows from Lemma~7 of LRS1 that every payoff in $[w(\mu_0), v^*_\cred(\mu_0)]$ is a $\cred$-equilibrium S payoff. Thus, $s$ is a $\cred$-equilibrium S payoff because $\hat w(\mu_0) \wedge v^*_\cred(\mu_0) \geq w(\mu_0)$. So let $(\cstrat,\tilde\sstrat,\tilde\rstrat,\tilde\beliefm)$ be some $\cred$-equilibrium generating S payoff $s$.
  
Finally, to build a $\cred$-PBE, construct $\sstrat$, $\rstrat$, and $\beliefm$ as follows. First, let $\sstrat(\cdot,\cstrat) \coloneqq \tilde\sstrat$, $\rstrat(\cdot,\cstrat) \coloneqq \tilde\rstrat$, and $\beliefm(\cdot,\cstrat) \coloneqq \tilde\beliefm$. Second, given any $\cstratr \in\cstrats\setminus\{ \cstrat\}$, 
let $\langle\cstratr,\sstrat(\cdot,\cstratr),\rstrat(\cdot,\cstratr),\beliefm(\cdot,\cstratr)\rangle$ be some $\cred$-equilibrium---which Lemma~\ref{lemma:adversarial-existence} implies exists---with ex-ante S value of at most $\hat w(\mu_0)\wedge v^*_\cred(\mu_0)$; this $\cred$-equilibrium necessarily yields S payoff no greater than $v^*_\cred(\mu_0)$ by definition of the latter.
Since $s\geq \hat w(\mu_0)\wedge v^*_\cred(\mu_0)$, the quadruple $\langle\cstrat,\sstrat,\rstrat,\beliefm\rangle$ is a $\cred$-PBE as desired.
\end{proof}

Next, we characterize the highest $\cred$-PBE payoff S can attain; it coincides with her highest $\cred$-equilibrium payoff.

\begin{lemma}\label{lem:bestpbe}
The highest $\cred$-PBE payoff for S is $v^*_\cred(\mu_0)$, her highest $\cred$-equilibrium payoff. This payoff is weakly increasing in $\cred$.
\end{lemma}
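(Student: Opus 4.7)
The plan is to handle the two claims separately, with each following from the preceding lemmas via a short argument.

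For the first claim (highest $\cred$-PBE payoff equals $v^*_\cred(\mu_0)$), I would sandwich the $\cred$-PBE payoff set. The upper bound is immediate from Fact~\ref{fact:credeq}, since any $\cred$-PBE payoff is in particular a $\cred$-equilibrium payoff, and $v^*_\cred(\mu_0)$ is by definition the supremal $\cred$-equilibrium payoff. The matching lower bound comes from Lemma~\ref{lemma:worstineq} applied at $s = v^*_\cred(\mu_0)$, which trivially lies in the requisite interval $[\hat w(\mu_0)\wedge v^*_\cred(\mu_0),\, v^*_\cred(\mu_0)]$; this yields a $\cred$-PBE that realizes $v^*_\cred(\mu_0)$.

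For the monotonicity claim, since $v^*_\cred(\mu_0)$ is the supremal $\cred$-equilibrium payoff, it suffices to show that any $\cred$-equilibrium payoff is also a $\cred'$-equilibrium payoff whenever $\cred'\in[\cred,1]$. I would argue this by a mimicking construction. Given a $\cred$-equilibrium $\langle\cstrat,\sstrat,\rstrat,\beliefm\rangle$ generating payoff $s$, define the new official report
\[
\cstrat'(\cdot|\theta) \coloneqq \tfrac{\cred}{\cred'}\cstrat(\cdot|\theta) + \bigl(1-\tfrac{\cred}{\cred'}\bigr)\sstrat(\cdot|\theta)
\]
and retain $\sstrat$, $\rstrat$, and $\beliefm$ unchanged. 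The key algebraic identity $\cred'\cstrat'+(1-\cred')\sstrat = \cred\cstrat+(1-\cred)\sstrat$ shows the joint law of $(\theta,m)$ is preserved, so Bayesian consistency of $\beliefm$, R's optimality (beliefs unchanged), and S's influencing IC (the receiver's response is unchanged) all carry over automatically, and S's induced payoff is unchanged. Passing to the supremum then gives $v^*_{\cred'}(\mu_0)\geq v^*_\cred(\mu_0)$.

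I expect no substantive obstacle. The upper and lower bounds for the first claim reduce instantly to Fact~\ref{fact:credeq} and Lemma~\ref{lemma:worstineq}, respectively. The monotonicity argument is clean precisely because $\cred$-equilibria (unlike $\cred$-PBEs) place no incentive requirement on the official report itself: we may freely re-mix $\cstrat$ with $\sstrat$ at the chosen ratio, and every other defining condition of a $\cred'$-equilibrium then transfers mechanically from the source $\cred$-equilibrium via the preserved $(\theta,m)$-marginal.
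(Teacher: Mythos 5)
Your proof of the first claim is exactly the paper's: the upper bound is Fact~\ref{fact:credeq}, and the lower bound is Lemma~\ref{lemma:worstineq} applied at $s=v^*_\cred(\mu_0)$, which indeed lies in $[\hat w(\mu_0)\wedge v^*_\cred(\mu_0),\,v^*_\cred(\mu_0)]$. Where you diverge is the monotonicity claim: the paper simply cites Corollary~3 of LRS1, whereas you give a self-contained mimicking construction. Your construction is correct: for $\cred'\geq\cred$ (with the degenerate case $\cred'=\cred$ trivial), the re-mixed official report $\cstrat'=\tfrac{\cred}{\cred'}\cstrat+(1-\tfrac{\cred}{\cred'})\sstrat$ is a valid experiment, the identity $\cred'\cstrat'+(1-\cred')\sstrat=\cred\cstrat+(1-\cred)\sstrat$ preserves the joint law of $(\theta,m)$, and since a $\cred$-equilibrium imposes no optimality requirement on the official report itself, the Bayesian condition, R's optimality, and the influencing-S incentive condition all transfer verbatim, as does the payoff. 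This is precisely the ``replication'' intuition the paper sketches informally in Section~\ref{sec:pwi} before noting it fails for $\cred$-PBE (where experiment-choice incentives bind); your observation that it goes through cleanly at the level of $\cred$-equilibria, and hence for the supremum $v^*_\cred(\mu_0)$, is what the LRS1 corollary encapsulates. The trade-off is transparency versus brevity: your version makes the lemma self-contained modulo the paper's own Fact~\ref{fact:credeq} and Lemma~\ref{lemma:worstineq}, at the cost of re-deriving a result the authors prefer to import.
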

\begin{proof}
By Fact~\ref{fact:credeq}, no $\cred$-PBE payoff is strictly higher than $v^*_\cred(\mu_0)$. Meanwhile, Lemma~\ref{lemma:worstineq} implies $v^*_\cred(\mu_0)$ is a $\cred$-PBE payoff. The last statement follows from LRS1's Corollary~3.
\end{proof}

Now, we characterize the set of all $1$-PBE payoffs S can attain; it coincides with the $1$-equilibrium payoffs.

\begin{lemma}\label{lemma:allBPvalues} 
  The set of all $1$-PBE payoffs for S is $[\hat w(\mu_0),\hat v(\mu_0)]$.
\end{lemma}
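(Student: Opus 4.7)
My plan is to derive the lemma by combining the already-established Lemmas~\ref{lem:bestpbe} and~\ref{lemma:worstineq} with the observation that at $\cred=1$ the augmented game collapses to the baseline one, so that Proposition~\ref{perprior} directly supplies the ambient interval of payoffs.

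For the inclusion $\{1\text{-PBE payoffs}\}\subseteq[\hat w(\mu_0),\hat v(\mu_0)]$, I would fix an arbitrary $1$-PBE $\langle\cstrat,\sstrat,\rstrat,\beliefm\rangle$ and argue that the triple $\langle\cstrat,\rstrat,\beliefm\rangle$ is a perfect Bayesian equilibrium in the sense of Definition~\ref{def:eqm} with the same induced S payoff. This is essentially bookkeeping: at $\cred=1$, the weighted experiment $\cred\cstratr+(1-\cred)\sstrat$ appearing in conditions 1 and 3 of the $\cred$-PBE reduces to $\cstratr$, so those conditions specialize exactly to their Definition~\ref{def:eqm} analogues; condition 2 is verbatim; and the induced payoff formula likewise reduces. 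Hence Proposition~\ref{perprior} gives the desired containment.

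For the reverse inclusion, I would first identify $v^*_1(\mu_0)=\hat v(\mu_0)$. By Lemma~\ref{lem:bestpbe}, the highest $1$-PBE payoff equals $v^*_1(\mu_0)$, and the inclusion just established shows this is at most $\hat v(\mu_0)$. To see the value is attained, I note that any Bayes-plausible distribution $p\in\RPP$ achieving $\hat v(\mu_0)$ (which exists by Lemma~\ref{carath} and compactness of $V$) can be implemented as an experiment $\cstratr_p$, and pairing it with an S-favorable best response for R and with any S-IC influencing protocol gives a $1$-PBE with payoff $\hat v(\mu_0)$; the influencing stage imposes no payoff cost at $\cred=1$. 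Thus $v^*_1(\mu_0)=\hat v(\mu_0)$.

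With $v^*_1(\mu_0)=\hat v(\mu_0)\geq\hat w(\mu_0)$ in hand, we have $\hat w(\mu_0)\wedge v^*_1(\mu_0)=\hat w(\mu_0)$, so Lemma~\ref{lemma:worstineq} (applied at $\cred=1$) says every $s\in[\hat w(\mu_0),\hat v(\mu_0)]$ is a $1$-PBE payoff, completing the reverse inclusion. The only nontrivial step, which is the one I would write carefully, is the reduction of a $1$-PBE to a PBE of Definition~\ref{def:eqm}; the extension the other way (adjoining a measurable S-IC influencing strategy $\sstrat$) is standard once one notes that by state-independence of $u_S$ the required argmax at the influencing stage depends only on $\rstrat(\cdot,\cstratr)$, so a measurable selector exists by the measurable maximum theorem applied argument-wise.
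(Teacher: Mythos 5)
Your proof is correct and follows essentially the same route as the paper's: the containment of $1$-PBE payoffs in $[\hat w(\mu_0),\hat v(\mu_0)]$ by discarding the influencing stage to obtain an equilibrium in the sense of Definition~\ref{def:eqm} and invoking Proposition~\ref{perprior}, and the reverse containment via Lemma~\ref{lemma:worstineq} once $v^*_1(\mu_0)=\hat v(\mu_0)$ is in hand. The only (minor) divergence is that the paper gets $v^*_1=\hat v$ by citing Theorem~1 of LRS1, whereas you derive it with a short self-contained construction (S-favorable tie-breaking everywhere plus an arbitrary S-IC influencing protocol), which is equally valid.
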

\begin{proof}
Theorem~1 from LRS1 tells us $\hat v^*_1=\hat v$, and so Lemma~\ref{lemma:worstineq} says all payoffs in $[\hat w(\mu_0),\hat v(\mu_0)]$ are $1$-PBE payoffs for S. Conversely, every $1$-PBE generates a $1$-equilibrium by Fact~\ref{fact:credeq}, and so generates an equilibrium (in the sense of Definition~\ref{def:eqm}) by throwing away the influencing S behavior, the set of $1$-PBE payoffs is a subset of the set of equilibrium payoffs. Hence, the other containment follows from Proposition~\ref{perprior}.
\end{proof}

We can now prove the results on $\cred$-PBE reported in the main text.


\begin{proof}[Proof of Proposition~\ref{allcred}]

Let $S_\cred$ denote the set of $\cred$-PBE payoffs for S, and let $\bar S\coloneqq [\hat w(\mu_0), \hat v(\mu_0)]$. First, Lemma~\ref{lem:bestpbe} says $v^*(\mu_0)=\max S_\cred$, and $v^*(\mu_0)\leq\hat v(\mu_0)=\max\bar S$ by Theorem~1 from LRS1. Meanwhile,  Lemma~\ref{lemma:worstineq} implies $\bar S\cap(-\infty,\max S_\cred]\subseteq S_\cred$. Hence $S_\cred$ is weakly below $\bar S$ in the strong set order. Finally, that $S_1=\bar S$ is exactly Lemma~\ref{lemma:allBPvalues}. 
\end{proof}


\begin{proof}[Proof of Proposition~\ref{Worst}]
Fix any full-support prior $\mu_0\in\DT$, and define $\underline{s}_\cred\coloneqq w^*_\cred (\mu_0)$ for each $\cred\in[0,1]$; in particular, $\underline{s}_1=\hat w(\mu_0)$ by Lemma~\ref{lemma:allBPvalues}.
To prove the equivalence, it suffices to show that $\lim_{\cred\nearrow 1} \underline{s}_\cred = \underline{s}_1$. Further, Lemma~\ref{lemma:worstineq} tells us $\underline{s}_\cred\leq \underline{s}_1$ for every $\cred\in[0,1]$, so we need only show $\liminf_{\cred\nearrow 1} \underline{s}_\cred \geq \underline{s}_1$, which we do below. 

  Take an arbitrary $\varepsilon>0$. By definition of $\underline{s}_1$, some $p\in\RP(\mu_0)$ exists such that $\int w \ddd p > \underline{s}_1 - \varepsilon$. Moreover, by Lemma~\ref{carath}, we may further assume $|\supp(p)|\leq|\Theta|\leq|M|$.  
  For each $\mu\in \supp(p)$, let $N(\mu)\subseteq\DT$ be some open neighborhood of $\mu$ on which $w>w(\mu)-\varepsilon$, which exists because $w$ is lower semicontinuous. 
Because $\supp(p)$ is finite, which in particular implies $p(\mu)>0$ for every $\mu\in\supp(p)$, some $\underline{\cred}\in(0,1)$ is such that $$\tfrac{1}{\underline{\cred}p(\mu) + (1-\underline{\cred})}\left[\underline{\cred}p(\mu)\mu + (1-\underline{\cred}) \DT\right]
  \subseteq N(\mu)$$ for each $\mu\in\supp(p)$, and so (since $\DT$ is convex) the containment holds as well when we replace $\underline{\cred}$ with any $\cred\in(\underline{\cred},1)$.
  
  Consider now, any $\cred\in(\underline{\cred},1)$, and fix some $\cred$-PBE $\langle\cstrat,\sstrat,\rstrat,\beliefm\rangle$ generating S payoff $s\in\real$. Let $\cstratr_p\in\cstratrs$ be as defined in Proposition~\ref{perprior}'s proof, so that $p_{\cstratr_p}=p$.  
Modifying $\cstratr_p$ if necessary, we may assume without loss that any two distinct messages from $M_p\coloneqq \{m\in M\colon\, \int_{\Theta}\cstratr_p(m|\cdot)\ddd\mu_0>0\}$ would generate distinct beliefs. 
Hence, every belief $\mu\in\supp(p)$ admits
  a unique $m_\mu\in M_p$ such that every $\theta\in\Theta$ has $\tfrac{\cstratr_p(m|\theta)\mu_0(\theta)}{\int_\Theta\cstratr_p(m|\cdot)\ddd\mu_0}=\mu(\theta)$.
  If S chooses official reporting protocol $\cstratr_p$ and sends message $m_\mu$ for some $\mu\in\supp(p)$, the Bayesian property implies $\beliefm(m_\mu,\cstratr_p) \in \tfrac{1}{{\cred}p(\mu) + (1-{\cred})}\left[\underline{\cred}p(\mu)\mu + (1-\underline{\cred}) \DT\right]
  \subseteq N(\mu)$, so that R's best response property implies S has continuation value exceeding $w(\mu)-\varepsilon$. But because S chooses $\cstrat\in\cstrats$ optimally, and has the option to choose $\cstratr_p$, it must be that \begin{eqnarray*}
s &\geq& \int_\Theta \left(\int_M \left[ \int_A u_S(a) \ddd \rstrat(a|m,\cstratr_p) \right] \ddd \left[\cred\cstratr_p(m|\theta)+\left(1-\cred\right)\sstrat(m|\theta,\cstratr_p)\right] \right) \ddd \mu_0(\theta) \\
&\geq& \cred \int_\Theta \int_M \left[ \int_A u_S(a) \ddd \rstrat(a|m,\cstratr_p) \right] \ddd \cstratr_p(m|\theta) \ddd \mu_0(\theta) + (1-\cred)\min w(\DT) \\
&\geq& \cred \int_{\DT} (w-\varepsilon) \ddd p + (1-\cred)\min w(\DT) \\
&\geq& \cred (\underline{s}_1 - \varepsilon) + (1-\cred)\min w(\DT)
\end{eqnarray*}
Because $s$ was the payoff from an arbitrary $\cred$-PBE, we learn that every $\cred\in(\underline{\cred},1)$ has $\underline{s}_\cred\geq \cred (\underline{s}_1 - \varepsilon) + (1-\cred)\min w(\DT)$, which converges to $\underline{s}_1 - \varepsilon$ as $\cred$ converges to $1$. Hence, $\liminf_{\cred\nearrow 1} \underline{s}_\cred \geq \underline{s}_1-\varepsilon$.  But $\varepsilon$ was itself arbitrary, so that $\liminf_{\cred\nearrow 1} \underline{s}_\cred \geq \underline{s}_1$, as desired.
\end{proof}

\pagebreak
\bibliographystyle{jpe}
\bibliography{ref-persuasion}

\begin{thebibliography}{45}
\newcommand{\enquote}[1]{``#1''}
\providecommand{\natexlab}[1]{#1}
\providecommand{\url}[1]{\texttt{#1}}
\providecommand{\urlprefix}{URL }

\bibitem[{Aliprantis and Border(2006)}]{aliprantis2006infinite}
Aliprantis, Charalambos~D and Kim Border. 2006.
\newblock \emph{Infinite Dimensional Analysis: A Hitchhiker's Guide}.
\newblock Springer Science \& Business Media.

\bibitem[{Alonso and C{\^a}mara(2018)}]{Alonso:2018he}
Alonso, Ricardo and Odilon C{\^a}mara. 2018.
\newblock \enquote{{On the value of persuasion by experts}.}
\newblock \emph{Journal of Economic Theory} 174:103--123.

\bibitem[{Au and Kawai(2020)}]{au2020competitive}
Au, Pak~Hung and Keiichi Kawai. 2020.
\newblock \enquote{Competitive information disclosure by multiple senders.}
\newblock \emph{Games and Economic Behavior} 119:56--78.

\bibitem[{Aumann and Maschler(1995)}]{Aumann1995}
Aumann, Robert~J and Michael Maschler. 1995.
\newblock \emph{Repeated games with incomplete information}.
\newblock MIT press.

\bibitem[{Balkenborg, Hofbauer, and Kuzmics(2013)}]{balkenborg2013refined}
Balkenborg, Dieter, Josef Hofbauer, and Christoph Kuzmics. 2013.
\newblock \enquote{Refined best reply correspondence and dynamics.}
\newblock \emph{Theoretical Economics} 8~(1):165--192.

\bibitem[{Balkenborg, Hofbauer, and Kuzmics(2015)}]{balkenborg2015refined}
---{}---{}---. 2015.
\newblock \enquote{The refined best-response correspondence in normal form
  games.}
\newblock \emph{International Journal of Game Theory} 44:165--193.

\bibitem[{Bergemann and Morris(2019)}]{bergemann2019information}
Bergemann, Dirk and Stephen Morris. 2019.
\newblock \enquote{Information design: A unified perspective.}
\newblock \emph{Journal of Economic Literature} 57~(1):44--95.

\bibitem[{Chakraborty and Harbaugh(2010)}]{Chakraborty2010}
Chakraborty, Archishman and Rick Harbaugh. 2010.
\newblock \enquote{{Persuasion by cheap talk}.}
\newblock \emph{American Economic Review} 100~(5):2361--2382.

\bibitem[{Crawford and Sobel(1982)}]{Crawford1982}
Crawford, Vincent~P and Joel Sobel. 1982.
\newblock \enquote{{Strategic information transmission}.}
\newblock \emph{Econometrica} 50~(6):1431--1451.

\bibitem[{de~Clippel(2008)}]{de2008axiomatization}
de~Clippel, Geoffroy. 2008.
\newblock \enquote{An axiomatization of the inner core using appropriate
  reduced games.}
\newblock \emph{Journal of Mathematical Economics} 44~(3-4):316--323.

\bibitem[{Doval and Skreta(2022)}]{doval2020mechanism}
Doval, Laura and Vasiliki Skreta. 2022.
\newblock \enquote{Mechanism design with limited commitment.}
\newblock \emph{Econometrica} 90~(4):1463--1500.

\bibitem[{Dworczak and Martini(2019)}]{dworczak2018simple}
Dworczak, Piotr and Giorgio Martini. 2019.
\newblock \enquote{The simple economics of optimal persuasion.}
\newblock \emph{Journal of Political Economy} 127~(5):1993--2048.

\bibitem[{Dworczak and Pavan(2022)}]{dworczak2022preparing}
Dworczak, Piotr and Alessandro Pavan. 2022.
\newblock \enquote{Preparing for the worst but hoping for the best: Robust
  (Bayesian) persuasion.}
\newblock \emph{Econometrica} 90~(5):2017--2051.

\bibitem[{Fr{\'e}chette, Lizzeri, and Perego(2022)}]{Frechette:2017ug}
Fr{\'e}chette, Guillaume~R, Alessandro Lizzeri, and Jacopo Perego. 2022.
\newblock \enquote{Rules and commitment in communication: An experimental
  analysis.}
\newblock \emph{Econometrica} 90~(5):2283--2318.

\bibitem[{Gentzkow and Kamenica(2016)}]{gentzkow2016competition}
Gentzkow, Matthew and Emir Kamenica. 2016.
\newblock \enquote{Competition in persuasion.}
\newblock \emph{The Review of Economic Studies} 84~(1):300--322.

\bibitem[{Gentzkow and Kamenica(2017)}]{gentzkow2017bayesian}
---{}---{}---. 2017.
\newblock \enquote{Bayesian persuasion with multiple senders and rich signal
  spaces.}
\newblock \emph{Games and Economic Behavior} 104:411--429.

\bibitem[{Guo and Shmaya(2019)}]{guo2019interval}
Guo, Yingni and Eran Shmaya. 2019.
\newblock \enquote{The interval structure of optimal disclosure.}
\newblock \emph{Econometrica} 87~(2):653--675.

\bibitem[{Halac, Lipnowski, and Rappoport(2022)}]{halac2022addressing}
Halac, Marina, Elliot Lipnowski, and Daniel Rappoport. 2022.
\newblock \enquote{Addressing strategic uncertainty with incentives and
  information.}
\newblock In \emph{AEA Papers and Proceedings}, vol. 112. American Economic
  Association 2014 Broadway, Suite 305, Nashville, TN 37203, 431--437.

\bibitem[{Hedlund(2017)}]{Hedlund:2017ie}
Hedlund, Jonas. 2017.
\newblock \enquote{{Bayesian persuasion by a privately informed sender}.}
\newblock \emph{Journal of Economic Theory} 167:229--268.

\bibitem[{Ichihashi(2019)}]{ichihashi2018limiting}
Ichihashi, Shota. 2019.
\newblock \enquote{Limiting Sender's information in Bayesian persuasion.}
\newblock \emph{Games and Economic Behavior} 117:276--288.

\bibitem[{Inostroza and Pavan(2023)}]{inostroza2023adversarial}
Inostroza, Nicolas and Alessandro Pavan. 2023.
\newblock \enquote{Adversarial coordination and public information design.}
\newblock \emph{Available at SSRN 4531654} .

\bibitem[{Kamenica(2019)}]{Kamenica2019}
Kamenica, Emir. 2019.
\newblock \enquote{Bayesian {{Persuasion}} and {{Information Design}}.}
\newblock \emph{Annual Review of Economics} 11~(1).

\bibitem[{Kamenica and Gentzkow(2011)}]{Kamenica2011}
Kamenica, Emir and Matthew Gentzkow. 2011.
\newblock \enquote{{Bayesian persuasion}.}
\newblock \emph{American Economic Review} 101~(6):2590--2615.

\bibitem[{Koessler and Skreta(2022)}]{koessler2021information}
Koessler, Fr{\'e}d{\'e}ric and Vasiliki Skreta. 2022.
\newblock \enquote{Informed Information Design.} .

\bibitem[{Kolotilin(2018)}]{kolotilin2018optimal}
Kolotilin, Anton. 2018.
\newblock \enquote{Optimal information disclosure: A linear programming
  approach.}
\newblock \emph{Theoretical Economics} 13~(2):607--635.

\bibitem[{Kolotilin et~al.(2017)Kolotilin, Mylovanov, Zapechelnyuk, and
  Li}]{kolotilin2017persuasion}
Kolotilin, Anton, Tymofiy Mylovanov, Andriy Zapechelnyuk, and Ming Li. 2017.
\newblock \enquote{Persuasion of a privately informed receiver.}
\newblock \emph{Econometrica} 85~(6):1949--1964.

\bibitem[{Kosterina(2022)}]{kosterina2022persuasion}
Kosterina, Svetlana. 2022.
\newblock \enquote{Persuasion with unknown beliefs.}
\newblock \emph{Theoretical Economics} 17~(3):1075--1107.

\bibitem[{Li and Norman(2021)}]{li2018sequential}
Li, Fei and Peter Norman. 2021.
\newblock \enquote{Sequential persuasion.}
\newblock \emph{Theoretical Economics} 16~(2):639--675.

\bibitem[{Li, Song, and Zhao(2023)}]{li2023global}
Li, Fei, Yangbo Song, and Mofei Zhao. 2023.
\newblock \enquote{Global manipulation by local obfuscation.}
\newblock \emph{Journal of Economic Theory} 207:105575.

\bibitem[{Lipnowski and Mathevet(2018)}]{lipnowski2018disclosure}
Lipnowski, Elliot and Laurent Mathevet. 2018.
\newblock \enquote{Disclosure to a psychological audience.}
\newblock \emph{American Economic Journal: Microeconomics} 10~(4):67--93.

\bibitem[{Lipnowski, Ravid, and Shishkin(2022)}]{LRS1}
Lipnowski, Elliot, Doron Ravid, and Denis Shishkin. 2022.
\newblock \enquote{Persuasion via weak institutions.}
\newblock \emph{Journal of Political Economy} 130~(10):2705--2730.

\bibitem[{Mathevet, Perego, and Taneva(2020)}]{mathevet2020information}
Mathevet, Laurent, Jacopo Perego, and Ina Taneva. 2020.
\newblock \enquote{On information design in games.}
\newblock \emph{Journal of Political Economy} 128~(4):1370--1404.

\bibitem[{Milne(1926)}]{milne1926winnie}
Milne, Alan~Alexander. 1926.
\newblock \emph{Winnie-the-pooh}.
\newblock EP Dutton.

\bibitem[{Min(2021)}]{Min:2016wv}
Min, Daehong. 2021.
\newblock \enquote{Bayesian persuasion under partial commitment.}
\newblock \emph{Economic Theory} 72~(3):743--764.

\bibitem[{Moriya and Yamashita(2020)}]{moriya2020asymmetric}
Moriya, Fumitoshi and Takuro Yamashita. 2020.
\newblock \enquote{Asymmetric-information allocation to avoid coordination
  failure.}
\newblock \emph{Journal of Economics \& Management Strategy} 29~(1):173--186.

\bibitem[{Morris, Oyama, and Takahashi(2022)}]{morris2022implementation}
Morris, Stephen, Daisuke Oyama, and Satoru Takahashi. 2022.
\newblock \enquote{Implementation via information design in binary-action
  supermodular games.}
\newblock \emph{Available at SSRN 3697335} .

\bibitem[{O'Brien(1976)}]{o1976openness}
O'Brien, Richard~C. 1976.
\newblock \enquote{On the openness of the barycentre map.}
\newblock \emph{Mathematische Annalen} 223:207--212.

\bibitem[{Oyama and Takahashi(2020)}]{oyama2020generalized}
Oyama, Daisuke and Satoru Takahashi. 2020.
\newblock \enquote{Generalized Belief Operator and Robustness in Binary-Action
  Supermodular Games.}
\newblock \emph{Econometrica} 88~(2):693--726.

\bibitem[{Perez-Richet(2014)}]{perez2014interim}
Perez-Richet, Eduardo. 2014.
\newblock \enquote{Interim bayesian persuasion: First steps.}
\newblock \emph{American Economic Review} 104~(5):469--474.

\bibitem[{Perez-Richet and Skreta(2022)}]{PerezRichet:2017va}
Perez-Richet, Eduardo and Vasiliki Skreta. 2022.
\newblock \enquote{Test design under falsification.}
\newblock \emph{Econometrica} 90~(3):1109--1142.

\bibitem[{Ravindran and Cui(2022)}]{ravindran2020competing}
Ravindran, Dilip and Zhihan Cui. 2022.
\newblock \enquote{Competing persuaders in zero-sum games.}
\newblock \emph{arXiv preprint arXiv:2008.08517} .

\bibitem[{Rockafellar(1970)}]{rockafellar1970convex}
Rockafellar, R~Tyrrell. 1970.
\newblock \emph{Convex Analysis}.
\newblock 28. Princeton University Press.

\bibitem[{Skreta(2006)}]{Skreta2006}
Skreta, Vasiliki. 2006.
\newblock \enquote{Sequentially Optimal Mechanisms.}
\newblock \emph{Review of Economic Studies} 73~(4):1085--1111.

\bibitem[{Wu(2022)}]{wu2020essays}
Wu, Wenhao. 2022.
\newblock \enquote{Sequential Bayesian Persuasion.} .

\bibitem[{Ziegler(2020)}]{ziegler2020adversarial}
Ziegler, Gabriel. 2020.
\newblock \enquote{Adversarial bilateral information design.}
\newblock \emph{Working paper} .

\end{thebibliography}

\end{document}